\newtheorem{theorem}{Theorem}[section]
\newtheorem{LMA}[theorem]{Lemma}
\newcommand{\com}[1]{\ifnum\count13<1 #1 \fi}
\def\squarebox#1{\hbox to #1{\hfill\vbox to #1{\vfill}}}
\def\qed{\hspace*{\fill}%
        \vbox{\hrule\hbox{\vrule\squarebox{.667em}\vrule}\hrule}\smallskip}
\newenvironment{proof}{\begin{trivlist}
\item[\hspace{\labelsep}{\em\noindent Proof.~}]}{\qed\end{trivlist}}
\def\squarebox#1{\hbox to #1{\hfill\vbox to #1{\vfill}}}
\def\qed{\hspace*{\fill}%
        \vbox{\hrule\hbox{\vrule\squarebox{.667em}\vrule}\hrule}\smallskip}
\begin{document}

\title{Tight Analysis of Priority Queuing\\ for Egress Traffic}

\author{
	Jun Kawahara$^{1}$, 
	Koji M. Kobayashi$^{2}$, and 
	Tomotaka Maeda$^{3}$
	\\   
	{\footnotesize 
		$^{1}$Graduate School of Information Science, Nara Institute of Science and Technology
	}
	\\
	{\footnotesize 
		$^{2}$National Institute of Informatics, 
	}
	\\   
	{\footnotesize 
		$^{3}$Academic Center for Computing and Media Studies, Kyoto University
	}
}

\date{}

\setlength{\baselineskip}{4.85mm}

\maketitle

\pagestyle{plain}
\thispagestyle{plain}

\begin{abstract}
Recently, 
the problems of evaluating performances of switches and routers have been formulated as online problems, 
and a great amount of results have been presented. 
In this paper, 
we focus on managing outgoing packets (called {\em egress traffic}) on switches that support Quality of Service (QoS), 
and analyze the performance of one of the most fundamental scheduling policies {\em Priority Queuing} ($PQ$) using competitive analysis. 
We formulate the problem of managing egress queues as follows:
An output interface is equipped with $m$ queues, each of which has a buffer of size $B$. 
The size of a packet is unit, and each buffer can store up to $B$ packets simultaneously. 
Each packet is associated with one of $m$ priority values $\alpha_{j}$ ($1 \leq j \leq m$), 
where $\alpha_{1} \leq \alpha_{2} \leq \cdots \leq \alpha_{m}$, $\alpha_{1} = 1$, and $\alpha_{m} = \alpha$ 
and the task of an online algorithm is to select one of $m$ queues at each scheduling step. 
The purpose of this problem is to maximize the sum of the values of the scheduled packets. 
For any $B$ and any $m$, 
we show that the competitive ratio of $PQ$ is exactly 
$2 - \min_{x \in [1, m-1] } \{ \frac{ \alpha_{x+1} }{ \sum_{j = 1}^{x+1} \alpha_{j} } \}$. 
That is, 
we conduct a complete analysis of the performance of $PQ$ using worst case analysis. 
Moreover, 
we show that no deterministic online algorithm can have a competitive ratio smaller than 
$1 + \frac{ \alpha^3 + \alpha^2 +  \alpha }{ \alpha^4 + 4 \alpha^3 + 3 \alpha^2 + 4 \alpha + 1 }$.
\end{abstract}
%
%



\section{Introduction}
In recent years, 
the Internet has provided a rich variety of applications, such as
teleconferencing, video streaming, IP telephone, mainly thanks to the
rapid growth of the broadband technology.  To enjoy such services, the
demand for the Quality of Service (QoS) guarantee is crucial.  For
example, usually there is little requirement for downloading programs
or picture images, whereas real-time services, such as distance meeting,
require constant-rate packet transmission.  One possible way of
supporting QoS is differentiated services (Diffserv)~\cite{SB98}. 
In DiffServ, 
a value is assigned to each packet according
to the importance of the packet.  
Then, switches that support QoS (QoS switches) decide the order of packets to be processed, 
based on the value of packets. 
In such a mechanism, 
one of the main issues in designing algorithms is how to treat packets depending on the priority in buffering or scheduling.
This kind of problems was recently modeled as an {\em online problem},
and the {\em competitive analysis}~\cite{AB98,DS85} of algorithms has been done.

Aiello et~al.\ \cite{WA00} was the first to attempt this study, in which
they considered a model with only one First In First Out (FIFO) queue.
This model mainly focuses on the buffer management issue of the input
port of QoS switches: There is one FIFO queue of size $B$, meaning that
it can store up to $B$ packets. 
An input is a sequence of events. 
An event is either an {\em arrival event}, at which a packet with a specified priority value arrives, or a {\em scheduling event}, 
at which the packet at the head of the queue will be transmitted. 
The task of an online (buffer management) algorithm is to decide, when a packet arrives
at an arrival event, whether to accept or to reject it (in order to keep
a room for future packets with higher priority).  The purpose of the
problem is to maximize the sum of the values of the transmitted packets.
Aiello et~al.\ analyzed the competitiveness of the Greedy Policy, the
Round Robin Policy, the Fixed Partition Policy, etc.

After the publication of this seminal paper, 
more and more complicated models have been introduced and studied, 
some of which are as follows:
Azar et~al.\ \cite{YA03} considered the {\em multi-queue switch model},
which formulates the buffering problem of one input port of the switch.
In this problem, 
an input port has $N$ input buffers connected to a common output buffer. 
The task of an online algorithm is now not only buffer management but also scheduling. 
At each scheduling event, 
an algorithm selects one of $N$ input buffers, 
and the packet at the head of the selected buffer is transmitted to the inside of the switch through the output buffer. 
\com{（■）}
There are some formulations that model not only one port but the entire switch.  
For example, Kesselman et~al.\ \cite{AKS03} introduced the {\em Combined Input and Output Queue (CIOQ) switch model}. 
In this model, a switch consists of $N$ input
ports and $N$ output ports, where each port has a buffer.  At an {\em
arrival phase}, a packet (with the specified destination output port)
arrives at an input port.  The task of an online algorithm is buffer
management as mentioned before.  At a {\em transmission phase}, all the
packets at the top of the nonempty buffers of output ports are
transmitted. 
Hence, there is no task of an online algorithm. 
At a {\em scheduling phase},  
packets at the top of the buffers of input ports are transmitted to the buffers of the output ports. 
Here, 
an online algorithm computes a matching between input ports and output ports. 
According to this matching, the packets in the input ports will be
transmitted to the corresponding output ports. 
Kesselman et~al.\ \cite{AKB08} considered the {\em crossbar switch model}, 
which models the scheduling phase of the CIOQ switch model more in detail. 
In this model, 
there is also a buffer for each pair of an input port and an output port. 
Thus, there arises another buffer management problem at scheduling phases.
In some real implementation (e.g., \cite{QoSSwitch}), 
additional buffers are equipped with each output port of a QoS switch to control the outgoing packets (called {\em egress traffic}). 
Assume that there are $m$ priority values of packets $\alpha_{1}, \alpha_{2}, \ldots, \alpha_{m}$
such that $\alpha_{1} \leq \alpha_{2} \leq \cdots \leq \alpha_{m}$.
Then, $m$ FIFO queues $Q^{(1)}, Q^{(2)}, \ldots, Q^{(m)}$ are introduced
for each output port, and a packet with the value $\alpha_{i}$ arriving
at this output port is stored in the queue $Q^{(i)}$. 
Usually, 
this buffering policy is greedy, namely, when a packet arrives, 
it is rejected if the corresponding queue is full, and accepted otherwise. 
The task of an algorithm is to decide which queue to transmit a packet at
each scheduling event.

Several practical algorithms, such as Priority Queuing ($PQ$), 
Weighted Round-Robin ($WRR$)~\cite{MK91}, and Weighted Fair Queuing ($WFQ$)~\cite{AD90}, 
are currently implemented in network switches.
$PQ$ is the most fundamental algorithm, which selects the highest priority non-empty queue. 
This policy is implemented in many switches by default. 
(e.g., Cisco's Catalyst 2955 series~\cite{cat2955})
In the $WRR$ algorithm, 
queues are selected according to the round robin policy based on the weight of packets corresponding to queues, 
i.e., the rate of selecting $Q^{(i)}$ in one round is
proportional to $\alpha_{i}$ for each $i$. 
This algorithm is implemented in Cisco's Catalyst 2955 series~\cite{cat2955} and so on. 
In the $WFQ$ algorithm, length of packets, as well as the priority values, are taken
into consideration so that shorter packets are more likely to be
scheduled. 
This algorithm is implemented in Cisco's Catalyst 6500 series~\cite{cat6500} and so on. 

In spite of intensive studies on online buffer management and scheduling algorithms, 
to the best of our knowledge, 
there have been no research on the egress traffic control, which we focus on in this paper. 
Our purpose is to evaluate the performances of actual scheduling algorithms for egress queues. 
\noindent
{\bf Our Results.}~~~
We formulate this problem as an online problem, 
and provide a tight analysis of the performance of $PQ$ using competitive analysis. 
Specifically, 
for any $B$, 
we show that the competitive ratio of $PQ$ is exactly 
$2 - \min_{x \in [1, m-1] } \{ \frac{ \alpha_{x+1} }{ \sum_{j = 1}^{x+1} \alpha_{j} } \}$. 
$PQ$ is trivial to implement, 
and has a lower computational load than the other policies, such as $WRR$ and $WFQ$. 
Hence, it is meaningful to analyze the exact performance of $PQ$. 
Moreover, 
we present a lower bound of $1 + \frac{ \alpha^3 + \alpha^2 +  \alpha }{ \alpha^4 + 4 \alpha^3 + 3 \alpha^2 + 4 \alpha + 1 }$ 
on the competitive ratio of any deterministic algorithm. 
\noindent
\noindent {\bf Related Work.}~~~
Independently of our work, 
Al-Bawani and Souza \cite{KA11} have very recently considered much the same model. 
$PQ$ is called the greedy algorithm in their paper.
They consider the case where $0 < \alpha_{1} < \alpha_{2} < \cdots < \alpha_{m}$. 
Also, they assume that for any $j(\in [1, m])$, the $j$th queue can store at most $B_{j} (\in [1, B])$ packets at a time. 
In the case of $B_{j} = B$, that is, in the same setting as ours, 
they showed that the competitive ratio of $PQ$ is at most $2 - \min_{j \in [1, m-1]} \{ \frac{ \alpha_{j+1} - \alpha_{j} }{ \alpha_{j+1} } \}$ for any $m$ and $B$. 
%
When comparing our result and their upper bound, 
we have 
$2 - \min_{x \in [1, m-1] } \{ \frac{ \alpha_{x+1} }{ \sum_{j = 1}^{x+1} \alpha_{j} } \} < 2 - \min_{j \in [1, m-1]} \{ \frac{ \alpha_{j+1} - \alpha_{j} }{ \alpha_{j+1} } \}$ by elementary calculation (see Sec.~\ref{ap.sec:0} in Appendix). 
Note that $2 - \min_{j \in [1, m-1]} \{ \frac{ \alpha_{j+1} - \alpha_{j} }{ \alpha_{j+1} } \}$ is equal to $2$ when there exists some $z$ such that $\alpha_{z+1} = \alpha_{z}$. 
In general practical switches, 
the sizes of any two egress queues attached to the same output port are equivalent by default. 
Since we focus on evaluating the performance of algorithms in a more practical setting (which might be less generalized), 
we assume that the size of each queue is $B$. 
Moreover, 
our analysis in this paper does not depend on the maximum numbers of packets stored in buffers, 
and instead it depends on whether buffers are full of packets. 
Thus, the exact competitive ratio of $PQ$ would be derived for the setting where for any $j$, 
the size of the $j$th queue is $B_{j}$ in the same way as this paper. 
(If we apply our method in their setting, Lemma~\ref{LMA:3.2.6} in Sec.~\ref{sec:3.2.2} has to be fixed slightly.
However the competitive ratio obtained in this setting seems to be a more complicated value including some $\min$s or $\max$es.)
%
%
%
%

%
As mentioned earlier, 
there are a lot of studies concentrating on evaluating performances of functions of switches and routers, 
such as queue management and packet scheduling.
The most basic one is the model consisting of single FIFO queue by Aiello et~al.\ \cite{WA00} mentioned above. 
In their model, each packet can take one of two values 1 or $\alpha (> 1)$. 
Andelman et~al.\ \cite{NACQ03} generalized the values of packets to any value between $1$
and $\alpha$. 
Another generalization is to allow {\em preemption}, namely, 
one may drop a packet that is already stored in a queue. 
Results of the competitiveness on this model are given in \cite{WA00,AKB01,MS01,AKI02,NACQ03,NACM03,NA05,ME06}. 
The multi-queue switch model \cite{YA03,YAZ04,KK09} consists of $m$ FIFO queues. 
In this model, the task of an algorithm is to manage its buffers and to schedule packets. 
The problem of designing only a scheduling algorithm in multi-queue switches is considered in \cite{SA04,YAM04,MB08,KK08,MB10}. 
Moreover, 
Albers and Jacobs \cite{SA07} performed an experimental study for the first time on several online scheduling algorithms for this model. 
Also, the overall performance of several switches, such as shared-memory
switches \cite{EH01,AKH01,KK07}, 
CIOQ switches \cite{AKS03,YA06,AKI08,AKC08}, 
and crossbar switches \cite{AKP08,AKB08},
are extensively studied. 

Fleischer and Koga \cite{HK04} and Bar-Noy et~al.\ \cite{AB03} studied the online problem of minimizing the length of the longest queue in a switch, 
in which the size of each queue is unbounded. 
In \cite{HK04} and \cite{AB03}, 
they showed that the competitive ratio of any online algorithm is $\Omega(\log m)$, 
where $m$ is the number of queues in a switch. 
Fleischer and Koga \cite{HK04} presented a lower bound of $\Omega(m)$ for the round robin policy. 
In addition, 
in \cite{HK04} and \cite{AB03}, 
the competitive ratio of a greedy algorithm called Longest Queue First is $O(\log m)$. 
Recently, Kogan et~al.\ \cite{KK13} studied a multi-queue switch where packets with different required processing times arrive. 
(In the other settings mentioned above, the required processing times of all packets are equivalent.) 
\section{Model Description}\label{sec:2}
In this section, we formally define the problem studied in this paper.
Our model consists of $m$ queues, each with a buffer of size $B$. 
The size of a packet is unit, 
which means that  each buffer can store up to $B$ packets simultaneously. 
Each packet is associated with one of $m$ values $\alpha_{i}$ ($1\leq i\leq m$), 
which represents the priority of this packet where a packet with larger value is of higher priority.
Without loss of generality, 
we assume that $\alpha_{1} = 1$, $\alpha_{m} = \alpha$, 
and $\alpha_{1} \leq \alpha_{2} \leq \cdots \leq \alpha_{m}$. 
The $i$th queue is denoted $Q^{(i)}$ and is also associated with its priority value $\alpha_{i}$. 
An arriving packet with the value $\alpha_{i}$ is stored in $Q^{(i)}$.

An input for this model is a sequence of {\em events}. 
Each event is an {\em arrival event} or a {\em scheduling event}. 
At an arrival event, 
a packet arrives at one of $m$ queues, and the packet is {\em accepted} to the buffer when the corresponding queue has free space. 
Otherwise, it is {\em rejected}. 
If a packet is accepted, 
it is stored at the tail of the corresponding queue. 
At a scheduling event, 
an online algorithm selects one non-empty queue and transmits the packet at the head of the selected queue.
We assume that any input contains enough scheduling events to transmit all the arriving packets in it. 
That is, any algorithm can certainly transmit a packet stored in its queue. 
Note that this assumption is common in the buffer management problem. 
(See e.g.~\cite{MG10}.)
The {\em gain} of an algorithm is the sum of the values of transmitted packets. 
Our goal is to maximize it. 
The gain of an algorithm $ALG$ for an input $\sigma$ is denoted by $V_{ALG}(\sigma)$. 
If $V_{ALG}(\sigma) \geq V_{OPT}(\sigma)/c$ for an arbitrary input $\sigma$, we say that $ALG$ is
$c$-{\em competitive}, where $OPT$ is an optimal offline algorithm for $\sigma$.
%
%
%

%
\section{Analysis of Priority Queuing}\label{sec:3}
\subsection{Priority Queuing}\label{sec:3.1}
\ifnum \count10 > 0
$PQ$は、グリーディなアルゴリズムである。
すなわち、
パケットを保持する最もpriorityの高いキューを選択し、その先頭のパケットをscheduleする。
tie breakの方法は、最も大きい番号のキューを選択するとする。
なお、解析を簡単にするため、
$OPT$はpacketをrejectしないと仮定する。
この仮定は競合比の解析に影響を与えない。
（Appendix~\ref{ap.sec:1}のLemma~\ref{LMA:a.1}）
\fi
\ifnum \count11 > 0
\com{（■英語）}
$PQ$ is a greedy algorithm. 
At a scheduling event,
$PQ$ selects the non-empty queue with the largest index. 
For analysis, we assume that $OPT$ does not reject an
arriving packet. 
This assumption does not affect the analysis of the competitive ratio.
(See Lemma~\ref{LMA:a.1} in Appendix~\ref{ap.sec:1}.)
\fi
\subsection{Overview of the Analysis}\label{sec:3.2.1}
\ifnum \count10 > 0
\com{（■日本語）\\}
$OPT$のみがacceptするパケットを{\em extra packet}と呼ぶ。
$PQ$の競合比を評価するためにextra packetの価値の総和を評価する。
そこで、解析のために幾つか定義を与える。
任意の入力$\sigma$に対して、
${k}_{j}(\sigma)$を, $Q^{(j)}$に到着するextra packetの数とする。
また、
extra packetが1つ以上arriveするqueueを{\em good queue}と呼び、
任意の入力$\sigma$に対して、
$n(\sigma)$は、good queueの数を表す。
更に、任意の入力$\sigma$、任意の$i (\in [1, n(\sigma)])$に対して、
$q_{i}(\sigma)$は、$i$番目に小さいgood queueの番号を表す。
すなわち、$1 \leq q_{1}(\sigma) < q_{2}(\sigma) < \cdots < q_{n(\sigma)}(\sigma) \leq m$が成立するとする。
また、
$q_{n(\sigma)+1}(\sigma) = m$と定義する。
更に、
任意の入力$\sigma$に対して、
$s_{j}(\sigma)$は$PQ$が$Q^{(j)}$からtransmitするpacketの数を表す。
ただし、
これらの変数では、
文脈から明らかな場合は$\sigma$を省略する。
このとき、
${V}_{PQ}(\sigma) = \sum_{j = 1}^{m} \alpha_{j} s_{j}$
かつ
${V}_{OPT}(\sigma) = {V}_{PQ}(\sigma) + \sum_{i = 1}^{n} \alpha_{q_{i}} k_{q_{i}}$
が成立する。
補題~\ref{LMA:3.2.1}において、
$k_{m} = 0$、
すなわち、
$q_{n} + 1 \leq m$、
が成立することを示す。
（■）
ある入力集合${\cal S}^{*}$（定義は以下で与える）を、
補題~\ref{LMA:3.2.3}から補題~\ref{LMA:3.2.8}にかけて、
幾つかの$PQ$に対する敵対者の戦略を用いて逐次的に構成する。
さらに、
補題~\ref{LMA:3.2.9}において、
$\frac{V_{OPT}(\sigma)}{V_{PQ}(\sigma)}$を最大にする様な入力$\sigma$が
ある性質を満たす入力集合${\cal S}^{*}$に含まれていることを示す。
すなわち、
$PQ$の競合比を得ることができる入力$\sigma^{*}$が${\cal S}^{*}$に含まれていることを示す。
正確に言うと、
（■）
入力集合${\cal S}^{*}$を次の5つの条件をみたす入力$\sigma'$の集合とする。
(i) 
任意の$i (\in [1, n(\sigma')-1])$に対して、$q_{i}(\sigma') + 1 = q_{i+1}(\sigma')$が成立する、
(ii) 
任意の$i (\in [1, n(\sigma')])$に対して、$k_{q_{i}(\sigma')}(\sigma') = B$が成立する、
(iii) 
任意の$j (\in [q_{1}(\sigma'), q_{n(\sigma')}(\sigma')+1])$に対して、$s_{j}(\sigma') = B$が成立する、
(iv) 
$q_{1}(\sigma') -1 \geq 1$が成立するならば、
任意の$j (\in [1, q_{1}(\sigma')-1])$に対して、$s_{j}(\sigma') = 0$が成立する、
(v) 
$q_{n(\sigma')}(\sigma') + 2 \leq m$が成立するならば、
任意の$j (\in [q_{n(\sigma')}(\sigma')+2, m])$に対して、$s_{j}(\sigma') = 0$が成立する。
このとき、
補題~\ref{LMA:3.2.9}において、
ある入力$\sigma^{*} \in {\cal S}^{*}$が存在して、
$\max_{\sigma''} \{ \frac{V_{OPT}(\sigma'')}{V_{PQ}(\sigma'')} \} = \frac{V_{OPT}(\sigma^{*})}{V_{PQ}(\sigma^{*})}$
が成立することを示す。
上記の事実から
次の様に$PQ$の競合比を得ることができる。
簡単のため、
任意の$i (\in [1, m])$に対して、
$s_{i}(\sigma^{*})$を$s_{i}^{*}$と表記し、
$n(\sigma^{*})$を$n^{*}$と表記し、
任意の$i (\in [1, n^{*}])$に対して、
$q_{i}(\sigma^{*})$を$q_{i}^{*}$と表記する。
任意の$j (\in [1, m])$に対して、
$k_{i}(\sigma^{*})$を$k_{i}^{*}$と表記する。
よって、
$
\frac{V_{OPT}(\sigma^{*})}{V_{PQ}(\sigma^{*})} 
	= \frac{ V_{PQ}(\sigma^{*}) + \sum_{i = 1}^{n^{*}} \alpha_{q_{i}^{*}} k_{q_{i}^{*} }^{*} }{ V_{PQ}(\sigma^{*}) }
	= 1 + \frac{ B \sum_{ j = q_{1}^{*} }^{ q_{n^{*}}^{*} } \alpha_{j} }{ B \sum_{j = q_{1}^{*} }^{ q_{n^{*}}^{*}+1 } \alpha_{j} }
	\leq 1 + \frac{ \sum_{j = 1}^{q_{n^{*}}^{*}} \alpha_{j} }{ \sum_{j = 1}^{q_{n^{*}}^{*}+1} \alpha_{j} }
	= 2 - \frac{ \alpha_{q_{n^{*}+1}} }{ \sum_{j = 1}^{q_{n^{*}}+1} \alpha_{j} }
$
が成立する。
最後の不等式は、任意の$x \geq 2$と$y \geq x$に対して、
$
\frac{ \sum_{j = x-1}^{y} \alpha_{j} }{ \sum_{j = x-1}^{y+1} \alpha_{j} } - \frac{ \sum_{j = x}^{y} \alpha_{j} }{ \sum_{j = x}^{y+1} \alpha_{j} }
	= ( \sum_{j = x-1}^{y} \alpha_{j} \sum_{j = x}^{y+1} \alpha_{j} - \sum_{j = x}^{y} \alpha_{j} \sum_{j = x-1}^{y+1} \alpha_{j} ) / (\sum_{j = x-1}^{y+1} \alpha_{j} \sum_{j = x}^{y+1} \alpha_{j} )
	= ( \alpha_{x-1} \alpha_{y+1} ) / ( \sum_{j = x-1}^{y+1} \alpha_{j} \sum_{j = x}^{y+1} \alpha_{j} )
	> 0
$
が成立することから導ける。
これは$PQ$の競合比の上限を示している。
一方、
補題~\ref{LMA:lower}において、
入力$\hat{\sigma}$が存在して、
$
\frac{V_{OPT}(\hat{\sigma})}{V_{PQ}(\hat{\sigma})} 
	= 2 - \min_{x \in [1, m-1] } \{ \frac{ \alpha_{x+1} }{ \sum_{j = 1}^{x+1} \alpha_{j} } \}
$
が成立することを示す。
これは$PQ$の競合比の下限を示している。
それゆえ、以下の定理が成立する: 
\fi
\ifnum \count11 > 0
\com{（■英語）}
We define an {\em extra packet} as a packet which is accepted by $OPT$ but rejected by $PQ$. 
In the following analysis, we evaluate the sum of the values of extra packets to obtain the competitive ratio of $PQ$.
We introduce some notation for our analysis. 
For any input $\sigma$, 
${k}_{j}(\sigma)$ denotes the number of extra packets arriving at $Q^{(j)}$ when treating $\sigma$. 
We call a queue at which at least one extra packet arrives a {\em good queue} when treating $\sigma$. 
$n(\sigma)$ denotes the number of good queues for $\sigma$. 
Moreover, for any input $\sigma$ and any $i (\in [1, n(\sigma)])$, 
$q_{i}(\sigma)$ denotes the good queue with the $i$th minimum index. 
That is, 
$1 \leq q_{1}(\sigma) < q_{2}(\sigma) < \cdots < q_{n(\sigma)}(\sigma) \leq m$. 
Also, we define $q_{n(\sigma)+1}(\sigma) = m$. 
In addition, 
for any input $\sigma$, 
$s_{j}(\sigma)$ denotes the number of packets which $PQ$ transmits from $Q^{(j)}$. 
We drop the input $\sigma$ from the notation when it is clear. 
Then, 
${V}_{PQ}(\sigma) = \sum_{j = 1}^{m} \alpha_{j} s_{j}$, and 
${V}_{OPT}(\sigma) = {V}_{PQ}(\sigma) + \sum_{i = 1}^{n} \alpha_{q_{i}} k_{q_{i}}$. 
(The equality follows from Lemma~\ref{LMA:a.1}.)
First, 
we show that $k_{m} = 0$, 
that is, $q_{n} + 1 \leq m$, in Lemma~\ref{LMA:3.2.1}. 
We will gradually construct some input set ${\cal S}^{*}$ (defined below) from Lemma~\ref{LMA:3.2.3} to Lemma~\ref{LMA:3.2.8} using some adversarial strategies against $PQ$. 
Moreover, 
in Lemma~\ref{LMA:3.2.9}, 
we prove that the set ${\cal S}^{*}$ includes an input $\sigma$ such that the ratio $\frac{V_{OPT}(\sigma)}{V_{PQ}(\sigma)}$ is maximized. 
That is, 
we show that there exists an input $\sigma^{*}$ in the set ${\cal S}^{*}$ to get the competitive ratio of $PQ$ in the lemma. 
More formally, 
we define the set ${\cal S}^{*}$ of the inputs $\sigma'$ satisfying the following five conditions: 
(i) 
for any $i (\in [1, n(\sigma')-1])$, $q_{i}(\sigma') + 1 = q_{i+1}(\sigma')$, 
(ii) 
for any $i (\in [1, n(\sigma')])$, $k_{q_{i}(\sigma')}(\sigma') = B$, 
(iii) 
for any $j (\in [q_{1}(\sigma'), q_{n(\sigma')}(\sigma')+1])$, $s_{j}(\sigma') = B$, 
(iv) 
for any $j (\in [1, q_{1}(\sigma')-1])$, $s_{j}(\sigma') = 0$ if $q_{1}(\sigma') -1 \geq 1$, and 
(v) 
for any $j (\in [q_{n(\sigma')}(\sigma')+2, m])$, $s_{j}(\sigma') = 0$ 
if $q_{n(\sigma')}(\sigma') + 2 \leq m$. 
Then, 
we show that there exists an input $\sigma^{*} \in {\cal S}^{*}$ such that 
$\max_{\sigma''} \{ \frac{V_{OPT}(\sigma'')}{V_{PQ}(\sigma'')} \} = \frac{V_{OPT}(\sigma^{*})}{V_{PQ}(\sigma^{*})}$ 
in Lemma~\ref{LMA:3.2.9}. 
By the above lemmas, 
we can obtain the competitive ratio of $PQ$ as follows: 
For ease of presentation, 
we write $s_{i}(\sigma^{*})$, $n(\sigma^{*})$, $q_{i}(\sigma^{*})$ and $k_{i}(\sigma^{*})$ as $s_{i}^{*}$, $n^{*}$, $q_{i}^{*}$ and $k_{i}^{*}$, respectively. 
Thus, 
$
\frac{V_{OPT}(\sigma^{*})}{V_{PQ}(\sigma^{*})} 
	= \frac{ V_{PQ}(\sigma^{*}) + \sum_{i = 1}^{n^{*}} \alpha_{q_{i}^{*}} k_{q_{i}^{*} }^{*} }{ V_{PQ}(\sigma^{*}) }
	= 1 + \frac{ B \sum_{ j = q_{1}^{*} }^{ q_{n^{*}}^{*} } \alpha_{j} }{ B \sum_{j = q_{1}^{*} }^{ q_{n^{*}}^{*}+1 } \alpha_{j} }
	\leq 1 + \frac{ \sum_{j = 1}^{q_{n^{*}}^{*}} \alpha_{j} }{ \sum_{j = 1}^{q_{n^{*}}^{*}+1} \alpha_{j} }
	= 2 - \frac{ \alpha_{q_{n^{*}+1}} }{ \sum_{j = 1}^{q_{n^{*}}+1} \alpha_{j} }
$. 
The last inequality follows from 
$
\frac{ \sum_{j = x-1}^{y} \alpha_{j} }{ \sum_{j = x-1}^{y+1} \alpha_{j} } - \frac{ \sum_{j = x}^{y} \alpha_{j} }{ \sum_{j = x}^{y+1} \alpha_{j} }
	= ( \sum_{j = x-1}^{y} \alpha_{j} \sum_{j = x}^{y+1} \alpha_{j} - \sum_{j = x}^{y} \alpha_{j} \sum_{j = x-1}^{y+1} \alpha_{j} ) / (\sum_{j = x-1}^{y+1} \alpha_{j} \sum_{j = x}^{y+1} \alpha_{j} )
	= ( \alpha_{x-1} \alpha_{y+1} ) / ( \sum_{j = x-1}^{y+1} \alpha_{j} \sum_{j = x}^{y+1} \alpha_{j} )
	> 0
$. 
This gives an upper bound on the competitive ratio of $PQ$. 
On the other hand, 
we show that 
there exists some input $\hat{\sigma}$ such that 
$
\frac{V_{OPT}(\hat{\sigma})}{V_{PQ}(\hat{\sigma})} 
	= 2 - \min_{x \in [1, m-1] } \{ \frac{ \alpha_{x+1} }{ \sum_{j = 1}^{x+1} \alpha_{j} } \}
$
in Lemma~\ref{LMA:lower}, 
which presents a lower bound for $PQ$. 
Therefore, 
we have the following theorem:
\fi

\begin{theorem} \label{thm:1}
	\ifnum \count10 > 0
	$PQ$の競合比は
	$2 - \min_{x \in [1, m-1] } \{ \frac{ \alpha_{x+1} }{ \sum_{j = 1}^{x+1} \alpha_{j} } \}$
	が成立する。
	\fi
	\ifnum \count11 > 0
	The competitive ratio of $PQ$ is exactly 
	$2 - \min_{x \in [1, m-1] } \{ \frac{ \alpha_{x+1} }{ \sum_{j = 1}^{x+1} \alpha_{j} } \}$. 
	\fi
\end{theorem}
\subsection{Competitive Analysis of $PQ$} \label{sec:3.2.2}
\ifnum \count10 > 0
\com{（■日本語）\\}
定義を与える。
For ease of presentation, 
an {\em event time} denotes a moment when an event happens, and 
and any other moment is called a {\em non-event time}. 
We assign index numbers $1$ through $B$ to each position of a queue from
the head to the tail in increasing order. 
The $j$th position of $Q^{(i)}$ is called the $j$th {\em cell}.
任意のnon-event time $t$に対して、
時刻$t$に、
$PQ$は$Q^{(i)}$の$j$番目のcellにpacketを保持しているが、
$OPT$は$Q^{(i)}$の$j$番目のcell $c$にpacketを保持していないとする。
このとき、$c$を{\em free} cellと呼ぶ。
extra packetは必ずfree cellにおいて受理されることに注意せよ。
任意のnon-event time $t$に対して、
アルゴリズム$ALG$が$t$において、$Q^{(j)}$に保持するパケットの数を$h_{ALG}^{(j)}(t)$であらわすことにする。
以下の補題を最初に示す。
\fi
\ifnum \count11 > 0
\com{（■英語）}
We give some definitions. 
For ease of presentation, 
an {\em event time} denotes a time when an event happens, 
and any other moment is called a {\em non-event time}. 
We assign index numbers $1$ through $B$ to each position of a queue from
the head to the tail in increasing order. 
The $j$th position of $Q^{(i)}$ is called the $j$th {\em cell}.
For any non-event time $t$, 
suppose that the $j$th cell in $Q^{(i)}$ of $PQ$ holds a packet at $t$ but 
the $j$th cell $c$ in $Q^{(i)}$ of $OPT$ does not at $t$. 
Then, we call $c$ a {\em free} cell at $t$. 
Note that any extra packet is accepted at a free cell.
For any non-event time $t$, 
let $h_{ALG}^{(j)}(t)$ denote the number of packets which an algorithm $ALG$ stores in $Q^{(j)}$ at $t$. 
We first prove the following lemma.
(The lemma is similar to Lemma~2.3 in \cite{KA11}.)
\fi
%
\ifnum \count14 > 0
%
Due to page limitations, 
we omit most of the proofs of the following lemmas. 
They are included in Appendix~\ref{ap.sec:3}.
(The full version of this paper is available at 
{\tt \footnotesize http://www-lsm.naist.jp/\textasciitilde jkawahara/paper/egress.pdf} .)
%
\fi
%

\begin{LMA} \label{LMA:3.2.1}
	\ifnum \count10 > 0
	${k}_{m} = 0$
	が成立する。
	\fi
	\ifnum \count11 > 0
	\com{（■英語）}
	${k}_{m} = 0$. 
	\fi
\end{LMA}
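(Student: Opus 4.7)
The plan is to prove the stronger invariant that, at every non-event time $t$,
\[
h_{PQ}^{(m)}(t) \le h_{OPT}^{(m)}(t),
\]
and then deduce $k_{m}=0$ as an immediate corollary. I would argue by induction on events.

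At the initial time both buffers are empty, so the base case holds. For the inductive step, consider three cases for the next event. (a) An arrival at $Q^{(j)}$ with $j<m$ does not touch $Q^{(m)}$ in either algorithm, so the inequality is preserved trivially. (b) A scheduling event: if $h_{PQ}^{(m)}\ge 1$ then $PQ$, by its definition, transmits from $Q^{(m)}$, decreasing $h_{PQ}^{(m)}$ by one; whether or not $OPT$ transmits from $Q^{(m)}$, the inequality is preserved. If $h_{PQ}^{(m)}=0$ then $PQ$ cannot make the left-hand side drop, and $OPT$ can at worst decrease $h_{OPT}^{(m)}$ to a value that is still $\ge 0 = h_{PQ}^{(m)}$. (c) An arrival at $Q^{(m)}$: by the assumption that $OPT$ rejects no packet, $h_{OPT}^{(m)}$ must be strictly less than $B$ just before the arrival, and hence by induction $h_{PQ}^{(m)}<B$ as well; so $PQ$ also accepts, and both counters increase by one.

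The critical observation is already visible in case (c): whenever an arrival occurs at $Q^{(m)}$, the inductive hypothesis together with the no-rejection assumption on $OPT$ forces $h_{PQ}^{(m)}<B$ at the moment of arrival, so $PQ$ accepts. Therefore $PQ$ never rejects a packet destined for $Q^{(m)}$, so no extra packet ever lands in $Q^{(m)}$, i.e.\ $k_{m}=0$.

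The only subtle point, and the one I would state most carefully, is case (c): one must rule out the scenario where $h_{PQ}^{(m)}=B$ while $h_{OPT}^{(m)}<B$ at an arrival time; the invariant does exactly this. The rest of the case analysis is a routine check driven by the definition of $PQ$ (always serves $Q^{(m)}$ when non-empty) and the assumption on $OPT$.
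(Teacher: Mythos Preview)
Your proof is correct and follows essentially the same approach as the paper: both establish the invariant $h_{PQ}^{(m)}(t)\le h_{OPT}^{(m)}(t)$ (the paper states it in one line, you verify it by a careful induction over events) and then conclude that $PQ$ never rejects a packet at $Q^{(m)}$, hence $k_m=0$. The paper phrases the final step as ``there is no free cell in $Q^{(m)}$,'' which is exactly your case~(c) observation.
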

%
\ifnum \count14 < 1
%
\begin{proof}
	\ifnum \count10 > 0
	（■日本語）
	\com{（■英語と内容が違うんです）}\\
	$PQ$はその定義より、
	常に優先度の高いキューを選択する。
	よって、
	任意のnon-event time $t$に、
	$h_{PQ}^{(m)}(t) \leq h_{OPT}^{(m)}(t)$
	が成立する。
	また、
	Lemma~\ref{LMA:a.1}より、
	$OPT$はパケットをrejectしないので、
	$OPT$は$Q^{(m)}$にarriveするパケットを全て受理する。
	よって、
	$PQ$も$Q^{(m)}$にarriveするパケットを全て受理することが出来る。
	よって、
	${k}_{m} = 0$
	が成立する。
	\fi
	\ifnum \count11 > 0
	\com{（■英語）}
	By the definition of $PQ$, $PQ$ selects the non-empty queue with the	highest priority. 
	Thus, $h_{PQ}^{(m)}(t) \leq h_{OPT}^{(m)}(t)$ holds at any non-event time $t$. 
	Therefore, there is no free cell in $Q^{(m)}$ of $OPT$ at any time. 
	Since any extra packet is accepted to a free cell, 
	${k}_{m} = 0$.
	\fi
\end{proof}
%
\fi
%

%
%
\ifnum \count10 > 0
\com{（■日本語）\\}
（■未修正）
以下では、
各$Q^{(q_i)} \hspace{1mm} (i \in [1, n])$において受理されるextra packetの数を評価するために、
後で定義するmatching rourineに従って各extra packetとある$PQ$のpacketの間にmatchingを構成する。
ただし、
extra packet $p$と$PQ$のpacket $p'$がmatchされている場合、
$i < i'$が成立する様にmatchingを構成する。
なお、$p$は$Q^{(i)}$から、$p'$は$Q^{(i')}$からscheduleされるものとする。
routineにおいてmatchingを構成する方針について説明する。
時刻の経過とともにmatchingを構成する。
しかし、
extra packetと$PQ$がscheduleするpacketを直接マッチさせることは難しいので、
2段階に分けてマッチングを構成する。
すなわち、
まず任意のfree cell $c$に対して、
$c$がfreeになったevent timeに$PQ$のあるpacket $p$をmatchさせ、
$c$にextra packet $p'$がacceptされたevent timeに、
$p$と$p'$をmatchさせる。
この様なmatchのやり方を実現するために、
matching routineを導入する前に、
まず各eventにおけるfree cellの変化と数の増減を概観する。
そのために幾つか定義を与える。
任意のevent time $t$に対して、
$t-$は、
$t$より前であり、かつ1つ前のイベントが発生した後のnon-event timeを表す。
同様に、
$t+$は、
$t$より後であり、かつ1つ後のイベントが発生する前のnon-event timeを表すとする。
任意のnon-event time $t$に対して、
$t$における$Q^{(j)}$におけるfree cellの数を
${f}^{(j)}(t)$であらわすことにする。
すなわち、
${f}^{(j)}(t) = \max\{ h_{PQ}^{(j)}(t) - h_{OPT}^{(j)}(t), 0 \}$
が成立する。
ただし、Lemma~\ref{LMA:a.1}より、
$OPT$がpacketは非受理しないことに注意せよ。
よって、任意のnon-event time $t$に対して、
$\sum_{j = 1}^{m} h_{PQ}^{(j)}(t) > 0$. 
が成立するならば、
$\sum_{j = 1}^{m} h_{OPT}^{(j)}(t) > 0$することに注意せよ。
\fi
\ifnum \count11 > 0
\com{（■英語）}
Next, 
in order to evaluate the total number of extra packets accepted at each $Q^{(q_i)} \hspace{1mm} (i \in [1, n])$, 
we construct some matching between extra packets and $PQ$'s packets according to the matching routine defined later. 
(Note that evaluating the number of extra packets is related to the property (ii) of ${\cal S}^{*}$.) 
Suppose that extra packet $p$ is matched with $PQ$'s packet $p'$ such that $p$ and $p'$ are transmitted from $Q^{(i)}$ and $Q^{(i')}$, respectively. 
Then, the routine constructs this matching where $i < i'$. 
Let us explain how to construct the matching. 
We match extra packet one by one with time. 
However, 
it is difficult to match an extra packet with $PQ$'s packet in a direct way. 
Thus, the matching is formed in two stages. 
That is, 
at first, for any free cell $c$, 
we match $c$ with some $PQ$'s packet $p$ when $c$ becomes free at an event time. 
At a later time, 
we rematch the extra packet $p'$ accepted into $c$ with $p$ at an event time when $OPT$ accepts $p'$. 
In order to realize such matching, 
we first verify a change in the number of free cells at each event 
before introducing our matching routine. 
We give some definitions for that reason. 
For any event time $t$, 
$t-$ denotes the non-event time before $t$ and after the previous event time. 
Also, 
$t+$ denotes the non-event time after $t$ and before the next event time. 
Let ${f}^{(j)}(t)$ denote the number of free cells in $Q^{(j)}$ at a non-event time $t$,
that is, ${f}^{(j)}(t) = \max\{ h_{PQ}^{(j)}(t) - h_{OPT}^{(j)}(t), 0 \}$.
Note that $OPT$ does not reject any packet by our assumption (Lemma~\ref{LMA:a.1} in Appendix~\ref{ap.sec:1}).
Thus, for any non-event time $t$, 
$\sum_{j = 1}^{m} h_{OPT}^{(j)}(t) > 0$ if $\sum_{j = 1}^{m} h_{PQ}^{(j)}(t) > 0$. 
\fi
\ifnum \count10 > 0
\com{（■日本語)\\}
\noindent
{\bf Arrival event: }\\
event time $t$に、packet $p$が$Q^{(x)}$に到着するとする。
\noindent
{\bf Case A1:$PQ$と$OPT$が$p$をacceptし、
$h_{PQ}^{(x)}(t-) - h_{OPT}^{(x)}(t-) > 0$が成立する場合: }
$h_{PQ}^{(x)}(t+) = h_{PQ}^{(x)}(t-) + 1$、
$h_{OPT}^{(x)}(t+) = h_{OPT}^{(x)}(t-) + 1$より、
$h_{PQ}^{(x)}(t+) - h_{OPT}^{(x)}(t+) > 0$
が成立するので、
$Q^{(x)}$の$h_{OPT}^{(x)}(t-) + 1$の位置のcellに代わって、
$Q^{(x)}$の$h_{PQ}^{(x)}(t-) + 1$の位置のcellがfree cellとなる。
また、
${f}^{(x)}(t+) = {f}^{(x)}(t-)$
が成立する。
\noindent
{\bf Case A2:$PQ$と$OPT$が$p$をacceptし、
$h_{PQ}^{(x)}(t-) - h_{OPT}^{(x)}(t-) \leq 0$が成立する場合: }
$h_{PQ}^{(x)}(t+) = h_{PQ}^{(x)}(t-) + 1$、
$h_{OPT}^{(x)}(t+) = h_{OPT}^{(x)}(t-) + 1$、
$h_{PQ}^{(x)}(t+) - h_{OPT}^{(x)}(t+) \leq 0$
が成立する。
よって、
全てのfree cellはeventの前後で変化しないので、
${f}^{(x)}(t+) = {f}^{(x)}(t-)$
が成立する。
\noindent
{\bf Case A3:$PQ$が$p$をreject, $OPT$が$p$をacceptする場合: }
$OPT$のみが$p$を受理するので、$p$はextra packetである。
$Q^{(x)}$の$h_{OPT}^{(x)}(t-) + 1$の位置のfree cellに$p$が受理され、
$h_{PQ}^{(x)}(t+) = h_{PQ}^{(x)}(t-) = B$、
$h_{OPT}^{(x)}(t+) = h_{OPT}^{(x)}(t-) + 1$
が成立し
${f}^{(x)}(t+) = {f}^{(x)}(t-) - 1$
が成立する。

\vspace{3mm}
\noindent
{\bf Scheduling event: }\\
$PQ$($OPT$, respectively)のバッファ内にpacketがある場合、
$PQ$($OPT$, respectively)が$Q^{(y)}$($Q^{(z)}$, respectively)からpacketをtransmitするとする。
\noindent
{\bf Case S: 
	$\sum_{j = 1}^{m} h_{PQ}^{(j)}(t-) > 0$かつ$\sum_{j = 1}^{m} h_{OPT}^{(j)}(t-) > 0$: }
\noindent
\hspace{2mm}
{\bf\boldmath Case S1: $y = z$の場合:}
\hspace{4mm}
\noindent
\hspace{4mm}
{\bf\boldmath Case S1.1: $h_{PQ}^{(y)}(t-) - h_{OPT}^{(y)}(t-) > 0$の場合:}\\
\hspace{6mm}
	$h_{PQ}^{(y)}(t+) = h_{PQ}^{(y)}(t-) - 1$、
	$h_{OPT}^{(y)}(t+) = h_{OPT}^{(y)}(t-) - 1$
	より、
	$h_{PQ}^{(y)}(t+) - h_{OPT}^{(y)}(t+) > 0$
	が成立するので、
	$Q^{(y)}$の$h_{PQ}^{(y)}(t-)$の位置のcellに代わって、
	$Q^{(y)}$の$h_{OPT}^{(y)}(t-)$の位置のcellがfree cellとなる。
	また、
	${f}^{(y)}(t+) = {f}^{(y)}(t-)$
	が成立する。
\noindent
\hspace{4mm}
{\bf\boldmath Case S1.2: $h_{PQ}^{(y)}(t-) - h_{OPT}^{(y)}(t-) \leq 0$の場合:}\\
\hspace{6mm}
	$h_{PQ}^{(y)}(t+) = h_{PQ}^{(y)}(t-) - 1$、
	$h_{OPT}^{(y)}(t+) = h_{OPT}^{(y)}(t-) - 1$
	より、
	$h_{PQ}^{(y)}(t+) - h_{OPT}^{(y)}(t+) \leq 0$
	が成立するので、
	free cellはeventの前後で変化しない。
\noindent
\hspace{2mm}
{\bf\boldmath Case S2: $y > z$の場合:}
\hspace{4mm}
\noindent
\hspace{4mm}
{\bf\boldmath Case S2.1: $h_{PQ}^{(z)}(t-) - h_{OPT}^{(z)}(t-) < 0$の場合:}\\
\hspace{6mm}
	$h_{PQ}^{(z)}(t+) = h_{PQ}^{(z)}(t-)$
	と
	$h_{OPT}^{(z)}(t+) = h_{OPT}^{(z)}(t-) - 1$より、
	$h_{PQ}^{(z)}(t+) \leq h_{OPT}^{(z)}(t+)$
	が成立するので、
	$Q^{(z)}$のfree cellはeventの前後で変化しない。
\noindent
\hspace{6mm}
{\bf\boldmath Case S2.1.1: $h_{PQ}^{(y)}(t-) - h_{OPT}^{(y)}(t-) > 0$の場合:}\\
\hspace{8mm}
	$h_{PQ}^{(y)}(t+) = h_{PQ}^{(y)}(t-) - 1$
	かつ
	$h_{OPT}^{(y)}(t+) = h_{OPT}^{(y)}(t-)$
	が成立するので、
	${f}^{(y)}(t+) = {f}^{(y)}(t-) - 1$
	が成立し、すなわち、
	$Q^{(y)}$のfree cellの数は減少する。
\noindent
\hspace{6mm}
	{\bf\boldmath Case S2.1.2: $h_{PQ}^{(y)}(t-) - h_{OPT}^{(y)}(t-) \leq 0$の場合:}\\
\hspace{8mm}
	$h_{PQ}^{(y)}(t+) = h_{PQ}^{(y)}(t-) - 1$
	かつ
	$h_{OPT}^{(y)}(t+) = h_{OPT}^{(y)}(t-)$
	が成立するので、
	$h_{PQ}^{(y)}(t+) < h_{OPT}^{(z)}(t+)$
	が成立するので、
	$Q^{(y)}$のfree cellはeventの前後で変化しない。
\noindent
\hspace{4mm}
{\bf\boldmath Case S2.2: $h_{PQ}^{(z)}(t-) - h_{OPT}^{(z)}(t-) \geq 0$の場合:}\\
\hspace{6mm}
$h_{PQ}^{(z)}(t+) = h_{PQ}^{(z)}(t-)$,  
$h_{OPT}^{(z)}(t+) = h_{OPT}^{(z)}(t-) - 1$
が成立する。
ゆえに、
$Q^{(z)}$の$h_{OPT}^{(z)}(t-)$の位置のcellがfree cellとなり、
${f}^{(z)}(t+) = {f}^{(z)}(t-) + 1$
が成立する。
\noindent
\hspace{6mm}
	{\bf\boldmath Case S2.2.1: $h_{PQ}^{(y)}(t-) - h_{OPT}^{(y)}(t-) > 0$の場合: }\\
\hspace{8mm}
	$h_{PQ}^{(y)}(t+) = h_{PQ}^{(y)}(t-) - 1$
	かつ
	$h_{OPT}^{(y)}(t+) = h_{OPT}^{(y)}(t-)$
	より、
	${f}^{(y)}(t+) = {f}^{(y)}(t-) - 1$
	が成立する。
\noindent
\hspace{6mm}
{\bf\boldmath Case S2.2.2: $h_{PQ}^{(y)}(t-) - h_{OPT}^{(y)}(t-) \leq 0$の場合:}\\
\hspace{8mm}
	$h_{PQ}^{(y)}(t+) = h_{PQ}^{(y)}(t-) - 1$
	かつ
	$h_{OPT}^{(y)}(t+) = h_{OPT}^{(y)}(t-)$
	より、
	$h_{PQ}^{(y)}(t+) < h_{OPT}^{(z)}(t+)$
	が成立するので、
	$Q^{(y)}$のfree cellはeventの前後で変化しない。
\noindent
\hspace{2mm}
{\bf\boldmath Case S3: $y < z$の場合:}\\
\hspace{4mm}
	$PQ$の定義より、
	$h_{PQ}^{(z)}(t+) = h_{PQ}^{(z)}(t-) = 0$が成立するので、
	$Q^{(z)}$に新しいfree cellは生じない。
\noindent
\hspace{4mm}
{\bf\boldmath Case S3.1: $h_{PQ}^{(y)}(t-) - h_{OPT}^{(y)}(t-) > 0$の場合:}\\
\hspace{6mm}
	$h_{PQ}^{(y)}(t+) = h_{PQ}^{(y)}(t-) - 1$
	かつ
	$h_{OPT}^{(y)}(t+) = h_{OPT}^{(y)}(t-)$
	より、
	${f}^{(y)}(t+) = {f}^{(y)}(t-) - 1$
	が成立する。
\noindent
\hspace{4mm}
{\bf\boldmath Case S3.2: $h_{PQ}^{(y)}(t-) - h_{OPT}^{(y)}(t-) \leq 0$の場合:}\\
\hspace{6mm}
	$h_{PQ}^{(y)}(t+) = h_{PQ}^{(y)}(t-) - 1$
	かつ
	$h_{OPT}^{(y)}(t+) = h_{OPT}^{(y)}(t-)$
	より、
	$h_{PQ}^{(y)}(t+) < h_{OPT}^{(z)}(t+)$
	が成立するので、
	$Q^{(y)}$のfree cellはeventの前後で変化しない。
\noindent
{\bf Case \={S}: 
$\sum_{j = 1}^{m} h_{PQ}^{(j)}(t-) = 0$かつ$\sum_{j = 1}^{m} h_{OPT}^{(j)}(t-) > 0$の場合: }
\hspace{2mm}
	$PQ$のバッファは空なので、
	全てのキューにfree cellは存在しない。

\fi
\ifnum \count11 > 0
\com{（■英語）\\}
\vspace{3mm}
\noindent
{\bf Arrival event:} 
Let $p$ be the packet arriving at $Q^{(x)}$ at an event time $t$. 
\noindent
{\bf\boldmath Case A1: Both $PQ$ and $OPT$ accept $p$, and $h_{PQ}^{(x)}(t-) - h_{OPT}^{(x)}(t-) > 0$:} 
Since $h_{PQ}^{(x)}(t+) = h_{PQ}^{(x)}(t-) + 1$ and $h_{OPT}^{(x)}(t+) = h_{OPT}^{(x)}(t-) + 1$,
$h_{PQ}^{(x)}(t+) - h_{OPT}^{(x)}(t+) > 0$.  Thus, the
$(h_{PQ}^{(x)}(t-) + 1)$st cell of $Q^{(x)}$ becomes free in place of the $(h_{OPT}^{(x)}(t-) + 1)$st cell of $Q^{(x)}$. 
Hence ${f}^{(x)}(t+) = {f}^{(x)}(t-)$.
\noindent 
{\bf\boldmath Case A2: Both $PQ$ and $OPT$ accept $p$, and $h_{PQ}^{(x)}(t-) - h_{OPT}^{(x)}(t-) \leq 0$:}
Since $h_{PQ}^{(x)}(t+)
= h_{PQ}^{(x)}(t-) + 1$ and $h_{OPT}^{(x)}(t+) = h_{OPT}^{(x)}(t-) + 1$,
$h_{PQ}^{(x)}(t+) - h_{OPT}^{(x)}(t+) \leq 0$. 
Since the states of all the free cells do not change before and after $t$, 
${f}^{(x)}(t+) = {f}^{(x)}(t-)$. 
\noindent 
{\bf\boldmath Case A3: $PQ$ rejects $p$, but $OPT$ accepts $p$:}
$p$ is an extra packet since only $OPT$ accepts $p$. 
$p$ is accepted into the $(h_{OPT}^{(x)}(t-) + 1)$st cell, which is free at $t-$, of $Q^{(x)}$. 
$h_{PQ}^{(x)}(t+) = h_{PQ}^{(x)}(t-) = B$, and $h_{OPT}^{(x)}(t+) = h_{OPT}^{(x)}(t-) + 1$, 
which means that 
${f}^{(x)}(t+) = {f}^{(x)}(t-) - 1$.
\vspace{2mm}
\noindent
{\bf\boldmath Scheduling event:}\\
If $PQ$ ($OPT$, respectively) has at least one non-empty queue, 
suppose that $PQ$ ($OPT$, respectively) transmits a packet from $Q^{(y)}$ ($Q^{(z)}$, respectively) at $t$.  
\noindent
{\bf\boldmath Case S: 
	$\sum_{j = 1}^{m} h_{PQ}^{(j)}(t-) > 0$ and $\sum_{j = 1}^{m} h_{OPT}^{(j)}(t-) > 0$: } 
\noindent
\hspace{2mm}
{\bf\boldmath Case S1: $y = z$:}
\hspace{4mm}
\noindent
\hspace{4mm}
{\bf\boldmath Case S1.1: $h_{PQ}^{(y)}(t-) - h_{OPT}^{(y)}(t-) > 0$:}\\
\hspace{6mm}
	Since $h_{PQ}^{(y)}(t+) = h_{PQ}^{(y)}(t-) - 1$ and $h_{OPT}^{(y)}(t+) = h_{OPT}^{(y)}(t-) - 1$, 
	$h_{PQ}^{(y)}(t+) - h_{OPT}^{(y)}(t+) > 0$ holds. 
	Thus, the $h_{OPT}^{(y)}(t-)$th cell of
	$Q^{(y)}$ becomes free in place of the $h_{PQ}^{(y)}(t-)$th cell of $Q^{(y)}$. 
	Hence ${f}^{(y)}(t+) = {f}^{(y)}(t-)$.
\noindent
\hspace{4mm}
{\bf\boldmath Case S1.2: $h_{PQ}^{(y)}(t-) - h_{OPT}^{(y)}(t-) \leq 0$:}\\
\hspace{6mm}
Since $h_{PQ}^{(y)}(t+) = h_{PQ}^{(y)}(t-) - 1$ and $h_{OPT}^{(y)}(t+) = h_{OPT}^{(y)}(t-) - 1$ hold, 
$h_{PQ}^{(y)}(t+) - h_{OPT}^{(y)}(t+) \leq 0$. 
Hence the states of all the free cells do not change before and after $t$. 
\noindent
\hspace{2mm}
{\bf\boldmath Case S2: $y > z$:}
\hspace{4mm}
\noindent
\hspace{4mm}
{\bf\boldmath Case S2.1: $h_{PQ}^{(z)}(t-) - h_{OPT}^{(z)}(t-) < 0$:}\\
\hspace{6mm}
Since $h_{PQ}^{(z)}(t+) = h_{PQ}^{(z)}(t-)$ and $h_{OPT}^{(z)}(t+) = h_{OPT}^{(z)}(t-) - 1$, 
$h_{PQ}^{(z)}(t+) \leq h_{OPT}^{(z)}(t+)$. 
Thus, the states of all the free cells of $Q^{(z)}$ do not change before and after $t$.
\noindent
\hspace{6mm}
{\bf\boldmath Case S2.1.1: $h_{PQ}^{(y)}(t-) - h_{OPT}^{(y)}(t-) > 0$:}\\
\hspace{8mm}
	Since $h_{PQ}^{(y)}(t+) = h_{PQ}^{(y)}(t-) - 1$ and $h_{OPT}^{(y)}(t+) = h_{OPT}^{(y)}(t-)$, 
	${f}^{(y)}(t+) = {f}^{(y)}(t-) - 1$ holds.
\noindent
\hspace{6mm}
	{\bf\boldmath Case S2.1.2: $h_{PQ}^{(y)}(t-) - h_{OPT}^{(y)}(t-) \leq 0$:}\\
\hspace{8mm}
		Since $h_{PQ}^{(y)}(t+) = h_{PQ}^{(y)}(t-) - 1$ and $h_{OPT}^{(y)}(t+) = h_{OPT}^{(y)}(t-)$, 
		$h_{PQ}^{(y)}(t+) < h_{OPT}^{(y)}(t+)$. 
		Hence, the states of all the free cells of $Q^{(y)}$ do not change before and after $t$.
\noindent
\hspace{4mm}
{\bf\boldmath Case S2.2: $h_{PQ}^{(z)}(t-) - h_{OPT}^{(z)}(t-) \geq 0$:}\\
\hspace{6mm}
	$h_{PQ}^{(z)}(t+) = h_{PQ}^{(z)}(t-)$ and $h_{OPT}^{(z)}(t+) = h_{OPT}^{(z)}(t-) - 1$. 
	Thus, the $h_{OPT}^{(z)}(t-)$th cell of $Q^{(z)}$ becomes free, which means that 
	${f}^{(z)}(t+) = {f}^{(z)}(t-) + 1$ holds.
\noindent
\hspace{6mm}
	{\bf\boldmath Case S2.2.1: $h_{PQ}^{(y)}(t-) - h_{OPT}^{(y)}(t-) > 0$: }\\
\hspace{8mm}
		Since $h_{PQ}^{(y)}(t+) = h_{PQ}^{(y)}(t-) - 1$ and $h_{OPT}^{(y)}(t+) = h_{OPT}^{(y)}(t-)$,
		${f}^{(y)}(t+) = {f}^{(y)}(t-) - 1$.
\noindent
\hspace{6mm}
{\bf\boldmath Case S2.2.2: $h_{PQ}^{(y)}(t-) - h_{OPT}^{(y)}(t-) \leq 0$:}\\
\hspace{8mm}
	Since $h_{PQ}^{(y)}(t+) = h_{PQ}^{(y)}(t-) - 1$ and $h_{OPT}^{(y)}(t+) = h_{OPT}^{(y)}(t-)$, 
	$h_{PQ}^{(y)}(t+) < h_{OPT}^{(y)}(t+)$, which means that 
	the states of all the free cells of $Q^{(y)}$ do not change before and after $t$.
\noindent
\hspace{2mm}
{\bf\boldmath Case S3: $y < z$:}\\
\hspace{4mm}
	Since $h_{PQ}^{(z)}(t+) = h_{PQ}^{(z)}(t-) = 0$ by the definition of $PQ$, 
	no new free cell arises in $Q^{(z)}$. 
\noindent
\hspace{4mm}
{\bf\boldmath Case S3.1: $h_{PQ}^{(y)}(t-) - h_{OPT}^{(y)}(t-) > 0$:}\\
\hspace{6mm}
	Since $h_{PQ}^{(y)}(t+) = h_{PQ}^{(y)}(t-) - 1$ and $h_{OPT}^{(y)}(t+) = h_{OPT}^{(y)}(t-)$, 
	${f}^{(y)}(t+) = {f}^{(y)}(t-) - 1$ holds.
\noindent
\hspace{4mm}
{\bf\boldmath Case S3.2: $h_{PQ}^{(y)}(t-) - h_{OPT}^{(y)}(t-) \leq 0$:}\\
\hspace{6mm}
	Since $h_{PQ}^{(y)}(t+) = h_{PQ}^{(y)}(t-) - 1$ and $h_{OPT}^{(y)}(t+) = h_{OPT}^{(y)}(t-)$, 
	$h_{PQ}^{(y)}(t+) < h_{OPT}^{(y)}(t+)$ holds. 
	Hence, 
	the states of all the free cells of $Q^{(y)}$ do not change before and after $t$. 
\noindent
{\bf Case \={S}: 
$\sum_{j = 1}^{m} h_{PQ}^{(j)}(t-) = 0$ and $\sum_{j = 1}^{m} h_{OPT}^{(j)}(t-) > 0$: }\\
\hspace{2mm}
	Since the buffer of $PQ$ is empty, 
	there does not exist any free cell in it.
\fi
\vspace{3mm}
\ifnum \count10 > 0
上のfree cellの状態の変化を踏まえて、
表~\ref{tab:rotuine_j}のmatching routineにしたがって、
各extra packetに$PQ$がtransmitするpacketをmatchする。
ただし、ルーチン内のcaseの名前は、上記のfree cellの変化の場合分けの名前に対応している。
matching routineの概要を説明する。
直感的に言えば、
ルーチンは過去に構成したマッチングを保持しつつ、
free cellが新しく生じる場合（Cases A1, S1.1, S2.2）と
extra packetが$OPT$がacceptする場合（Case A3）に、
matchingを新しく構成、もしくは修正する。
それ以外の場合（Cases A2, S1.2, S2.1, S3, \={S}）は、何もしない。
具体的に言えば、
Cases A1では、
$OPT$と$PQ$が同じキューにおいてパケットを受理し、
Case S1.1では同じキューからパケットを送信する。
結果として、これらの場合、free cellの状況は変化するが、その総数は変化しないので、
matchingの相手を変更してmatchingを更新する。
Case S2は、
$OPT$と$PQ$が異なるキューからpacketを送信しており、
両者のキュー内のpacketの数などの条件（S2.2）により、
$OPT$のキューにfree cellが生じる。
そのcellに対して、$PQ$がこのeventにおいて送信するpacketをmatchする。
Case A3では、
free cell $c$にextra packet $p$を受理する。
$c$には$PQ$のpacket $p'$が過去にmatchされている（帰納的に証明される）ので、
$p'$の相手を$c$から$p$に付け替える。
extra packetが一旦、matchされると、
それ以降、そのmatchの相手が変更されることはない。
\fi
\ifnum \count11 > 0
\com{（■英語）}
\com{（■infocomの後修正。）}
Based on a change in the state of free cells, 
we match each extra packet with a packet transmitted by $PQ$ according to the matching routine in Table~\ref{tab:rotuine}. 
(All the names of the cases in the routine correspond to the names of cases in the above sketch about free cells.)
We outline the matching routine. 
Roughly speaking, 
the routine either adds a new edge to a tentative matching if a new free cell arises (Cases A1, S1.1, S2.2), 
or fixes some edge if $OPT$ accepts an extra packet (Case A3), 
while keeping edges constructed before.
In the other cases (Cases A2, S1.2, S2.1, S3, \={S}), the routine does nothing. 
Specifically, 
both $OPT$ and $PQ$ accept arriving packets at the same queue in Case A1, and 
they transmit packets from the same queue in Case S1.1. 
Since the total numbers of free cells do not change in these cases but the states of free cells do, 
the routine updates an edge in a tentative matching, 
namely removes an edge between $PQ$'s packet $p$ and a cell that became non-free and adds a new edge between $p$ and a new free cell. 
When the routine executes Case S2.2, 
the queue where $OPT$ transmits a packet is different from that of $PQ$. 
By the conditions of the numbers of packets in their queues and so on (see the condition of Case S2.2), 
a cell of $OPT$'s queue becomes free. 
The routine matches the cell with the packet transmitted by $PQ$ at this event. 
In Case A3, 
an extra packet is accepted into a free cell $c$. 
Since $c$ has been already matched with some $PQ$'s packet $p'$, which can be proven inductively in Lemma~\ref{LMA:3.2.2}, 
the routine replaces the partner of $p'$ from $c$ to $p$. 
Once an extra packet is matched, the partner of the packet never changes. 
\fi
\ifnum \count10 > 0
\begin{table*}[h]
\caption{Matching routine}
\begin{center}
\begin{tabularx}{165mm}{|X|}
	\hline
\noindent
{\bf Matching routine:}
時刻$t$をevent timeとする.\\
\noindent
{\bf Arrival event: }\\
$t$にpacket $p$が$Q^{(x)}$に到着するとする。
以下の3つのCaseのうち1つを実行する。
\noindent
{\bf\boldmath Case A1: $PQ$と$OPT$が$p$をacceptし、
$h_{PQ}^{(x)}(t-) - h_{OPT}^{(x)}(t-) > 0$が成立する場合: } \\ 
\hspace{2mm}
	時刻$t-$に$Q^{(x)}$の$h_{OPT}^{(x)}(t-) + 1$の位置のfree cellとmatchしているパケット$q$が必ず存在する。
	（補題~\ref{LMA:3.2.2}参照。）
	ここで、$q$をunmatchし、
	$q$を$h_{PQ}^{(x)}(t-) + 1$の位置の新しいfree cellとmatchさせる。
\noindent
{\bf\boldmath Case A2:$PQ$と$OPT$が$p$をacceptし、
$h_{PQ}^{(x)}(t-) - h_{OPT}^{(x)}(t-) \leq 0$が成立する場合: }\\
\hspace{2mm}
	何もしない. 
\noindent
{\bf\boldmath Case A3: $PQ$が$p$をreject, $OPT$が$p$をacceptする場合: }\\
\hspace{2mm}
	$c$を$Q^{(x)}$の$OPT$の$(h_{OPT}^{(x)}(t-) + 1)$番目のcellとする。
	すなわち、$c$にextra packetである$p$が受理される。
	$c$は$t-$において、freeであり、$t+$においてfreeでなくなることに注意せよ。
	補題~\ref{LMA:3.2.2}より、
	$t-$において$c$にmatchされているpacket $q$が存在する。
	そこで、
	$q$のmatchの相手を$c$から$p$に変更する。
\vspace{2mm}
\noindent
{\bf Scheduling event:} 
$t-$に
$PQ$($OPT$, respectively)のバッファ内にpacketがある場合、
$t$に$PQ$($OPT$, respectively)が$Q^{(y)}$($Q^{(z)}$, respectively)からpacketをtransmitするとする。
以下の3つのCaseのうち1つを実行する。
\noindent
{\bf\boldmath Case S1.1: {\footnotesize $\sum_{j = 1}^{m} h_{PQ}^{(j)}(t-) > 0$かつ$\sum_{j = 1}^{m} h_{OPT}^{(j)}(t-) > 0$かつ$y = z$かつ$h_{PQ}^{(y)}(t-) - h_{OPT}^{(y)}(t-) > 0$が成立する場合}:}\\
\hspace{2mm}
	時刻$t-$に$Q^{(y)}$の$h_{PQ}^{(y)}(t-)$の位置のfree cellとmatchしているパケット$q$が必ず存在する。
	（補題~\ref{LMA:3.2.2}参照。）
	$q$をunmatchし、
	$q$を$h_{OPT}^{(y)}(t-)$の位置の新しいfree cellとmatchさせる。
\noindent
{\bf\boldmath Case S2.2: {\footnotesize $\sum_{j = 1}^{m} h_{PQ}^{(j)}(t-) > 0$かつ$\sum_{j = 1}^{m} h_{OPT}^{(j)}(t-) > 0$かつ$y > z$かつ$h_{PQ}^{(z)}(t-) - h_{OPT}^{(z)}(t-) \geq 0$が成立する場合}:}\\
\hspace{2mm}
	$c$を$Q^{(z)}$の$h_{OPT}^{(z)}(t-)$の位置に新しく発生したfree cellとする。
	$t$において$PQ$に$Q^{(y)}$からtransmitされるpacket $p$は何にもmatchされていないので、
	（補題~\ref{LMA:3.2.2}参照。）
	$p$のpacketを$c$にmatchする。
\noindent
{\bf Otherwise (Cases S1.2, S2.1, S3, \={S}): }\\
\hspace{2mm}
	何もしない。 
\\
	\hline 
\end{tabularx}
\end{center}
	\label{tab:rotuine_j}
\end{table*}
\fi
\ifnum \count11 > 0
\com{（■英語）\\}
\begin{table*}[h]
\caption{Matching routine}
\begin{center}
\begin{tabularx}{165mm}{|X|}
	\hline
\noindent
{\bf Matching routine:}
Let $t$ be an event time. \\
\noindent
{\bf Arrival event:}
Suppose that the packet $p$ arrives at $Q^{(x)}$ at $t$. 
Execute one of the following three cases at $t$. 
\noindent
{\bf\boldmath Case A1: Both $PQ$ and $OPT$ accept $p$, and $h_{PQ}^{(x)}(t-) - h_{OPT}^{(x)}(t-) > 0$:} \\ 
\hspace{2mm}
	Let $c$ be $OPT$'s $(h_{OPT}^{(x)}(t-) + 1)$st cell of $Q^{(x)}$, 
	which is free at $t-$ but not at $t+$. 
	Let $c'$ be $OPT$'s $(h_{PQ}^{(x)}(t-) + 1)$st cell which is not free at $t-$ but is free at $t+$. 
	There exists the packet $q$ matched with $c$ at $t-$. 
	(The existence of such $q$ is guaranteed by Lemma~\ref{LMA:3.2.2}.)
	Change the matching partner of $q$ from $c$ to $c'$. 
\noindent
{\bf\boldmath Case A2: Both $PQ$ and $OPT$ accept $p$, and $h_{PQ}^{(x)}(t-) - h_{OPT}^{(x)}(t-) \leq 0$:}\\
\hspace{2mm}
	Do nothing. 
\noindent
{\bf\boldmath Case A3: $PQ$ rejects $p$, but $OPT$ accepts $p$:}\\
\hspace{2mm}
	Let $c$ be $OPT$'s $(h_{OPT}^{(x)}(t-) + 1)$st cell of $Q^{(x)}$, 
	that is, the cell to which the extra packet $p$ is now stored. 
	Note that $c$ is free at $t-$ but is not at $t+$. 
	There exists the packet $q$ matched with $c$ at $t-$. 
	(See Lemma~\ref{LMA:3.2.2}.)
	Change the partner of $q$ from $c$ to $p$. 
\vspace{2mm}
\noindent
{\bf Scheduling event:} 
If $PQ$ ($OPT$, respectively) has at least one non-empty queue at $t-$, 
suppose that $PQ$ ($OPT$, respectively) transmits a packet from $Q^{(y)}$ ($Q^{(z)}$, respectively) at $t$. 
Execute one of the following three cases at $t$. 
%

%
%
%
\noindent
{\bf\boldmath Case S1.1: {\footnotesize $\sum_{j = 1}^{m} h_{PQ}^{(j)}(t-) > 0$, $\sum_{j = 1}^{m} h_{OPT}^{(j)}(t-) > 0$, $y = z$, and $h_{PQ}^{(y)}(t-) - h_{OPT}^{(y)}(t-) > 0$}:}\\
\hspace{2mm}
	Let $c$ be $OPT$'s $h_{PQ}^{(y)}(t-)$th cell of $Q^{(y)}$, which is free at $t-$ but is not free at $t+$. 
	Let $c'$ be $OPT$'s $h_{OPT}^{(y)}(t-)$th cell of $Q^{(y)}$, which is not free at $t-$ but is free at $t+$. 
	There exists the packet $q$ matched with $c$ at $t-$. 
	(See Lemma~\ref{LMA:3.2.2}.)
	Change the matching partner of $q$ from $c$ to $c'$. 
\noindent
{\bf\boldmath Case S2.2: {\footnotesize $\sum_{j = 1}^{m} h_{PQ}^{(j)}(t-) > 0$, $\sum_{j = 1}^{m} h_{OPT}^{(j)}(t-) > 0$, $y > z$, and $h_{PQ}^{(z)}(t-) - h_{OPT}^{(z)}(t-) \geq 0$}:}\\
\hspace{2mm}
	Let $c$ be $OPT$'s $h_{OPT}^{(z)}(t-)$th cell of $Q^{(z)}$, 
	which becomes free at $t+$. 
	Since the packet $p$ transmitted from $Q^{(y)}$ by $PQ$ is not matched with anything (see Lemma~\ref{LMA:3.2.2}), 
	match $p$ with $c$. 
\noindent
{\bf Otherwise (Cases S1.2, S2.1, S3, \={S}): }
%
%
	Do nothing. 
\\
	\hline 
\end{tabularx}
\end{center}
	\label{tab:rotuine}
\end{table*}
\fi
\ifnum \count10 > 0
定義を与える。
任意のpacket $p$に対して、
$g(p)$は$p$が到着するキューの番号を表す。
また、
任意のcell $c$に対して、
$g(c)$は$c$が存在するキューの番号を表す。
以下で、このroutineの実行可能性を示す。
\fi
\ifnum \count11 > 0
\com{（■英語）\\}
We give some definitions. 
For any packet $p$, 
$g(p)$ denotes the index of the queue at which $p$ arrives.
Also, 
for any cell $c$, 
$g(c)$ denotes the index of the queue including $c$. 
We now show the feasibility of the routine. 
\fi
%

%
\begin{LMA} \label{LMA:3.2.2}
	\ifnum \count10 > 0
	任意のnon-event time $t'$、
	$t'$より前に到着する任意のextra packet $p$と
	$t'$より前に$PQ$にtransmitされるあるpacket $p'$に対して、
	$t'$において$p$には$p'$がmatchされており、
	$g(p) < g(p')$
	が成立している。
	また、
	$t'$における任意のfree cell $c$、
	$t'$より前に$PQ$にtransmitされるあるpacket $p''$に対して、
	$t'$において$c$には$p''$がmatchされており、
	$g(c) < g(p'')$
	が成立している。
	\fi
	\ifnum \count11 > 0
	\com{（■英語）}
	For any non-event time $t'$, and any extra packet $p$ which arrives before $t'$, 
	there exists some packet $p'$ such that $PQ$ transmits $p'$ before $t'$, $g(p) < g(p')$ and $p$ is matched with $p'$ at $t'$. 
	Moreover, 
	for any free cell $c$ at $t'$, 
	there exists some packet $p''$ such that $PQ$ transmits $p''$ before $t'$, $g(c) < g(p'')$, and $c$ is matched with $p''$ at $t'$. 
	\fi
\end{LMA}
\begin{proof}
	\ifnum \count10 > 0
	event timeに関する帰納法により示す。
	base caseは明らかに成立する。
	任意のevent time $t$の直前まで成立すると仮定し、
	$t$の直後においても成立することを示す。
	最初に
	$t$にroutineがCase A1、もしくはCase S1.1を実行する場合を考える。
	$c$を$t$に新しくfreeになるcellとし、
	$c'$を$t-$にfreeであり、$t+$にfreeでなくなるcellとする。
	また、
	帰納法の仮定より、
	$t$より前に$PQ$にtransmitされるpacket $p$が$c'$に$t-$おいてmatchされている。
	Case A1とCase S1.1の定義より、
	$t$にroutineは、$p$をunmatchして、
	$p$を$c$とmatchする。
	このとき、
	$g(c) = g(c')$が明らかに成立し、
	帰納法の仮定より、
	$g(c) < g(p)$が成立するので、
	$t+$においても題意が満たされる。
	次に
	$t$にroutineがCase A3を実行する場合を考える。
	packet $p'$を、$t$において$OPT$にacceptされるextra packetとする。
	$c$を$p'$が$t$にacceptされるfree cellとする。
	帰納法の仮定より、
	$t$より前に$PQ$にtransmitされるpacket $p$が$c$に$t-$においてmatchされている。
	Case A3の定義より、
	$t$にroutineは、$p$をunmatchして、
	$p$を$p'$とmatchする。
	$t+$に$p$がextra packet $p'$にmatchする。
	$g(c) = g(p')$が成立している。
	また、帰納法の仮定より、
	$g(c) < g(p)$
	するので、
	$g(p') < g(p)$が成立し、
	題意が満たされる。
	第三に、
	時刻$t$にroutineがCase S2.2を実行する場合を考える。
	$p$を、$t$に$PQ$にtransmitされるpacketとし、
	$c$を、$t$に新しく生じるfree cellとする。
	帰納法の仮定より、
	$PQ$のmatchしているpacketは$t$より前にtransmitされているので、
	$p$は$t-$において何にもmatchされていない。
	よって、$t$に
	routineは$p$を$c$にmatchすることが出来る。
	また、
	Case S2.2の条件より、
	$g(c) < g(p)$が成立するので、帰納法の仮定より題意が満たされる。
	最後に、
	それ以外の場合は、
	新しくmatchingは生じないので、
	帰納法の仮定より、題意が満たされる。
	\fi
	\ifnum \count11 > 0
	\com{（■英語）}
	The proof is by induction on the event time. 
	The base case is clear. 
	Let $t$ be any event time. 
	We assume that the statement is true at $t-$, 
	and prove that it is true at $t+$. 
	First, 
	we discuss the case where the routine executes Case A1 or S1.1 at $t$. 
	Let $c$ be the cell which becomes free at $t$. 
	Also, 
	let $c'$ be the cell which is free at $t-$ and not free at $t+$. 
	By the induction hypothesis, 
	a packet $p$ which is transmitted by $PQ$ before $t-$ is matched with $c'$ at $t-$. 
	Then, 
	the routine unmatches $p$, and matches $p$ with $c$ by the definitions of Cases A1 and S1.1. 
	$g(c) = g(c')$ clearly holds. 
	Also, since $g(c') < g(p)$ by the induction hypothesis, 
	the statement is true at $t+$. 
	Next, 
	we consider the case where the routine executes Case A3 at $t$. 
	Let $p'$ be the extra packet accepted by $OPT$ at $t$. 
	Also, 
	let $c$ be the free cell into which $OPT$ accepts $p'$ at $t$. 
	By the induction hypothesis, 
	a packet $p$ which is transmitted by $PQ$ before $t-$ is matched with $c$ at $t-$. 
	Then, 
	by the definition of Case A3, 
	the routine unmatches $p$, and matches $p$ with $p'$. 
	$g(c) = g(p')$ holds by definition. 
	In addition, $g(c) < g(p)$ by the induction hypothesis. 
	Thus, 
	$g(p') < g(p)$, which means that 
	the statement holds at $t+$. 
	Third, 
	we investigate the case where the routine executes Case S2.2 at $t$. 
	Suppose that $PQ$ transmits a packet $p$ at $t$, 
	and the new free cell $c$ arises at $t$. 
	By the induction hypothesis, 
	any $PQ$'s packet which is matched with a free cell or an extra packet is transmitted before $t$. 
	Hence, $p$ is not matched with anything at $t-$. 
	Thus, the routine can match $p$ with $c$ at $t$. 
	Moreover, 
	$g(c) < g(p)$ by the condition of Case S2.2. 
	By the induction hypothesis, the statement is true at $t+$. 
	In the other cases, 
	a new matching does not arise. 
	Therefore, 
	the statement is clear by the induction hypothesis, which completes the proof. 
	\fi
\end{proof}
\ifnum \count10 > 0
（■未修正）
以下の補題では、
解析対象となる入力集合を逐次的に作成しつつ、
最終的に${\cal S}^{*}$を構成する。
それと同時に各queueに到着するextra packetの数を評価する。
\fi
\ifnum \count11 > 0
\com{（■英語）}
In the next lemma, 
we obtain part of the properties of the set ${\cal S}^{*}$. 
%
\fi
%

%
\begin{LMA} \label{LMA:3.2.3}
	\ifnum \count10 > 0
	$\sigma$を、
	$s_{u}(\sigma) > B$が成立する様な$u (\in [1, m])$が存在する任意の入力とする。
	このとき、必ず次の様な入力$\hat{\sigma}$が存在する。
	(i) 各$j (\in [1, m])$に対して、$s_{j}(\hat{\sigma}) \leq B$が成立し、
	(ii) $\frac{V_{OPT}(\sigma)}{V_{PQ}(\sigma)} < \frac{V_{OPT}(\hat{\sigma})}{V_{PQ}(\hat{\sigma})}$が成立する。
	\fi
	\ifnum \count11 > 0
	\com{（■英語）}
	Let $\sigma$ be an input such that for some $u (\in [1, m])$, $s_{u}(\sigma) > B$. 
	Then, 
	there exists an input $\hat{\sigma}$ such that 
	for each $j (\in [1, m])$, $s_{j}(\hat{\sigma}) \leq B$, and 
	$\frac{V_{OPT}(\sigma)}{V_{PQ}(\sigma)} < \frac{V_{OPT}(\hat{\sigma})}{V_{PQ}(\hat{\sigma})}$. 
	\fi
\end{LMA}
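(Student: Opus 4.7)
My plan is an iterative exchange argument. I will produce a sequence of inputs $\sigma = \sigma^{(0)}, \sigma^{(1)}, \ldots$ in which $\sum_{j=1}^{m} s_j(\sigma^{(\ell)})$ strictly decreases while the ratio $V_{OPT}/V_{PQ}$ strictly increases, until reaching a terminal input $\hat\sigma$ with $s_j(\hat\sigma) \leq B$ for every $j$. Since $\sum_j s_j$ is a nonnegative integer, termination is guaranteed, and the terminal input is the required $\hat\sigma$.

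For a single step, fix an index $u$ with $s_u(\sigma) > B$. Because PQ's $Q^{(u)}$ has capacity $B$ but PQ transmits more than $B$ packets from it, pigeonhole yields a packet $p$ arriving at $Q^{(u)}$ that PQ accepts (so $p$ is not extra) after PQ has already transmitted at least one packet from $Q^{(u)}$. Let $t_p$ be the arrival time of $p$ and $t_s$ the scheduling event at which PQ eventually transmits $p$; I will pick this pair so that no arrival at $Q^{(u)}$ inside $(t_p, t_s)$ is rejected by PQ in $\sigma$ (feasibility of this choice is addressed below). Define $\hat\sigma$ by deleting these two events from $\sigma$. A case analysis analogous to the one preceding Table~\ref{tab:rotuine} then shows that PQ's buffer state at every non-event time outside $(t_p, t_s)$ agrees between $\sigma$ and $\hat\sigma$, and by the choice of $p$ every other arrival sees the same acceptance condition in both inputs. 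Consequently $k_j(\hat\sigma) = k_j(\sigma)$ for all $j$; moreover, $V_{PQ}(\hat\sigma) = V_{PQ}(\sigma) - \alpha_u$ (PQ simply loses the transmission of $p$), and a standard swap argument on OPT's schedule (using $OPT$'s no-rejection property, Lemma~\ref{LMA:a.1}) gives $V_{OPT}(\hat\sigma) = V_{OPT}(\sigma) - \alpha_u$. Assuming $V_{OPT}(\sigma) > V_{PQ}(\sigma)$ (else the ratio is already $1$ and the lemma is vacuous), this yields $\frac{V_{OPT}(\hat\sigma)}{V_{PQ}(\hat\sigma)} > \frac{V_{OPT}(\sigma)}{V_{PQ}(\sigma)}$.

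The main obstacle is verifying that a pair $(t_p, t_s)$ exists whose removal does not perturb the acceptance status of any other arrival. If an extra arrival at $Q^{(u)}$ happens inside $(t_p, t_s)$ in $\sigma$, then in $\hat\sigma$ PQ's $Q^{(u)}$ holds one fewer packet at that instant and the arrival becomes accepted, decreasing $k_u$ and potentially cancelling the decrease in $V_{PQ}$. I resolve this by selecting $p$ from a ``refill cycle'' of PQ's $Q^{(u)}$ during which no extra packet arrives at $Q^{(u)}$: since $s_u > B$ implies that $Q^{(u)}$ is refilled more than once while $k_u$ remains finite, at least one PQ-accepted arrival at $Q^{(u)}$ and a subsequent transmission of it can be chosen so that the intervening interval contains no rejection at $Q^{(u)}$. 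Once this selection is established, iterating the step produces $\hat\sigma$.
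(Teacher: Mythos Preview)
Your single-step claim that $V_{OPT}(\hat\sigma)=V_{OPT}(\sigma)-\alpha_u$ is where the argument breaks. You delete the scheduling event $t_s$ at which \emph{PQ} transmits $p$, but at $t_s$ the optimal schedule may be serving a completely different queue; removing that slot can force an overflow elsewhere for OPT and cost strictly more than $\alpha_u$. Concretely, take $m=2$, $B=1$ and the input
\[
\text{arr }Q^{(2)},\ \text{sched},\ \text{arr }Q^{(2)},\ \text{arr }Q^{(1)},\ \text{sched},\ \text{arr }Q^{(1)},\ \text{sched},\ \text{sched}.
\]
Here $s_2=2>B$, and the only admissible $p$ under your rule is the second $Q^{(2)}$ arrival, with $t_s$ the second scheduling event; your side condition (no rejected $Q^{(u)}$-arrival inside $(t_p,t_s)$) is vacuously satisfied. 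One checks $V_{OPT}(\sigma)=2\alpha+2$, $V_{PQ}(\sigma)=2\alpha+1$. After deleting events $3$ and $5$, both PQ and OPT are forced to reject the second $Q^{(1)}$ arrival (only one scheduling event precedes it, and only $Q^{(2)}$ is nonempty there), so $V_{OPT}(\hat\sigma)=V_{PQ}(\hat\sigma)=\alpha+1$ and the ratio drops from $\frac{2\alpha+2}{2\alpha+1}>1$ to $1$. Thus your step can strictly \emph{decrease} the ratio, and the iteration does not go through. Your ``swap argument'' appeal to Lemma~\ref{LMA:a.1} does not rescue this: that lemma lets you prune rejections from an input without lowering the ratio, but here the ratio on $\hat\sigma$ is already $1$.

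The paper's proof sidesteps exactly this obstruction by an asymmetric choice of the two deleted events: it removes the $(B{+}1)$st $Q^{(z)}$-arrival that PQ accepts, together with a scheduling event at which \emph{OPT} (not PQ) serves $Q^{(z)}$ just before that arrival. This pins OPT's loss to precisely $\alpha_z$ by construction, at the price of a more delicate analysis on the PQ side (since the deleted scheduling slot need not be one where PQ served $Q^{(z)}$); the paper then carries out a case analysis to show $V_{PQ}$ drops by at least $\alpha_z$. If you want to repair your approach, you must either tie $t_s$ to OPT's schedule rather than PQ's, or add a condition guaranteeing that OPT can absorb the lost slot without any buffer overflow in queues other than $Q^{(u)}$; neither follows from your current selection rule.
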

%
\ifnum \count14 < 1
%
\begin{proof}
	\ifnum \count10 > 0
	$z$を、
	$s_{z}(\sigma) > B$が成立する最小の整数とする。
	このとき、次の3つの条件をみたす様な3つのevent time $t_{1}, t_{2} (> t_{1})$と$t_{3} (> t_{2})$が存在する。
	(i) $t_{2}$は、$Q^{(z)}$において$PQ$がacceptする$B+1$個目のpacketが到着するevent timeである。
	(ii) 
	時間$(t_{1}, t_{2})$の間に、$OPT$は$Q^{(z)}$からpacketをtransmitしない。
	ただし、$t_{1}$は、$OPT$が$Q^{(z)}$からpacketをtransmitするevent timeである。
		（仮定より、$OPT$は全てのpacketを受理するので、$t_{2}$より前に必ず$Q^{(z)}$からpacketをscheduleする。）
	(iii) 
	時間$(t_{2}, t_{3})$の間に、$PQ$は$Q^{(z)}$からpacketをtransmitしない。
	ただし、$t_{3}$は、$PQ$が$Q^{(z)}$からpacketをtransmitするevent timeである。
	このとき、$\sigma$から$t_{1}$と$t_{2}$に発生するeventを除いて$\sigma'$を構成する。
	もし、
	$\frac{V_{OPT}(\sigma)}{V_{PQ}(\sigma)} < \frac{V_{OPT}(\sigma')}{V_{PQ}(\sigma')}$
	が成立するならば、
	$\{ x \mid s_{x}(\sigma) > B \}$に含まれるキューの番号$j$の昇順に、
	$Q^{(j)}$に関するeventを取り除けば、
	\com{（■書き方うーむ。）}
	各$j (\in [1, m])$に対して、
	$s_{j}(\hat{\sigma}) \leq B$
	かつ
	$\frac{V_{OPT}(\sigma)}{V_{PQ}(\sigma)} < \frac{V_{OPT}(\hat{\sigma})}{V_{PQ}(\hat{\sigma})}$
	が成立する入力$\hat{\sigma}$を構成することが出来る。
	これは題意がみたされる。
	よって、以下では、
	$\frac{V_{OPT}(\sigma)}{V_{PQ}(\sigma)} < \frac{V_{OPT}(\sigma')}{V_{PQ}(\sigma')}$
	が成立することを示す。
	まず、
	$\sigma'$に対して$OPT$が得る価値を評価する。
	$ALG$を次の様なオフラインアルゴリズムとする。
	$\sigma'$の各event timeにおいて、$\sigma$において$OPT$が選ぶキューと同じキューを選ぶオフラインアルゴリズムとする。
	このとき、
	時間$(t_{1}, t_{3})$の間における$\sigma'$に対する$ALG$のバッファ内のパケット数について考える。
	任意のnon-event time $t (\in (t_{1}, t_{3}))$、任意の$y (\ne z)$に対して、
	$h_{ALG}^{(y)}(t) = h_{OPT}^{(y)}(t)$
	が成立する。
	任意のnon-event time $t (\in (t_{1}, t_{2}))$に対して、
	$h_{ALG}^{(z)}(t) = h_{OPT}^{(z)}(t) + 1$
	が成立し、また、
	任意のnon-event time $t (\in (t_{2}, t_{3}))$に対して、
	$h_{ALG}^{(z)}(t) = h_{OPT}^{(z)}(t)$
	が成立する。	
	以上より
	$V_{OPT}(\sigma') \geq V_{ALG}(\sigma') = V_{OPT}(\sigma) - \alpha_{z}$
	が成立する。
	次に、$\sigma'$に対して$PQ$が得る価値を評価する。
	簡単のため、
	$\sigma'$に対する$PQ$を$PQ'$と表記する。
	まず、
	時間$(t_{1}, t_{3})$の間に、
	$PQ$が受理するが、$PQ'$が非受理するpacketが存在しない場合を考える。
	この場合の$PQ'$の利得を評価するために
	時間$t_{1}$後の$PQ$と$PQ'$のバッファ内のpacket数について論じる。
	任意のnon-event time $t (\in (t_{1}, t_{2}))$に対して、
	$\sum_{j = 1}^{m} h_{PQ'}^{(j)}(t) = \sum_{j = 1}^{m} h_{PQ}^{(j)}(t) + 1$
	が成立する。
	任意のnon-event time $\hat{t}$に対して、
	$w(\hat{t}) = \arg \max\{ j \mid h_{PQ'}^{(j)}(\hat{t}) > 0 \}$と定義する。
	具体的には、
	$h_{PQ'}^{(w(t))}(t) = h_{PQ}^{(w(t))}(t) + 1$
	が成立する。
	（性質(a)と呼ぶ。）
	また、
	任意のnon-event time $t (\in (t_{2}, t_{3}))$に対して、
	$\sum_{j = 1}^{m} h_{PQ'}^{(j)}(t) = \sum_{j = 1}^{m} h_{PQ}^{(j)}(t)$
	が成立する。
	ただし、
	$w(t) > z$ならば、
	$h_{PQ'}^{(w(t))}(t) = h_{PQ}^{(w(t))}(t) + 1$
	が成立し、
	$h_{PQ'}^{(z)}(t) = h_{PQ}^{(z)}(t) - 1$
	が成立する。
	$w(t) = z$ならば、
	任意の$j (\in [1, m])$に対して、
	$h_{PQ'}^{(j)}(t) = h_{PQ}^{(j)}(t)$
	が成立する。
	任意のnon-event time $t (> t_{3})$、
	任意の$j (\in [1, m])$に対して、
	$h_{PQ'}^{(j)}(t) = h_{PQ}^{(j)}(t)$
	が成立する。
	以上より、
	$V_{PQ}(\sigma') = V_{PQ}(\sigma) - \alpha_{z}$
	が成立する。
	次に
	$PQ$がacceptし、$PQ'$がrejectするpacketが少なくとも1つは到着する場合を考える。
	$t'$を、
	$PQ$がacceptし、$PQ'$がrejectする様なpacket $p$が初めてarriveするevent timeとする。
	$t' \in (t_{1}, t_{2})$が成立する場合を考える。
	$z$の定義より、そのpacketは$z' \geq z$が成立する$Q^{(z')}$に到着するpacketである。
	性質(a)より、
	任意の$j (\in [1, m])$に対して、
	$h_{PQ'}^{(j)}(t'+) = h_{PQ}^{(j)}(t'+)$
	が成立する。
	よって、
	時間$(t', t_{2})$の間に、$PQ$が受理するpacketは、$PQ'$も受理することが出来る。
	$\sigma'$の定義より、
	$t_{2}$において$PQ$のみが$Q^{(z)}$に到着するpacketを受理するので、
	$h_{PQ'}^{(z)}(t_{2}+) = h_{PQ}^{(z)}(t_{2}+) - 1$
	が成立し、
	任意の$j (\in [1, m])$（ただし、$j \ne z$）に対して、
	$h_{PQ'}^{(j)}(t_{2}+) = h_{PQ}^{(j)}(t_{2}+)$
	が成立する。
	（性質(b)と呼ぶ。）
	$t_{2}$の後に、$PQ$と$PQ'$が受理するpacketが全て同じであれば、
	$V_{PQ}(\sigma') = V_{PQ}(\sigma) - \alpha_{z} - \alpha_{z'}$
	が成立する。
	次に、
	$t_{2}$の後に、
	$PQ$がrejectし、$PQ'$がacceptするpacket $p'$が存在する場合を考える。
	$PQ$の定義と性質(b)より、
	任意のnon-event time $t (> t_{2})$、任意の$z' (\geq z+1)$に対して、
	$h_{PQ'}^{(z')}(t) = h_{PQ}^{(z')}(t)$
	が成立する。
	よって、
	$p'$はある$z'' \leq z$が成立する$Q^{(z'')}$に到着するpacketである。
	$p'$が到着するevent timeを$t''$とする。
	任意の$j (\in [1, m])$に対して、
	$h_{PQ'}^{(j)}(t''+) = h_{PQ}^{(j)}(t''+)$
	が成立する。
	したがって、
	$t''$より後に$PQ$と$PQ'$が受理するpacketは一緒である。
	したがって、
	$V_{PQ}(\sigma') = V_{PQ}(\sigma) - \alpha_{z} - \alpha_{z'} + \alpha_{z''} \leq V_{PQ}(\sigma) - \alpha_{z}$
	が成立する。
	最後に
	$t' \in (t_{2}, t_{3})$が成立する場合を考える。
	上の場合と全く同じ議論が成立する。
	具体的には、
	$t'$の後に、
	$PQ$がrejectし、$PQ'$がacceptする様なpacketは高々1つしか存在しない。
	また、そのpacketはある$z''' \leq z$が成立する$Q^{(z''')}$に到着するpacketである。
	よって、
	$V_{PQ}(\sigma') = V_{PQ}(\sigma) - \alpha_{z} - \alpha_{z'} + \alpha_{z'''} \leq V_{PQ}(\sigma) - \alpha_{z}$
	が成立する。
	以上の議論より、
	$\frac{V_{OPT}(\sigma')}{V_{PQ}(\sigma')} 
		\geq \frac{V_{ALG}(\sigma')}{V_{PQ}(\sigma')} 
		\geq \frac{V_{OPT}(\sigma) - \alpha_{z}}{V_{PQ}(\sigma) - \alpha_{z}}
		> \frac{V_{OPT}(\sigma)}{V_{PQ}(\sigma)}$
	が成立する。
	\fi
	\ifnum \count11 > 0
	\com{（■英語）}
	Let $z$ be the minimum index such that $s_{z}(\sigma) > B$. 
	Then, 
	there exist the three event times $t_{1}, t_{2} (> t_{1})$ and $t_{3} (> t_{2})$ satisfying the following three conditions: 
	(i) $t_{2}$ is the arrival event time when the $(B+1)$st packet which $PQ$ accepts at $Q^{(z)}$ arrives, 
	(ii) 
	$OPT$ does not transmit any packet from $Q^{(z)}$ during time $(t_{1}, t_{2})$, 
	where $t_{1}$ is the event time when $OPT$ transmits a packet from $Q^{(z)}$, 
	(Since $OPT$ accepts any arriving packet by our assumption, $OPT$ certainly transmits at least one packet from $Q^{(z)}$ before $t_{2}$.) 
	and 
	(iii) 
	$PQ$ does not transmit any packet from $Q^{(z)}$ during time $(t_{2}, t_{3})$, 
	where $t_{3}$ is the event time when $PQ$ transmits a packet from $Q^{(z)}$. 
	We construct $\sigma'$ by removing the events at $t_{1}$ and $t_{2}$ from $\sigma$. 
	Suppose that $\frac{V_{OPT}(\sigma)}{V_{PQ}(\sigma)} < \frac{V_{OPT}(\sigma')}{V_{PQ}(\sigma')}$. 
	If we remove some events corresponding to $Q^{(j)}$ in ascending order of index $j$ in $\{ x \mid s_{x}(\sigma) > B \}$, 
	then we can construct an input $\hat{\sigma}$ such that 
	for each $j (\in [1, m])$, 
	$s_{j}(\hat{\sigma}) \leq B$, 
	and 
	$\frac{V_{OPT}(\sigma)}{V_{PQ}(\sigma)} < \frac{V_{OPT}(\hat{\sigma})}{V_{PQ}(\hat{\sigma})}$, which completes the proof. 
	Hence, 
	we next show that $\frac{V_{OPT}(\sigma)}{V_{PQ}(\sigma)} < \frac{V_{OPT}(\sigma')}{V_{PQ}(\sigma')}$. 
	First, 
	we discuss the gain of $OPT$ for $\sigma'$. 
	Let $ALG$ be the offline algorithm for $\sigma'$ such that 
	for each scheduling event $e$ in $\sigma'$, 
	$ALG$ selects the queue which $OPT$ selects at $e$ in $\sigma$. 
	We consider the number of packets in $ALG$'s buffer during time $(t_{1}, t_{3})$ for $\sigma'$. 
	For any non-event time $t (\in (t_{1}, t_{3}))$, and any $y (\ne z)$, 
	$h_{ALG}^{(y)}(t) = h_{OPT}^{(y)}(t)$. 
	For any non-event time $t (\in (t_{1}, t_{2}))$, 
	$h_{ALG}^{(z)}(t) = h_{OPT}^{(z)}(t) + 1$. 
	Also, for any non-event time $t (\in (t_{2}, t_{3}))$, 
	$h_{ALG}^{(z)}(t) = h_{OPT}^{(z)}(t)$. 
	By the above argument,
	$V_{OPT}(\sigma') \geq V_{ALG}(\sigma') = V_{OPT}(\sigma) - \alpha_{z}$. 
	Next, we evaluate the gain of $PQ$ for $\sigma'$. 
	For notational simplicity, 
	we describe $PQ$ for $\sigma'$ as $PQ'$. 
	First, 
	we consider the case where there does not exist any packet which $PQ$ accepts but $PQ'$ rejects during time $(t_{1}, t_{3})$. 
	To evaluate the gain of $PQ'$ in this case, 
	we discuss the numbers of packets which $PQ$ and $PQ'$ store in their buffers after $t_{1}$. 
	For any non-event time $t (\in (t_{1}, t_{2}))$, 
	$\sum_{j = 1}^{m} h_{PQ'}^{(j)}(t) = \sum_{j = 1}^{m} h_{PQ}^{(j)}(t) + 1$. 
	For any non-event time $\hat{t}$, 
	we define $w(\hat{t}) = \arg \max\{ j \mid h_{PQ'}^{(j)}(\hat{t}) > 0 \}$. 
	Specifically, 
	$h_{PQ'}^{(w(t))}(t) = h_{PQ}^{(w(t))}(t) + 1$. 
	(We call this fact the property (a).)
	Moreover, 
	for any non-event time $t (\in (t_{2}, t_{3}))$, 
	$\sum_{j = 1}^{m} h_{PQ'}^{(j)}(t) = \sum_{j = 1}^{m} h_{PQ}^{(j)}(t)$. 
	However, 
	if $w(t) > z$, 
	then $h_{PQ'}^{(w(t))}(t) = h_{PQ}^{(w(t))}(t) + 1$. 
	Also, 
	$h_{PQ'}^{(z)}(t) = h_{PQ}^{(z)}(t) - 1$. 
	If $w(t) = z$, 
	then for any $j (\in [1, m])$, $h_{PQ'}^{(j)}(t) = h_{PQ}^{(j)}(t)$. 
	For any non-event time $t (> t_{3})$ and 
	any $j (\in [1, m])$, 
	$h_{PQ'}^{(j)}(t) = h_{PQ}^{(j)}(t)$. 
	By the above argument, 
	$V_{PQ}(\sigma') = V_{PQ}(\sigma) - \alpha_{z}$ holds. 
	Secondly, 
	we consider the case where there exists at least one packet which $PQ$ accepts but $PQ'$ rejects. 
	Let $t'$ be the first event time when the packet $p$ which $PQ$ accepts but $PQ'$ rejects arrives. 
	Then, suppose that $t' \in (t_{1}, t_{2})$. 
	By the definition of $z$, 
	$p$ arrives at $Q^{(z')}$ such that $z' \geq z$. 
	By the property (a), 
	for $j (\in [1, m])$, 
	$h_{PQ'}^{(j)}(t'+) = h_{PQ}^{(j)}(t'+)$. 
	Thus, 
	packets accepted by $PQ$ during time $(t', t_{2})$ can be accepted by $PQ'$. 
	Only $PQ$ accepts the packet arriving at $Q^{(z)}$ at $t_{2}$ by the definition of $\sigma'$. 
	Hence, 
	$h_{PQ'}^{(z)}(t_{2}+) = h_{PQ}^{(z)}(t_{2}+) - 1$, 
	and 
	for any $j (\in [1, m])$ such that $j \ne z$, 
	$h_{PQ'}^{(j)}(t_{2}+) = h_{PQ}^{(j)}(t_{2}+)$. 
	(We call this fact the property (b).)
	If all the packets which $PQ$ accepts after $t_{2}$ are the same as those accepted by $PQ'$ after $t_{2}$, 
	$V_{PQ}(\sigma') = V_{PQ}(\sigma) - \alpha_{z} - \alpha_{z'}$. 
	Then, 
	we consider the case where there exists at least one packet $p'$ which $PQ$ rejects but $PQ'$ accepts after $t_{2}$. 
	By the greediness of $PQ$ and the property (b), 
	for any non-event time $t (> t_{2})$ and any $z' (\geq z+1)$, 
	$h_{PQ'}^{(z')}(t) = h_{PQ}^{(z')}(t)$. 
	Hence, 
	$p'$ arrives at $Q^{(z'')}$ for some $z'' (\leq z)$. 
	Let $t''$ be the event time when $p'$ arrives. 
	For any $j (\in [1, m])$, 
	$h_{PQ'}^{(j)}(t''+) = h_{PQ}^{(j)}(t''+)$, 
	which means that 
 	all the packets accepted by $PQ$ are equal to those accepted by $PQ'$ after $t''$. 
	Thus, 
	$V_{PQ}(\sigma') = V_{PQ}(\sigma) - \alpha_{z} - \alpha_{z'} + \alpha_{z''} \leq V_{PQ}(\sigma) - \alpha_{z}$. 
	Finally, 
	we consider the case where $t' \in (t_{2}, t_{3})$. 
	By the same argument as the case of $t' \in (t_{1}, t_{2})$, 
	we can prove this case. 
	Specifically, 
	the number of packets which $PQ$ rejects but $PQ'$ accepts after $t'$ is exactly one. 
	This packet arrives at $Q^{(z''')}$, where some $z''' \leq z$. 
	Therefore, 
	$V_{PQ}(\sigma') = V_{PQ}(\sigma) - \alpha_{z} - \alpha_{z'} + \alpha_{z'''} \leq V_{PQ}(\sigma) - \alpha_{z}$. 
	By the above argument, 
	$\frac{V_{OPT}(\sigma')}{V_{PQ}(\sigma')} 
		\geq \frac{V_{ALG}(\sigma')}{V_{PQ}(\sigma')} 
		\geq \frac{V_{OPT}(\sigma) - \alpha_{z}}{V_{PQ}(\sigma) - \alpha_{z}}
		> \frac{V_{OPT}(\sigma)}{V_{PQ}(\sigma)}$. 
	\fi
\end{proof}
%
\fi
%

%
\ifnum \count10 > 0
（■）
定義を与える。
任意の$j (\in [1, m])$に対して、
$s_{j}(\sigma) \leq B$
が成立する様な入力$\sigma$の集合を${\cal S}_{1}$で表す。
補題~\ref{LMA:3.2.3}より、
以下では、
${\cal S}_{1}$に含まれる入力についてのみ解析を行う。	
\fi
\ifnum \count11 > 0
\com{（■英語）}
We give the notation. 
${\cal S}_{1}$ denotes the set of inputs $\sigma$ such that 
for any $j (\in [1, m])$, 
$s_{j}(\sigma) \leq B$. 
In what follows, 
we analyze only inputs in ${\cal S}_{1}$ by Lemma~\ref{LMA:3.2.3}. 
Next, we evaluate the number of extra packets arriving at each good queue using Lemma~\ref{LMA:3.2.2}.
\fi
%

%
\begin{LMA} \label{LMA:3.2.4}
	\ifnum \count10 > 0
	任意の$x (\in [1, n])$に対して、
	$\sum_{i = x}^{n} k_{q_{i}} \leq \sum_{j = q_{x}+1}^{m} s_{j}$
	が成立する。
	\fi
	\ifnum \count11 > 0
	\com{（■英語）}
	For any $x (\in [1, n])$, 
	$\sum_{i = x}^{n} k_{q_{i}} \leq \sum_{j = q_{x}+1}^{m} s_{j}$. 
	\fi
\end{LMA}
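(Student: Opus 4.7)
The plan is to interpret $\sum_{i=x}^{n} k_{q_{i}}$ as the total number of extra packets that arrive at some queue $Q^{(q_{i})}$ with $i \geq x$, and then to embed this collection injectively into the multiset of $PQ$-transmitted packets coming from queues of strictly larger index. The vehicle for the embedding is the matching maintained by the routine in Table~\ref{tab:rotuine}, whose feasibility is exactly Lemma~\ref{LMA:3.2.2}.

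Concretely, I would fix a non-event time $t^{\ast}$ after every event of $\sigma$ has been processed. Every extra packet becomes matched at the instant of its arrival through Case A3 of the routine, so at $t^{\ast}$ all $\sum_{i=x}^{n} k_{q_{i}}$ extra packets that arrived at $Q^{(q_{x})}, Q^{(q_{x+1})}, \ldots, Q^{(q_{n})}$ have matched partners. By Lemma~\ref{LMA:3.2.2}, the partner $p'$ of any such extra packet $p$ is a packet that $PQ$ has already transmitted and satisfies $g(p') > g(p) \geq q_{x}$, hence $g(p') \geq q_{x}+1$. Therefore every partner lies in the multiset of packets that $PQ$ transmits from queues $Q^{(q_{x}+1)}, \ldots, Q^{(m)}$, whose total size is exactly $\sum_{j=q_{x}+1}^{m} s_{j}$.

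The one remaining issue, which I would treat as the main obstacle, is checking that the matching is injective on the transmitted-packet side, so that distinct extra packets really do map to distinct $PQ$-transmissions. This I would verify by a brief inspection of Table~\ref{tab:rotuine}: Case S2.2 is the only step that introduces a new edge, and it does so using the packet $PQ$ has just transmitted, which by the inductive invariant of Lemma~\ref{LMA:3.2.2} is unmatched at that moment; Cases A1, A3, and S1.1 merely reassign the non-transmitted endpoint of an already existing edge and never give a transmitted packet a second partner. Hence the map from extra packets at queues of index at least $q_{x}$ to $PQ$-transmissions from queues of index at least $q_{x}+1$ is a genuine injection, and $\sum_{i=x}^{n} k_{q_{i}} \leq \sum_{j=q_{x}+1}^{m} s_{j}$ follows.
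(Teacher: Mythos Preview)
Your proof is correct and follows essentially the same approach as the paper: both use the matching of Lemma~\ref{LMA:3.2.2} to pair each extra packet in $Q^{(q_x)},\ldots,Q^{(q_n)}$ with a distinct $PQ$-transmission from a queue of index at least $q_x+1$, yielding the inequality by counting. The paper's version leaves the injectivity of the matching implicit (it is used when writing $k_{q_{n-1}} \leq (\sum_{j=q_{n-1}+1}^{m} s_j) - k_{q_n}$, etc.), whereas you spell it out via inspection of the routine; this extra care is welcome but does not constitute a different argument.
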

%
\ifnum \count14 < 1
%
\begin{proof}
	\ifnum \count10 > 0
	補題~\ref{LMA:3.2.2}より、
	入力が終了した時点において、
	各extra packet $p$は$PQ$がtransmitするpacket $p'$に必ずmatchされており、
	また、
	extra packet $p$と$PQ$のpacket $p'$がmatchされている場合、
	$g(p) < g(p')$
	が成立している。
	よって、
	$k_{q_{n}} \leq \sum_{j = q_{n}+1}^{m} s_{j}$
	かつ
	$k_{q_{n-1}} \leq (\sum_{j = q_{n-1}+1}^{m} s_{j}) - k_{q_{n}}$
	かつ
	…
	$k_{q_{1}} \leq (\sum_{j = q_{1}+1}^{m} s_{j}) - \sum_{i = 2}^{n} k_{q_{i}}$
	が成立する。
	よって、
	任意の$x (\in [1, n])$に対して、
	$\sum_{i = x}^{n} k_{q_{i}} \leq \sum_{j = q_{x}+1}^{m} s_{j}$
	が成立する。
	\fi
	\ifnum \count11 > 0
	\com{（■英語）}
	By Lemma~\ref{LMA:3.2.2}, 
	each extra packet $p$ is matched with a packet $p'$ transmitted by $PQ$ at the end of the input. 
	In addition, 
	$g(p) < g(p')$ 
	if an extra packet $p$ is matched with a packet $p'$ of $PQ$. 
	Thus, 
	$k_{q_{n}} \leq \sum_{j = q_{n}+1}^{m} s_{j}$, 
	$k_{q_{n-1}} \leq (\sum_{j = q_{n-1}+1}^{m} s_{j}) - k_{q_{n}}$, $\cdots$, and 
	$k_{q_{1}} \leq (\sum_{j = q_{1}+1}^{m} s_{j}) - \sum_{i = 2}^{n} k_{q_{i}}$. 
	Therefore, 
	for any $x (\in [1, n])$, 
	$\sum_{i = x}^{n} k_{q_{i}} \leq \sum_{j = q_{x}+1}^{m} s_{j}$. 
	\fi
\end{proof}
%
\fi
%

%
\ifnum \count10 > 0
（■未修正）
次のLemma~\ref{LMA:3.2.5}では、
ある good queue $Q^{(q_{i})}$ ($i \in [1, n]$)について、どのようにeventが発生すると、
$Q^{(q_{i})}$に到着する extra packet の数が最大になるかについて議論し、
それは
$k_{q_{i}} = \sum_{j = q_{i} + 1}^{q_{i+1}} s_{j}$
が成立する場合であることを示す。
具体的には、
この式が成立するのは、
$PQ$が各$Q^{(j)}$ ($j \in [q_{i}+1, q_{i+1}]$)から$s_{j}$個のpacketを送信する際に、
$OPT$が必ず$Q^{(q_{i})}$から$\sum_{j = q_{i} + 1}^{q_{i+1}} s_{j}$個のpacketを送信し、
かつ各scheduling eventの直後に$Q^{(q_{i})}$にpacketが到着するという状況である。
この場合、
$\sum_{j = q_{i} + 1}^{q_{i+1}} s_{j}$個のextra packetが$Q^{(q_{i})}$においてacceptされる。
以下の全ての補題の証明中では、
簡単のため、
任意の$j (\in [1, m])$に対して、
$s_{j}(\sigma)$を$s_{j}$と表記し、
$n(\sigma)$を$n$と表記し、
任意の$i (\in [1, n])$に対して、
$q_{i}(\sigma)$を$q_{i}$と表記し、
任意の$j (\in [1, m])$に対して、
$k_{j}(\sigma)$を$k_{j}$と表記する。
\fi
\ifnum \count11 > 0
\com{（■英語）}
Now we gradually gain all the properties of ${\cal S}^{*}$ in the following lemmas 
while proving ${\cal S}^{*}$ contains inputs $\sigma$ such that $\frac{V_{OPT}(\sigma)}{V_{PQ}(\sigma)}$ is maximized. 
Specifically, for $i = 1, \ldots, 4$, 
we construct some subset ${\cal S}_{i+1}$ from the set ${\cal S}_i$ in each of the following lemmas, 
and eventually we can gain ${\cal S}^{*}$ from ${\cal S}_{5}$. 
(We have already obtained ${\cal S}_{1}$ in Lemma~\ref {LMA:3.2.3}.) 
It is difficult to show all the properties of ${\cal S}^{*}$ in one lemma, 
and thus we progressively give the definitions of the ${\cal S}_{i+1}$ that has more restrictive properties than ${\cal S}_{i}$. 
Next in Lemma~\ref{LMA:3.2.5}, 
we discuss the condition of events where the number of extra packets accepted into a good queue $Q^{(q_{i})}$ ($i \in [1, n]$) is maximized, and show that it is true when $k_{q_{i}} = \sum_{j = q_{i} + 1}^{q_{i+1}} s_{j}$. 
%
%
%
Throughout the proofs of all the following lemmas,
we drop $\sigma$ from $s_{j}(\sigma)$, $n(\sigma)$, $q_{i}(\sigma)$ and $k_{j}(\sigma)$. 
\fi
%

%
\begin{LMA} \label{LMA:3.2.5}
	\ifnum \count10 > 0
	任意の入力$\sigma (\in {\cal S}_1)$に対して、
	必ず次の様な入力$\hat{\sigma} (\in {\cal S}_1)$が存在する。
	(i) 各$i (\in [1, n(\hat{\sigma})])$に対して、
		$k_{q_{i}(\hat{\sigma})}(\hat{\sigma}) = \sum_{j = q_{i}(\hat{\sigma}) + 1}^{q_{i+1}(\hat{\sigma})} s_{j}(\hat{\sigma})$が成立する。
	(ii) $q_{1}(\hat{\sigma}) - 1 \geq 1$の場合、任意の$j (\in [1, q_{1}(\hat{\sigma})-1])$に対して、$s_{j}(\hat{\sigma}) = 0$が成立し、
	(iii) $\frac{V_{OPT}(\sigma)}{V_{PQ}(\sigma)} \leq \frac{V_{OPT}(\hat{\sigma})}{V_{PQ}(\hat{\sigma})}$が成立する。
	\fi
	\ifnum \count11 > 0
	\com{（■英語）}
	For any input $\sigma \in {\cal S}_1$, 
	there exists an input $\hat{\sigma} (\in {\cal S}_1)$ such that 
	(i) for any $i (\in [1, n(\hat{\sigma})])$, 
		$k_{q_{i}(\hat{\sigma})}(\hat{\sigma}) = \sum_{j = q_{i}(\hat{\sigma}) + 1}^{q_{i+1}(\hat{\sigma})} s_{j}(\hat{\sigma})$, 
	(ii) for any $j (\in [1, q_{1}(\hat{\sigma})-1])$, $s_{j}(\hat{\sigma}) = 0$ if $q_{1}(\hat{\sigma}) - 1 \geq 1$, and 
	(iii) $\frac{V_{OPT}(\sigma)}{V_{PQ}(\sigma)} \leq \frac{V_{OPT}(\hat{\sigma})}{V_{PQ}(\hat{\sigma})}$. 
	\fi
\end{LMA}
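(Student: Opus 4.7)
The plan is to derive $\hat{\sigma}$ from $\sigma$ by two successive modifications, showing that each preserves membership in ${\cal S}$ and does not decrease the ratio $V_{OPT}/V_{PQ}$.

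Step 1 handles condition (ii). If $s_{j}(\sigma) > 0$ for some $j < q_{1}$, I would remove all arrival events at $Q^{(j)}$ from the input. Since $Q^{(j)}$ is not a good queue, $k_{j} = 0$, so $PQ$ and $OPT$ accept identical packets at $Q^{(j)}$; since both algorithms are work-conserving and the input can be completed, both transmit value exactly $\alpha_{j} s_{j}$ from $Q^{(j)}$. After the removal, $V_{OPT}$ and $V_{PQ}$ each decrease by $\alpha_{j} s_{j}$; since $V_{OPT} \geq V_{PQ}$, the inequality $(V_{OPT} - \alpha_{j}s_{j})/(V_{PQ} - \alpha_{j}s_{j}) \geq V_{OPT}/V_{PQ}$ holds. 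Applying this in ascending order of $j$ for each $j < q_{1}$ yields an input satisfying (ii), and membership in ${\cal S}$ is preserved because removing arrivals can only decrease $s_{j}$.

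Step 2 handles condition (i). Lemma~\ref{LMA:3.2.4} gives $\sum_{i = x}^{n} k_{q_{i}} \leq \sum_{j = q_{x}+1}^{m} s_{j}$, and condition (i) is precisely the case of equality, because $\sum_{i \geq x} \sum_{j = q_{i}+1}^{q_{i+1}} s_{j} = \sum_{j = q_{x}+1}^{m} s_{j}$ by telescoping (using $q_{n+1} = m$). Whenever $k_{q_{i}} < \sum_{j = q_{i}+1}^{q_{i+1}} s_{j}$, the matching argument of Lemma~\ref{LMA:3.2.2} guarantees that some packet transmitted by $PQ$ from a queue $Q^{(j)}$ with $j > q_{i}$ has created a free cell in $Q^{(q_{i})}$ of $OPT$ that is not yet consumed by any existing extra. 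I would insert an additional arrival at $Q^{(q_{i})}$ at a moment when $PQ$'s buffer for $Q^{(q_{i})}$ is full and $OPT$'s has a free cell; the new packet is then rejected by $PQ$, accepted by $OPT$, and becomes an extra. This raises $V_{OPT}$ by $\alpha_{q_{i}}$ while leaving $V_{PQ}$ unchanged, so $V_{OPT}/V_{PQ}$ strictly increases. Iterating until all the bounds tighten produces $\hat{\sigma}$ satisfying (i), (ii), and (iii).

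The main obstacle is the bookkeeping in Step 2: I must argue that a suitable insertion time always exists whenever the bound of Lemma~\ref{LMA:3.2.4} is not yet saturated, and that inserting an arrival at $Q^{(q_{i})}$ does not disturb the behavior of $PQ$ on other queues or the matched extras at queues $q_{i'} \neq q_{i}$. The proof should track the evolution of $h_{PQ}^{(q_{i})}$ and $h_{OPT}^{(q_{i})}$ under the modification and verify via the invariant $g(c) < g(p')$ from Lemma~\ref{LMA:3.2.2} that an unmatched $PQ$-transmitted packet at $Q^{(j)}$ with $j > q_{i}$ corresponds to a reachable free cell in $Q^{(q_{i})}$. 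Finally, $\hat{\sigma} \in {\cal S}$ is maintained throughout because insertions only create rejected packets for $PQ$ and never alter $s_{j}$.
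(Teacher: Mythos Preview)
Your Step~1 is essentially sound and matches in spirit what the paper does (the paper simply omits all packets destined for queues $j<q_1$ when building the new input).

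Step~2, however, has a real gap. You claim that whenever $k_{q_i}<\sum_{j=q_i+1}^{q_{i+1}}s_j$ one can insert a fresh arrival at $Q^{(q_i)}$ at a moment when $h_{PQ}^{(q_i)}=B$ and $h_{OPT}^{(q_i)}<B$, thereby raising $V_{OPT}$ by $\alpha_{q_i}$ and leaving $V_{PQ}$ unchanged. The problem is that the new packet, once accepted by $OPT$, occupies a cell in $OPT$'s $Q^{(q_i)}$ until the next time $OPT$ serves that queue; if another arrival at $Q^{(q_i)}$ occurs before then, $OPT$ may now be forced to reject it. In that case the ``new'' extra simply displaces an old one and $V_{OPT}$ does not increase. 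Concretely, take $m=3$, $B=1$, with one packet at each of $Q^{(1)},Q^{(2)},Q^{(3)}$ initially, one scheduling event, then one packet at $Q^{(1)}$: here $s_1=s_2=s_3=1$, $k_1=1$, and the target is $k_1=s_2+s_3=2$. The only window with $h_{PQ}^{(1)}=B$ and $h_{OPT}^{(1)}<B$ is just after the first scheduling event, but inserting there fills $OPT$'s $Q^{(1)}$ so that the subsequent original arrival at $Q^{(1)}$ can no longer be accepted by \emph{any} offline algorithm. Lemma~\ref{LMA:3.2.2} guarantees that free cells are matched to higher-queue $PQ$ transmissions, but it does not guarantee that a free cell coincides with $h_{PQ}^{(q_i)}=B$, nor that exploiting it is compatible with keeping all previously counted extras.

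The paper sidesteps this entirely: rather than modifying $\sigma$ incrementally, it builds $\sigma'$ from scratch with the \emph{same} transmission counts $s_j$ (for $j\ge q_1$) but a rigid structure---all $s_j$ packets arrive up front, then scheduling events alternate with single arrivals at $Q^{(q_{n-x})}$. This forces $k_{q_i}(\sigma')=\sum_{j=q_i+1}^{q_{i+1}}s_j$ by construction. The comparison $V_{OPT}(\sigma')\ge V_{OPT}(\sigma)-\sum_{j<q_1}\alpha_js_j$ then follows from Lemma~\ref{LMA:3.2.4} via an Abel-summation argument (using that $\alpha_{q_1}\le\cdots\le\alpha_{q_n}$ and $\sum_{i\ge x}k_{q_i}\le\sum_{i\ge x}\sum_{j=q_i+1}^{q_{i+1}}s_j$ for every $x$). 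If you want to salvage the incremental approach you would need to insert scheduling events as well, but then $V_{PQ}$ changes and the accounting becomes at least as delicate as the paper's direct construction.
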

%
\ifnum \count14 < 1
%
\begin{proof}
	\ifnum \count10 > 0
	任意の入力$\sigma \in {\cal S}_1$に対して、
	$\sigma$から次の様な入力$\sigma'$を構成する。
	まず、
	時間$(0, 1)$の間に各$Q^{(j)} \hspace{1mm} (j \in [q_{1}, m])$にpacketが到着する$s_{j}$回のarrival eventが発生する。
	${\cal S}_1$の定義より、
	$s_{j} \leq B$が成立するので、$PQ$はそれらの到着するpacketを全て受理する。
	時刻1以降に、
	$PQ$がacceptできない$\sum_{i = 1}^{n} k_{q_{i}}$個のpacketだけがarriveする。
	詳しく言うと、
	任意の$i (\in [1, n])$に対して、
	$a_{i} = \sum_{j = q_{n+1-i}+1}^{q_{n+2-i}} s_{j}$と定義し、
	$a_{0} = 0$
	と定義する。
	任意の$x (\in [0, n-1])$に対して、
	整数時間$t = (\sum_{j = 0}^{x} a_{j} ) + 1, \ldots, \sum_{j = 0}^{x+1} a_{j}$に、
	scheduling eventが発生し、
	時刻$t + \frac{1}{2}$において、
	$Q^{(q_{n-x})}$にpacketが到着する様なarrival eventが発生する。
	時刻$(\sum_{j = 0}^{n} a_{j})+1$の後は、
	十分な数のscheduling eventが発生する。
	このとき、
	$PQ$は$t$に、$j (\in [q_{n-x}+1, q_{n-x+1}])$が成立する$Q^{(j)}$からpacketをtransmitする。
	また、
	あるオフラインアルゴリズム$ALG$を考える。
	$ALG$は$t$に、$Q^{(q_{n-x})}$からpacketをtransmitする。
	このとき、
	任意の$i (\in [1, n])$に対して、
	$Q^{(q_{i})}$はextra packetが到着するので、
	$s_{q_{i}} = B$が成立している。
	よって、
	任意の$i (\in [1, n])$に対して、
	$h_{PQ}^{(q_{i})}(1-) = B$が成立しているので、
	$PQ$は各時刻$t + \frac{1}{2}$において、到着する全てのpacketをacceptできない。
	しかし、$ALG$は全てのpacketを受理することが出来る。
	すなわち、$ALG$は最適なオフラインアルゴリズムの1つである。
	このとき、
	$n(\sigma') = n$
	かつ
	任意の$i (\in [1, n])$に対して、
	$q_{i}(\sigma') = q_{i}$
	が成立する。
	以上より、
	$V_{PQ}(\sigma') = V_{PQ}(\sigma) - \sum_{j = 1}^{q_{1}-1} \alpha_{j} s_{j}$
	が成立する。
	また、各$i (\in [1, n])$に対して、
	$k_{q_{i}}(\sigma') = \sum_{j = q_{i}+1}^{q_{i+1}} s_{j}$
	が成立する。
	以上の式より、
	$V_{ALG}(\sigma') 
		= V_{PQ}(\sigma') + \sum_{i = 1}^{n} \alpha_{q_{i}} k_{q_{i}}(\sigma')
		= V_{PQ}(\sigma) + \sum_{i = 1}^{n} \alpha_{q_{i}} ( \sum_{j = q_{i}+1}^{q_{i+1}} s_{j} ) - \sum_{j = 1}^{q_{1}-1} \alpha_{j} s_{j}$
	が成立する。
	補題~\ref{LMA:3.2.4}より、
	$\sum_{i = x}^{n} k_{q_{i}} \leq \sum_{j = q_{x}+1}^{m} s_{j}$
	が成立するので、
	上の不等式から各$s_{j}$を消去すると、
	$V_{ALG}(\sigma') 
		\geq V_{PQ}(\sigma) + \sum_{i = 1}^{n} \alpha_{q_{i}} k_{q_{i}} - \sum_{j = 1}^{q_{1}-1} \alpha_{j} s_{j}
		= V_{OPT}(\sigma) - \sum_{j = 1}^{q_{1}-1} \alpha_{j} s_{j}$
	が成立する。
	以上より、
	$\frac{V_{OPT}(\sigma')}{V_{PQ}(\sigma')} = \frac{V_{ALG}(\sigma')}{V_{PQ}(\sigma')}
		\geq \frac{V_{OPT}(\sigma) - \sum_{j = 1}^{q_{1}-1} \alpha_{j} s_{j} }{V_{PQ}(\sigma) - \sum_{j = 1}^{q_{1}-1} \alpha_{j} s_{j}}
		\geq \frac{V_{OPT}(\sigma)}{V_{PQ}(\sigma)}
	$
	が成立する。
	また、$\sigma'$の定義より、
	$\sigma'$は条件(ii)を満たし、${\cal S}_1$に含まれる。
	\com{（■S)}
	\fi
	\ifnum \count11 > 0
	\com{（■英語）}
	For any input $\sigma \in {\cal S}_1$, 
	we construct $\sigma'$ from $\sigma$ according to the following steps. 
	First, 
	for each $j (\in [q_{1}, m])$, 
	$s_{j}$ events at which $s_{j}$ packets arrive at $Q^{(j)}$ occur during time $(0, 1)$. 
	Since $s_{j} \leq B$ by the definition of ${\cal S}_1$, 
	$PQ$ accepts all the packets which arrive at these events. 
	$\sum_{i = 1}^{n} k_{q_{i}}$ packets arrive after time 1, and $PQ$ cannot accept them.  
	Specifically, 
	for any $i (\in [1, n])$, 
	we define $a_{i} = \sum_{j = q_{n+1-i}+1}^{q_{n+2-i}} s_{j}$ and 
	$a_{0} = 0$. 
	Then, 
	for each $x (\in [0, n-1])$, 
	a scheduling event occurs at each integer time $t = (\sum_{j = 0}^{x} a_{j} ) + 1, \ldots, \sum_{j = 0}^{x+1} a_{j}$, and 
	an arrival event where a packet arrives at $Q^{(q_{n-x})}$ occurs at each time $t + \frac{1}{2}$. 
	After time $(\sum_{j = 0}^{n} a_{j}) + 1$, 
	sufficient scheduling events to transmit all the arriving packets occur. 
	For these scheduling events, 
	$PQ$ transmits a packet from $Q^{(j)}$ at $t$, 
	where $j$ is an integer between $q_{n-x}+1$ and $q_{n-x+1}$. 
	Also, 
	let $ALG$ be an offline algorithm. 
	$ALG$ transmits a packet from $Q^{(q_{n-x})}$ at $t$. 
	Since for any $i (\in [1, n])$, at least one extra packet arrives at $Q^{(q_{i})}$, 
	$s_{q_{i}} = B$ holds. 
	Hence, 
	since for any $i (\in [1, n])$, 
	$h_{PQ}^{(q_{i})}(1-) = B$, 
	$PQ$ cannot accept the packet which arrives at each $t + \frac{1}{2}$. 
	However, 
	$ALG$ can accept all these packets, which means that 
	$ALG$ is an optimal offline algorithm.
	Then, 
	$n(\sigma') = n$, and 
	for any $i (\in [1, n])$, 
	$q_{i}(\sigma') = q_{i}$. 
	By the above argument, 
	$V_{PQ}(\sigma') = V_{PQ}(\sigma) - \sum_{j = 1}^{q_{1}-1} \alpha_{j} s_{j}$. 
	Furthermore, for each $i (\in [1, n])$, 
	$k_{q_{i}}(\sigma') = \sum_{j = q_{i}+1}^{q_{i+1}} s_{j}$. 
	By these equalities, 
	$V_{ALG}(\sigma') 
		= V_{PQ}(\sigma') + \sum_{i = 1}^{n} \alpha_{q_{i}} k_{q_{i}}(\sigma')
		= V_{PQ}(\sigma) + \sum_{i = 1}^{n} \alpha_{q_{i}} ( \sum_{j = q_{i}+1}^{q_{i+1}} s_{j} ) - \sum_{j = 1}^{q_{1}-1} \alpha_{j} s_{j} 
		= V_{PQ}(\sigma) + \alpha_{q_{1}} ( \sum_{j = q_{1}+1}^{q_{n+1}} s_{j} ) + \sum_{x = 2}^{n} (\alpha_{q_{x}} - \alpha_{q_{x-1}}) ( \sum_{j = q_{x}+1}^{q_{n+1}} s_{j} ) - \sum_{j = 1}^{q_{1}-1} \alpha_{j} s_{j}$. 
	Since $\sum_{i = x}^{n} k_{q_{i}} \leq \sum_{j = q_{x}+1}^{m} s_{j}$ by Lemma~\ref{LMA:3.2.4} and $q_{n+1}=m$, 
	$V_{ALG}(\sigma') 
		\geq V_{PQ}(\sigma) + \alpha_{q_{1}} ( \sum_{i = 1}^{n} k_{q_{i}} ) + \sum_{x = 2}^{n} (\alpha_{q_{x}} - \alpha_{q_{x-1}}) ( \sum_{i = x}^{n} k_{q_{i}} ) - \sum_{j = 1}^{q_{1}-1} \alpha_{j} s_{j}
		= V_{PQ}(\sigma) + \sum_{i = 1}^{n} \alpha_{q_{i}} k_{q_{i}} - \sum_{j = 1}^{q_{1}-1} \alpha_{j} s_{j}
		= V_{OPT}(\sigma) - \sum_{j = 1}^{q_{1}-1} \alpha_{j} s_{j}$. 
	Therefore, 
	$\frac{V_{OPT}(\sigma')}{V_{PQ}(\sigma')} = \frac{V_{ALG}(\sigma')}{V_{PQ}(\sigma')}
		\geq \frac{V_{OPT}(\sigma) - \sum_{j = 1}^{q_{1}-1} \alpha_{j} s_{j} }{V_{PQ}(\sigma) - \sum_{j = 1}^{q_{1}-1} \alpha_{j} s_{j}}
		\geq \frac{V_{OPT}(\sigma)}{V_{PQ}(\sigma)}
	$. 
	Moreover, 
	by the definition of $\sigma'$, 
	$\sigma'$ satisfies the condition (ii) in the statement, 
	which means that ${\cal S}_1$ includes $\sigma'$. 
	\fi
\end{proof}
%
\fi
%

%

%
\ifnum \count10 > 0
この補題を踏まえて、次の様な入力の集合を定義する。
${\cal S}_{2}$は、以下が成立する様な任意の入力$\sigma (\in {\cal S}_1)$の集合を表す：
\com{（■S)}
(i)
各$i (\in [1, n])$に対して、
$k_{q_{i}} = \sum_{j = q_{i} + 1}^{q_{i+1}} s_{j}$
が成立し、
(ii)
各$j (\in [q_{1}, m])$に対して、
$s_{j} \leq B$
が成立し、
(iii)
$q_{1} - 1 \geq 1$が成立するならば、
任意の$j (\in [1, q_{1}-1])$に対して、$s_{j} = 0$が成立する。
\fi
\ifnum \count11 > 0
\com{（■英語）}
In light of the above lemma, we introduce the next set of inputs. 
${\cal S}_{2}$ denotes the set of inputs $\sigma (\in {\cal S}_1)$ satisfying the following conditions: 
(i)
for any $i (\in [1, n])$, 
$k_{q_{i}} = \sum_{j = q_{i} + 1}^{q_{i+1}} s_{j}$, 
(ii)
for any $j (\in [q_{1}, m])$, 
$s_{j} \leq B$, and 
(iii)
for any $j (\in [1, q_{1}-1])$, $s_{j} = 0$ 
if $q_{1} - 1 \geq 1$. 
\fi
%

%
\begin{LMA} \label{LMA:3.2.6}
	\ifnum \count10 > 0
	$\sigma (\in {\cal S}_{2})$を、
	ある$z (\leq n(\sigma)-1)$が存在して、
	$q_{z}(\sigma) + 1 < q_{z+1}(\sigma)$が成立する様な任意の入力とする。
	このとき、次の様な入力$\hat{\sigma} (\in {\cal S}_{2})$が存在する。
	(i) 各$i (\in [1, n(\hat{\sigma})-1])$に対して、
		$q_{i}(\hat{\sigma}) + 1 = q_{i+1}(\hat{\sigma})$と
		$k_{q_{i}(\hat{\sigma})}(\hat{\sigma}) = B$が成立し、
	(ii) $\frac{V_{OPT}(\sigma)}{V_{PQ}(\sigma)} \leq \frac{V_{OPT}(\hat{\sigma})}{V_{PQ}(\hat{\sigma})}$が成立する。
	\fi
	\ifnum \count11 > 0
	\com{（■英語）}
	Let $\sigma (\in {\cal S}_{2})$ be an input such that 
	for some $z (\leq n(\sigma)-1)$, 
	$q_{z}(\sigma) + 1 < q_{z+1}(\sigma)$. 
	Then, there exists an input $\hat{\sigma} (\in {\cal S}_{2})$ such that 
	(i) for each $i (\in [1, n(\hat{\sigma})-1])$, 
		$q_{i}(\hat{\sigma}) + 1 = q_{i+1}(\hat{\sigma})$ and 
		$k_{q_{i}(\hat{\sigma})}(\hat{\sigma}) = B$, and 
	(ii) $\frac{V_{OPT}(\sigma)}{V_{PQ}(\sigma)} \leq \frac{V_{OPT}(\hat{\sigma})}{V_{PQ}(\hat{\sigma})}$. 
	\fi
\end{LMA}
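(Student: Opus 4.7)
The plan is to construct $\hat{\sigma}$ explicitly as a worst-case consecutive-good-queue instance, then to compare the two ratios. Writing $q_i = q_i(\sigma)$ and $n = n(\sigma)$, I take $\hat{\sigma}$ to be the input whose good queues are exactly $\{q_1, q_1+1, \ldots, q_n\}$, with $s_j(\hat{\sigma}) = B$ on $[q_1, q_n+1]$ and zero elsewhere, and $k_{q_i(\hat{\sigma})}(\hat{\sigma}) = B$ for every $i$.

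The explicit construction reuses the two-phase device of Lemma~\ref{LMA:3.2.5}, specialized to the range $[q_1, q_n+1]$: during time $(0,1)$, $B$ arrivals land at each of $Q^{(q_1)}, \ldots, Q^{(q_n+1)}$, all accepted by $PQ$; then, iterating $j = q_n+1, q_n, \ldots, q_1+1$, a block of $B$ pairs (scheduling event followed by an arrival at $Q^{(j-1)}$) is inserted, so $PQ$ drains $Q^{(j)}$ while the $B$ arrivals at the still-full $Q^{(j-1)}$ are rejected and become extras; finally $B$ additional scheduling events drain $Q^{(q_1)}$. By construction $\hat{\sigma} \in {\cal S}_2$ and condition (i) of the statement holds. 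Taking as offline witness the algorithm that always schedules the lowest-indexed non-empty queue gives
\[
\frac{V_{OPT}(\hat{\sigma})}{V_{PQ}(\hat{\sigma})} \;\geq\; 1 + \frac{\sum_{j=q_1}^{q_n} \alpha_j}{\sum_{j=q_1}^{q_n+1} \alpha_j}.
\]
For $\sigma$, conditions (i) and (iii) of ${\cal S}_2$ together with the identity $V_{OPT}(\sigma) = V_{PQ}(\sigma) + \sum_i \alpha_{q_i} k_{q_i}$ give
\[
\frac{V_{OPT}(\sigma)}{V_{PQ}(\sigma)} \;=\; 1 + \frac{\sum_{i=1}^n \alpha_{q_i} \sum_{j=q_i+1}^{q_{i+1}} s_j}{\sum_{j=q_1}^m \alpha_j s_j}.
\]

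The main obstacle is showing that the second ratio is bounded above by the first. A direct per-index mediant bound on $\alpha_{q_i}/\alpha_j$ does \emph{not} suffice: for $j$ close to $q_{i+1}$ the local ratio $\alpha_{q_i}/\alpha_j$ can exceed the target $\sum_{j'=q_1}^{q_n} \alpha_{j'} / \sum_{j'=q_1}^{q_n+1} \alpha_{j'}$. My plan is to invoke Lemma~\ref{LMA:3.2.4}, whose cross-index constraint $\sum_{i \ge x} k_{q_i} \le \sum_{j > q_x} s_j$ prevents those per-index extremes from being achieved simultaneously. By iteratively pushing each $s_j$ upward toward $B$, each time applying the mediant inequality $a/b \le (a+c)/(b+d)$ whenever $a/b \le c/d$, one reduces the extremal $\sigma$ to the uniform case $s_j = B$ across $[q_1, q_n+1]$, which is exactly $\hat{\sigma}$. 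Setting up this iterative reduction so that both (a) the ${\cal S}_2$ structural conditions are preserved at every step and (b) the ratio weakly increases will be the delicate part of the full argument.
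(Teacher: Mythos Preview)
Your choice of $\hat{\sigma}$ is wrong, not just the argument leading to it. By setting $s_j(\hat{\sigma})=0$ for every $j>q_n+1$ you throw away the tail contribution to $k_{q_n}$, and this can make your $\hat{\sigma}$ strictly worse than $\sigma$. Concretely, take $m=5$, all $\alpha_j=1$, good queues $q_1=1,\ q_2=3$, and $s_1=\cdots=s_5=B$; this is in ${\cal S}_2$ (it is exactly the instance Lemma~\ref{LMA:3.2.5} produces). Then $k_1=s_2+s_3=2B$, $k_3=s_4+s_5=2B$, so
\[
\frac{V_{OPT}(\sigma)}{V_{PQ}(\sigma)}=1+\frac{4B}{5B}=\frac{9}{5},
\]
whereas your $\hat{\sigma}$ has good queues $\{1,2,3\}$, $s_1=\cdots=s_4=B$, $s_5=0$, $k_1=k_2=k_3=B$, and ratio $1+3B/4B=7/4<9/5$. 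So condition (ii) fails for your $\hat{\sigma}$, and no amount of mediant bookkeeping on the numerator side will repair this: the target itself is too small. Your ``iterative reduction'' sketch would have to \emph{increase} some $s_j$ beyond $B$ or re-introduce the tail to reach $\sigma$'s ratio, which contradicts your chosen endpoint.

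The paper's proof avoids this by never discarding the tail. It performs a \emph{local} step: pick a gap $q_z+1<q_{z+1}$, set only $s'_{q_{z+1}-1}=B$ (all other $s_j$ unchanged), and declare $q_{z+1}-1$ a new good queue. Then $V_{PQ}$ increases by $\alpha_{q_{z+1}-1}(B-s_{q_{z+1}-1})$, while the extras sum changes by $-\alpha_{q_z}s_{q_{z+1}-1}+\alpha_{q_{z+1}-1}B\ge \alpha_{q_{z+1}-1}(B-s_{q_{z+1}-1})$ using $\alpha_{q_{z+1}-1}\ge\alpha_{q_z}$. Since the extras sum is at most $V_{PQ}$ (every extra is matched to a higher-valued $PQ$ packet by Lemma~\ref{LMA:3.2.2}), the mediant inequality $(a+c)/(b+c)\ge a/b$ for $a\le b$, $c\ge0$ gives the monotonicity. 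Iterating closes all gaps while \emph{preserving} $s_j$ for $j>q_n$; those values are only whittled down in the later Lemmas~\ref{LMA:3.2.7}--\ref{LMA:3.2.9}.
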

\ifnum \count14 < 1
\begin{proof}
	\ifnum \count10 > 0
	任意の$j (\in [1, m])$（ただし、$j \ne q_{z+1}-1$）に対して、
	$s'_{j} = s_{j}$
	と定義する。
	また、
	$s'_{q_{z+1}-1} = B$
	と定義する。
	（Appendix~\ref{ap.sec:3}の図~\ref{fig:L37}参照）
	次の様に$\sigma$から入力$\sigma'$を構成する。
	この構成の仕方は、補題~\ref{LMA:3.2.5}に類似している。
	まず、
	時間$(0, 1)$の間に各$Q^{(j)} \hspace{1mm} (j \in [q_{1}, m])$にpacketが到着する$s'_{j}$回のarrival eventが発生する。
	定義より、
	$s'_{j} \leq B$が成立するので、
	$PQ$はそれらの到着するpacketを全て受理する。
	ここで、
	任意の$i (\in [1, z])$に対して、
	$q'_{i} = q_{i}$と定義し、
	$q'_{z+1} = q_{z+1} - 1$と定義し、
	任意の$i (\in [z+1, n+1])$に対して、
	$q'_{i+1} = q_{i}$と定義する。
	そのとき、
	任意の$i (\in [1, n+1])$に対して、
	$a_{i} = \sum_{j = q'_{n+2-i}+1}^{q'_{n+3-i}} s'_{j}$と定義し、
	$a_{0} = 0$と定義する。
	更に、
	任意の$x (\in [0, n])$に対して、
	整数時間$t = (\sum_{j = 0}^{x} a_{j} ) + 1, \ldots, \sum_{j = 0}^{x+1} a_{j}$に、
	scheduling eventが発生し、
	時刻$t + \frac{1}{2}$において、
	$Q^{(q'_{n-x+1})}$にpacketが到着する様なarrival eventが発生する。
	時刻$(\sum_{j = 0}^{n+1} a_{j})+1$の後は、
	十分な数のscheduling eventが発生する。
	このとき、
	$PQ$は$t$に、$j (\in [q'_{n-x+1}+1, q'_{n-x+2}])$が成立する$Q^{(j)}$からpacketをtransmitする。
	また、
	あるオフラインアルゴリズム$ALG$を考える。
	$ALG$は$t$に、$Q^{(q'_{n-x+1})}$からpacketをtransmitする。
	$q'_{i}$の定義より
	任意の$i (\in [1, n+1])$に対して、
	$h_{PQ}^{(q'_{i})}(1-) = B$が成立しているので、
	$PQ$は各時刻$t + \frac{1}{2}$において、到着する全てのpacketをacceptできない。
	しかし、$ALG$は全てのpacketを受理することが出来る。
	すなわち、$ALG$は最適なオフラインアルゴリズムである。
	以上より、
	$V_{PQ}(\sigma') = V_{PQ}(\sigma) + \alpha_{q_{z+1}-1} (B - s_{q_{z+1}-1})$
	が成立する。
	また、
	任意の$j (\ne q_{z}, q_{z+1}-1)$に対して、
	$k_{j}(\sigma') = k_{j}$
	が成立し、
	$k_{q_{z}}(\sigma') = k_{q_{z}} - s_{q_{z+1}-1}$
	が成立し、
	$k_{q_{z+1}-1}(\sigma') = B$
	が成立する。
	また、
	任意の$i (\in [1, n+1])$に対して、
	$q_{i}(\sigma') = q'_{i}$
	が成立する。
	更に、
	$V_{OPT}(\sigma') = V_{ALG}(\sigma') = V_{PQ}(\sigma') + \sum_{i = 1}^{n(\sigma')} \alpha_{q_{i}(\sigma')} k_{q_{i}(\sigma')}(\sigma')$
	が成立する。
	以上の式より、
	$\sum_{i = 1}^{n(\sigma')} \alpha_{q_{i}(\sigma')} k_{q_{i}(\sigma')}(\sigma')
		= (\sum_{i = 1}^{n} \alpha_{q_{i}} k_{q_{i}}) - \alpha_{q_{z}} s_{q_{z+1}-1} + \alpha_{q_{z+1}-1} B
		\geq (\sum_{i = 1}^{n} \alpha_{q_{i}} k_{q_{i}}) + \alpha_{q_{z+1}-1} (B - s_{q_{z+1}-1})
	$
	が成立し、
	$\frac{ \sum_{i = 1}^{n(\sigma')} \alpha_{q_{i}(\sigma')} k_{q_{i}(\sigma')}(\sigma') }{ V_{PQ}(\sigma') }
		\geq \frac{ (\sum_{i = 1}^{n} \alpha_{q_{i}} k_{q_{i}}) + \alpha_{q_{z+1}-1} (B - s_{q_{z+1}-1}) }
			{ V_{PQ}(\sigma) + \alpha_{q_{z+1}-1} (B - s_{q_{z+1}-1}) }
		\geq \frac{ \sum_{i = 1}^{n} \alpha_{q_{i}} k_{q_{i}} }{ V_{PQ}(\sigma) }
	$
	が成立する。
	よって、
	$\frac{V_{OPT}(\sigma')}{V_{PQ}(\sigma')} 
		\geq \frac{ V_{PQ}(\sigma') + \sum_{i = 1}^{n(\sigma')} \alpha_{q_{i}(\sigma')} k_{q_{i}(\sigma')}(\sigma') }
					{ V_{PQ}(\sigma') }
		\geq 1 + \frac{ \sum_{i = 1}^{n} \alpha_{q_{i}} k_{q_{i}} }{ V_{PQ}(\sigma) }
		= \frac{V_{OPT}(\sigma)}{V_{PQ}(\sigma)} 
	$
	が成立する。
	上記の$\sigma'$の定義より、
	$\sigma' \in {\cal S}_{2}$が成立している。
	よって、
	以上の議論より、再帰的に
	$q_{z'} + 1 < q_{z'+1}$が成立する様な任意の$z'$に対して、
	上記の様に新しい入力を構成すると、
	題意をみたす入力を構成することができる。
	\fi
	\ifnum \count11 > 0
	\com{（■英語）}
	For any $j (\in [1, m])$ such that $j \ne q_{z+1}-1$, 
	we define $s'_{j} = s_{j}$. 
	Also, we define $s'_{q_{z+1}-1} = B$. 
	(See Figure~\ref{fig:L37}.)
	We construct $\sigma'$ from $\sigma$ in the following way. 
	This approach is similar to those in the proof of Lemma~\ref{LMA:3.2.5}. 
	First, 
	for each $j (\in [q_{1}, m])$, 
	$s'_{j}$ events at which $s'_{j}$ packets arrive at $Q^{(j)}$ occur during time $(0, 1)$. 
	Since $s'_{j} \leq B$ by definition, 
	$PQ$ accepts all these packets. 
	In addition, 
	for any $i (\in [1, z])$, 
	we define $q'_{i} = q_{i}$. 
	We define $q'_{z+1} = q_{z+1} - 1$. 
	For any $i (\in [z+1, n+1])$, 
	we define $q'_{i+1} = q_{i}$. 
	Moreover, 
	for any $i (\in [1, n+1])$, 
	we define $a_{i} = \sum_{j = q'_{n+2-i}+1}^{q'_{n+3-i}} s'_{j}$ and 
	$a_{0} = 0$. 
	For any $x (\in [0, n])$, 
	a scheduling event occurs at each integer time $t = (\sum_{j = 0}^{x} a_{j} ) + 1, \ldots, \sum_{j = 0}^{x+1} a_{j}$. 
	Also, 
	an arrival event where a packet arrives at $Q^{(q'_{n-x+1})}$ occurs at each time $t + \frac{1}{2}$. 
	After time $(\sum_{j = 0}^{n+1} a_{j}) + 1$, 
	sufficient scheduling events to transmit all the arriving packets occur. 
	Then, 
	$PQ$ transmits a packet from $Q^{(j)}$ at $t$, 
	where $j$ is an integer between $q'_{n-x+1}+1$ and $q'_{n-x+2}$. 
	Let $ALG$ be an offline algorithm which transmits a packet from $Q^{(q'_{n-x+1})}$ at $t$. 
	By the definition of $q'_{i}$, 
	for any $i (\in [1, n+1])$, 
	$h_{PQ}^{(q'_{i})}(1-) = B$. 
	Thus, 
	$PQ$ cannot accept any packet arriving at $t + \frac{1}{2}$, 
	but $ALG$ can accept all the arriving packets. 
	That is to say, $ALG$ is optimal. 
	By the above argument, 
	$V_{PQ}(\sigma') = V_{PQ}(\sigma) + \alpha_{q_{z+1}-1} (B - s_{q_{z+1}-1})$. 
	Furthermore, 
	for any $j (\ne q_{z}, q_{z+1}-1)$, 
	$k_{j}(\sigma') = k_{j}$. 
	Also, 
	$k_{q_{z}}(\sigma') = k_{q_{z}} - s_{q_{z+1}-1}$ and 
	$k_{q_{z+1}-1}(\sigma') = B$. 
	Also, 
	for any $i (\in [1, n+1])$, 
	$q_{i}(\sigma') = q'_{i}$.
	Moreover, 
	$V_{OPT}(\sigma') = V_{ALG}(\sigma') = V_{PQ}(\sigma') + \sum_{i = 1}^{n(\sigma')} \alpha_{q_{i}(\sigma')} k_{q_{i}(\sigma')}(\sigma')$. 
	By the above equalities, 
	$\sum_{i = 1}^{n(\sigma')} \alpha_{q_{i}(\sigma')} k_{q_{i}(\sigma')}(\sigma')
		= (\sum_{i = 1}^{n} \alpha_{q_{i}} k_{q_{i}}) - \alpha_{q_{z}} s_{q_{z+1}-1} + \alpha_{q_{z+1}-1} B
		\geq (\sum_{i = 1}^{n} \alpha_{q_{i}} k_{q_{i}}) + \alpha_{q_{z+1}-1} (B - s_{q_{z+1}-1})
	$. 
	Hence, 
	$\frac{ \sum_{i = 1}^{n(\sigma')} \alpha_{q_{i}(\sigma')} k_{q_{i}(\sigma')}(\sigma') }{ V_{PQ}(\sigma') }
		\geq \frac{ (\sum_{i = 1}^{n} \alpha_{q_{i}} k_{q_{i}}) + \alpha_{q_{z+1}-1} (B - s_{q_{z+1}-1}) }
			{ V_{PQ}(\sigma) + \alpha_{q_{z+1}-1} (B - s_{q_{z+1}-1}) }
		\geq \frac{ \sum_{i = 1}^{n} \alpha_{q_{i}} k_{q_{i}} }{ V_{PQ}(\sigma) }
	$. 
	Therefore, 
	$\frac{V_{OPT}(\sigma')}{V_{PQ}(\sigma')} 
		\geq \frac{ V_{PQ}(\sigma') + \sum_{i = 1}^{n(\sigma')} \alpha_{q_{i}(\sigma')} k_{q_{i}(\sigma')}(\sigma') }
					{ V_{PQ}(\sigma') }
		\geq 1 + \frac{ \sum_{i = 1}^{n} \alpha_{q_{i}} k_{q_{i}} }{ V_{PQ}(\sigma) }
		= \frac{V_{OPT}(\sigma)}{V_{PQ}(\sigma)} 
	$. 
	By the definition of $\sigma'$, 
	${\cal S}_{2}$ includes $\sigma'$. 
	By the above argument, 
	for any $z'$ such that $q_{z'} + 1 < q_{z'+1}$, 
	we recursively construct an input in the above way, and then we can obtain an input satisfying the lemma. 
	\fi
\end{proof}
\begin{figure*}[h]
 \begin{center}
  \includegraphics[width=120mm]{./L372.eps}
 \end{center}
 \caption{
 			Example states of queues ($q_{z}$ through $q_{z+1}$) of $OPT$ and $PQ$ for $\sigma$ and $\sigma'$. 
			Left (Right) queues show the states for $\sigma$ ($\sigma'$). 
		}
\label{fig:L37}
 \end{figure*}
%
\fi
%

%
\ifnum \count10 > 0
入力の集合${\cal S}_{3}$を定義する。
${\cal S}_{3}$は、以下が成立する様な任意の入力$\sigma (\in {\cal S}_{2})$の集合を表す：
(i) 
各$i (\in [1, n-1])$に対して、
$q_{i} + 1 = q_{i+1}$が成立し、
(ii) 
各$i (\in [1, n-1])$に対して、
$k_{q_{i}} = B$が成立し、
(iii) 
任意の$j (\in [q_{1}, q_{n}])$に対して、
$s_{j} = B$が成立し、
\com{（■S)}
(iv)
$q_{1} - 1 \geq 1$が成立するならば、
任意の$j (\in [1, q_{1}-1])$に対して、$s_{j} = 0$が成立する。
\com{（■S)}
(v)
各$j (\in [q_{n}+1, m])$に対して、
$s_{j} \leq B$
が成立する。
(補題~\ref{LMA:3.2.1}より、
$q_{n}+1 \leq m$が成立する。)
\fi
\ifnum \count11 > 0
\com{（■英語）}
We define the set ${\cal S}_{3}$ of inputs. 
${\cal S}_{3}$ denotes the set of inputs $\sigma (\in {\cal S}_{2})$ such that 
(i) 
for each $i (\in [1, n-1])$, 
$q_{i} + 1 = q_{i+1}$, 
(ii) 
for each $i (\in [1, n-1])$, 
$k_{q_{i}} = B$, 
(iii) 
for each $j (\in [q_{1}, q_{n}])$, 
$s_{j} = B$, 
(iv)
for any $j (\in [1, q_{1}-1])$, $s_{j} = 0$ 
if $q_{1} - 1 \geq 1$, 
and 
\com{（■S)}
(v)
for each $j (\in [q_{n}+1, m])$, 
$s_{j} \leq B$. 
(By Lemma~\ref{LMA:3.2.1}, $q_{n}+1 \leq m$.)
\fi
%

%
\begin{LMA} \label{LMA:3.2.7}
	\ifnum \count10 > 0
	任意の入力$\sigma (\in {\cal S}_{3})$に対して、
	次の様な入力$\sigma' (\in {\cal S}_{3})$が存在する。
	(i) $u = \lfloor \frac{ \sum_{j = q_{n(\sigma)}(\sigma)+1}^{m} s_{j}(\sigma) }{B} \rfloor$と定義し、
		任意の$j (\in [q_{n(\sigma)}(\sigma), q_{n(\sigma)}(\sigma)+u])$
		に対して、
		$s_{j}(\sigma') = B$
		かつ
		$s_{q_{n(\sigma)}(\sigma)+u+1}(\sigma') = (\sum_{j = q_{n(\sigma)}(\sigma)+1}^{m} s_{j}(\sigma)) - u B$
		が成立する。
	(ii) $\frac{V_{OPT}(\sigma)}{V_{PQ}(\sigma)} \leq \frac{V_{OPT}(\sigma')}{V_{PQ}(\sigma')}$が成立する。
	\fi
	\ifnum \count11 > 0
	\com{（■英語）}
	For any input $\sigma (\in {\cal S}_{3})$, 
	there exists an input $\sigma' (\in {\cal S}_{3})$ such that 
	(i) 
	$s_{q_{n(\sigma)}(\sigma)+u+1}(\sigma') = (\sum_{j = q_{n(\sigma)}(\sigma)+1}^{m} s_{j}(\sigma)) - u B$, 
	where $u = \lfloor \frac{ \sum_{j = q_{n(\sigma)}(\sigma)+1}^{m} s_{j}(\sigma) }{B} \rfloor$, 
	and 
	for any $j (\in [q_{n(\sigma)}(\sigma), q_{n(\sigma)}(\sigma)+u])$, 
	$s_{j}(\sigma') = B$, and 
	(ii) 
	$\frac{V_{OPT}(\sigma)}{V_{PQ}(\sigma)} \leq \frac{V_{OPT}(\sigma')}{V_{PQ}(\sigma')}$. 
	\fi
\end{LMA}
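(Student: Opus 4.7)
The plan is to construct $\sigma'$ using the same filling-phase / attack-phase template as Lemma~\ref{LMA:3.2.5}, but with a rearranged distribution of arrivals in the queues strictly above $q_{n}$. Let $T = \sum_{j = q_{n}+1}^{m} s_{j}(\sigma)$ and $u = \lfloor T/B \rfloor$. In the filling phase (time $(0,1)$) I let $B$ packets arrive at each of $Q^{(q_{1})}, \ldots, Q^{(q_{n}+u)}$ and $T - uB$ packets at $Q^{(q_{n}+u+1)}$; since $T - uB < B$, $PQ$ accepts all of them. In the attack phase I use Lemma~\ref{LMA:3.2.5}'s alternation pattern with $a_{1} = T$ and $a_{i} = B$ for $i \ge 2$: during phase $x$, $PQ$ repeatedly transmits from the highest-indexed non-empty queue while a packet arrives at $Q^{(q_{n-x})}$, which is full on $PQ$'s side and thus rejected, and an offline algorithm $ALG$ that always transmits from $Q^{(q_{n-x})}$ accepts each such arrival as an extra packet.

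This construction yields the $s_{j}(\sigma')$ values claimed in~(i), together with $k_{q_{n}}(\sigma') = T$, $k_{q_{i}}(\sigma') = B$ for $i < n$, $n(\sigma') = n$ and $q_{i}(\sigma') = q_{i}$, so $\sigma' \in {\cal S}_{3}$. Since the good queues and their extra-packet counts are identical in $\sigma$ and $\sigma'$,
\[ V_{OPT}(\sigma) - V_{PQ}(\sigma) = \sum_{i=1}^{n} \alpha_{q_{i}} k_{q_{i}}(\sigma) = \sum_{i=1}^{n} \alpha_{q_{i}} k_{q_{i}}(\sigma') = V_{OPT}(\sigma') - V_{PQ}(\sigma'), \]
so conclusion~(ii) reduces to showing $V_{PQ}(\sigma') \le V_{PQ}(\sigma)$.

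For this final inequality, the contributions of queues $[1, q_{n}]$ to $V_{PQ}$ are identical in $\sigma$ and $\sigma'$ (both equal $B \sum_{j=q_{1}}^{q_{n}} \alpha_{j}$), so it suffices to compare the contributions from queues strictly above $q_{n}$. Both are weighted sums of the form $\sum_{j = q_{n}+1}^{m} \alpha_{j} x_{j}$ with the same total $\sum_{j} x_{j} = T$ and each $x_{j} \in [0, B]$, but in $\sigma'$ the mass $T$ is packed into the smallest-index queues first. Because the weights $\alpha_{j}$ are non-decreasing in $j$, this packed configuration minimises the weighted sum by a standard exchange argument (moving one unit from a higher-index queue to a lower-index one with free capacity never increases the sum), giving $V_{PQ}(\sigma') \le V_{PQ}(\sigma)$ and hence the desired ratio comparison.

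The main obstacle I anticipate is the bookkeeping in the construction — verifying that $\sigma'$ lies in ${\cal S}_{3}$ and realises the stated $s_{j}(\sigma')$ and $k_{q_{i}}(\sigma')$ values, in particular that the attack-phase ordering forces $PQ$ to drain queues strictly above the current attack target, so that every intended arrival at $Q^{(q_{n-x})}$ is rejected by $PQ$ and becomes an extra packet. Once this is checked in the same manner as in the proofs of Lemmas~\ref{LMA:3.2.5} and~\ref{LMA:3.2.6}, the rearrangement inequality finishes the proof.
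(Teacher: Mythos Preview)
Your proposal is correct and follows essentially the same approach as the paper: construct $\sigma'$ via the same filling-phase/attack-phase template with the redistributed $s''_{j}$ values above $q_{n}$, observe that the good queues and extra-packet counts are unchanged so the numerator $V_{OPT}-V_{PQ}$ is preserved, and conclude from $V_{PQ}(\sigma') \le V_{PQ}(\sigma)$. Your exchange/rearrangement justification for $V_{PQ}(\sigma') \le V_{PQ}(\sigma)$ makes explicit what the paper simply asserts as ``easy to see.''
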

%
\ifnum \count14 < 1
%
\begin{proof}
	\ifnum \count10 > 0
	%
	%
	%
	任意の$j (\in [1, q_{n}])$に対して、
	$s''_{j} = s_{j}$
	と定義する。
	また、
	各$j (\in [q_{n}+1, q_{n}+u])$に対して、
	$s''_{j} = B$
	かつ
	$s''_{q_{n}+u+1} = (\sum_{j = q_{n}+1}^{m} s_{j}) - u B$	
	と定義する。
	次の様に$\sigma$から入力$\sigma'$を構成する。
	この構成の仕方は、補題~\ref{LMA:3.2.5}と補題~\ref{LMA:3.2.6}に類似している。
	まず、
	時間$(0, 1)$の間に各$Q^{(j)} \hspace{1mm} (j \in [q_{1}, m])$にpacketが到着する$s''_{j}$回のarrival eventが発生する。
	定義より、
	$s''_{j} \leq B$が成立するので、
	$PQ$はそれらの到着するpacketを全て受理する。
	ここで、
	任意の$i (\in [1, n])$に対して、
	$a_{i} = \sum_{j = q_{n+1-i}+1}^{q_{n+2-i}} s''_{j}$と定義し、
	$a_{0} = 0$
	と定義する。
	任意の$x (\in [0, n-1])$に対して、
	整数時間$t = (\sum_{j = 0}^{x} a_{j} ) + 1, \ldots, \sum_{j = 0}^{x+1} a_{j}$に、
	scheduling eventが発生し、
	時刻$t + \frac{1}{2}$において、
	$Q^{(q_{n-x})}$にpacketが到着する様なarrival eventが発生する。
	時刻$(\sum_{j = 0}^{n} a_{j})+1$の後は、
	十分な数のscheduling eventが発生する。
	明らかに、
	$V_{PQ}(\sigma') \leq V_{PQ}(\sigma)$
	と
	$V_{OPT}(\sigma') = V_{OPT}(\sigma)$
	が成立する。
	また、
	$\sigma'$の定義より、
	$\sigma' \in {\cal S}_{3}$が成立し、
	$\sigma'$はステートメントの条件(i)をみたす。
	\fi
	\ifnum \count11 > 0
	\com{（■英語）}
	For any $j (\in [1, q_{n}])$, 
	we define $s''_{j} = s_{j}$. 
	Furthermore, 
	for each $j (\in [q_{n}+1, q_{n}+u])$, 
	we define $s''_{j} = B$, and 
	$s''_{q_{n}+u+1} = (\sum_{j = q_{n}+1}^{m} s_{j}) - u B$. 
	We construct $\sigma'$ from $\sigma$ in the following way. 
	This approach is similar to those in the proof of Lemmas~\ref{LMA:3.2.5} and  \ref{LMA:3.2.6}. 
	First, 
	for each $j (\in [q_{1}, m])$, 
	$s''_{j}$ events at which $s_{j}$ packets arrive at $Q^{(j)}$ occur during time $(0, 1)$. 
	Since $s''_{j} \leq B$ by definition, 
	$PQ$ accepts all these packets. 
	Then, 
	for any $i (\in [1, n])$, 
	we define $a_{i} = \sum_{j = q_{n+1-i}+1}^{q_{n+2-i}} s_{j}$, and 
	$a_{0} = 0$. 
	For any $x (\in [0, n-1])$, 
	a scheduling event occurs at each integer time $t = (\sum_{j = 0}^{x} a_{j} ) + 1, \ldots, \sum_{j = 0}^{x+1} a_{j}$. 
	Also, 
	at each time $t + \frac{1}{2}$, 
	an arrival event where a packet arrives at $Q^{(q_{n-x})}$ occurs. 
	After time $(\sum_{j = 0}^{n} a_{j}) + 1$, 
	sufficient scheduling events to transmit all the arriving packets occur. 
	It is easy to see that 
	$V_{PQ}(\sigma') \leq V_{PQ}(\sigma)$
	and 
	$V_{OPT}(\sigma') = V_{OPT}(\sigma)$. 
	Moreover, 
	by the definition of $\sigma'$, 
	$\sigma' \in {\cal S}_{3}$ holds, and $\sigma'$ satisfies the condition (i) in the statement. 
	\fi
\end{proof}
%
\fi
%

%
\ifnum \count10 > 0
入力の集合${\cal S}_{4}$を定義する。
${\cal S}_{4}$は、以下が成立する様な任意の入力$\sigma (\in {\cal S}_{3})$の集合を表す：
(i) 
各$i (\in [1, n-1])$に対して、
$q_{i} + 1 = q_{i+1}$が成立し、
(ii) 
各$i (\in [1, n-1])$に対して、
$k_{q_{i}} = B$が成立し、
(iii) 
任意の$j (\in [q_{1}, q_{n}])$に対して、
$s_{j} = B$が成立し、
(iv)
$q_{1} - 1 \geq 1$が成立するならば、
任意の$j (\in [1, q_{1}-1])$に対して、$s_{j} = 0$が成立する。
(v)
ある $u \hspace{1mm} (0 \leq u \leq m - q_{n}-1)$が存在して、
任意の$j (\in [q_{n}, q_{n}+u])$に対して、
$s_{j} = B$
かつ
$B \geq s_{q_{n}+u+1} \geq 1$
かつ
$q_{n}+u+2 \leq m$が成立する場合、
任意の$j (\in [q_{n}+u+2, m])$に対して、
$s_{j} = 0$
が成立する。
\fi
\ifnum \count11 > 0
\com{（■英語）}
We next introduce the set ${\cal S}_{4}$ of inputs. 
Let ${\cal S}_{4}$ denote the set of inputs $\sigma (\in {\cal S}_{3})$ satisfying the following five conditions: 
(i) 
for each $i (\in [1, n-1])$, 
$q_{i} + 1 = q_{i+1}$, 
(ii) 
for each $i (\in [1, n-1])$, 
$k_{q_{i}} = B$, 
(iii) 
for each $j (\in [q_{1}, q_{n}])$, 
$s_{j} = B$, 
(iv)
for any $j (\in [1, q_{1}-1])$, $s_{j} = 0$ 
if $q_{1} - 1 \geq 1$, and 
(v)
there exists some $u$ such that $0 \leq u \leq m - q_{n}-1$. 
Also, 
for any $j (\in [q_{n}, q_{n}+u])$, 
$s_{j} = B$,  
$B \geq s_{q_{n}+u+1} \geq 1$, and 
for any $j (\in [q_{n}+u+2, m])$, 
$s_{j} = 0$
if $q_{n}+u+2 \leq m$. 
\fi
%

%
\begin{LMA} \label{LMA:3.2.8}
	\ifnum \count10 > 0
	任意の入力$\sigma (\in {\cal S}_{4})$に対して、
	$q_{n(\sigma)}(\sigma) + 2 \leq m$
	かつ
	$s_{q_{n(\sigma)}(\sigma) + 1}(\sigma) = B$
	かつ
	$\sum_{ j = q_{n(\sigma)}(\sigma) + 2}^{m} s_{j}(\sigma) > 0$
	が成立するとする。
	このとき、次の様な入力$\hat{\sigma} (\in {\cal S}_{4})$が存在する。
	(i) $n(\hat{\sigma}) = n(\sigma) + 1$が成立する。
	(ii) 各$i (\in [1, n(\hat{\sigma})-1])$に対して、
		$q_{i}(\hat{\sigma}) = q_{i}(\sigma)$
		かつ
		$q_{n(\hat{\sigma})}(\hat{\sigma}) = q_{n(\sigma)}(\sigma) + 1$
		が成立する。
	(iii) $\frac{V_{OPT}(\sigma)}{V_{PQ}(\sigma)} \leq \frac{V_{OPT}(\hat{\sigma})}{V_{PQ}(\hat{\sigma})}$が成立する。
	\fi
	\ifnum \count11 > 0
	\com{（■英語）}
	Let $\sigma (\in {\cal S}_{4})$ be an input such that 
	$q_{n(\sigma)}(\sigma) + 2 \leq m$, 
	$s_{q_{n(\sigma)}(\sigma) + 1}(\sigma) = B$, and 
	$\sum_{ j = q_{n(\sigma)}(\sigma) + 2}^{m} s_{j}(\sigma) > 0$. 
	Then, 
	there exists an input $\hat{\sigma} (\in {\cal S}_{4})$ such that 
	(i) $n(\hat{\sigma}) = n(\sigma) + 1$, 
	(ii) for each $i (\in [1, n(\hat{\sigma})-1])$, 
		$q_{i}(\hat{\sigma}) = q_{i}(\sigma)$, 
		and 
		$q_{n(\hat{\sigma})}(\hat{\sigma}) = q_{n(\sigma)}(\sigma) + 1$, and 
	(iii) $\frac{V_{OPT}(\sigma)}{V_{PQ}(\sigma)} \leq \frac{V_{OPT}(\hat{\sigma})}{V_{PQ}(\hat{\sigma})}$. 
	\fi
\end{LMA}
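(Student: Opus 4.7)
The plan is to construct $\hat{\sigma}$ in the spirit of Lemmas~\ref{LMA:3.2.5}, \ref{LMA:3.2.6}, and \ref{LMA:3.2.7}: first issue a short opening burst of arrivals during time $(0,1)$ that both $PQ$ and $OPT$ accept in full, and then use a carefully interleaved sequence of arrival and scheduling events that drains $PQ$'s higher-index queues while $OPT$ absorbs $B$ fresh arrivals at $Q^{(q_n+1)}$ that $PQ$ must reject. This will promote $Q^{(q_n+1)}$ to a new good queue, raising $n$ by one while preserving every earlier good queue exactly as it was in $\sigma$.

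Concretely, I would set $s_j(\hat\sigma) = s_j(\sigma)$ for every $j \leq q_n+1$ (so $s_{q_n+1}(\hat\sigma)=B$, which is already assumed in $\sigma$), and repackage the arrivals at indices above $q_n+1$ so that the suffix structure of ${\cal S}_4$ is preserved: some prefix of queues above $q_n+1$ will still have $s=B$, one boundary queue has a value in $[1,B]$, and the rest have $s=0$. During the opening burst, $PQ$ and $OPT$ fill $Q^{(q_n+1)}$ identically. Afterwards, $B$ scheduling events in which $OPT$ transmits from $Q^{(q_n+1)}$ and $PQ$ transmits from the highest non-empty queue (guaranteed to exist above $q_n+1$ because $\sum_{j=q_n+2}^{m} s_j(\sigma)>0$) alternate with $B$ arrival events at $Q^{(q_n+1)}$. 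After each $OPT$ transmission the $Q^{(q_n+1)}$ cell becomes free in $OPT$ but remains full in $PQ$, so the ensuing arrival is an extra packet. A tail of scheduling events then empties both buffers.

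The accounting then proceeds by comparing the two pairs. The packets that $PQ$ transmitted in $\sigma$ are still transmitted by $PQ$ in $\hat{\sigma}$, but now there is an additional block of $B$ extra packets of value $\alpha_{q_n+1}$ accepted by $OPT$. Concretely, $V_{OPT}(\hat\sigma)-V_{PQ}(\hat\sigma) = B\sum_{i=1}^{n+1}\alpha_{q_i(\hat\sigma)} = \big(V_{OPT}(\sigma)-V_{PQ}(\sigma)\big)+\alpha_{q_n+1}B$, while $V_{PQ}(\hat\sigma)-V_{PQ}(\sigma)$ is at most $\alpha_{q_n+1}B$ (any shift of $PQ$'s transmissions from a higher-index queue to $Q^{(q_n+1)}$ can only decrease $V_{PQ}$ because $\alpha_{q_n+1}\le\alpha_j$ for $j>q_n+1$). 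Applying the elementary mediant inequality $\frac{a+x}{b+y}\geq\frac{a}{b}$ when $x/y\geq a/b$ then yields $\frac{V_{OPT}(\hat\sigma)}{V_{PQ}(\hat\sigma)}\geq\frac{V_{OPT}(\sigma)}{V_{PQ}(\sigma)}$.

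The main obstacle will be verifying the realizability of the interleaving — that at each inserted arrival event $Q^{(q_n+1)}$ is genuinely full in $PQ$ while having a free cell in $OPT$. This will require carefully ordering the events in rounds, each consisting of (i) an $OPT$ transmission from $Q^{(q_n+1)}$, (ii) a $PQ$ transmission from a strictly higher index (work-conservation plus the reservoir $\sum_{j\ge q_n+2}s_j(\sigma)>0$ supplies such a packet), and (iii) an arrival at $Q^{(q_n+1)}$. After the construction, checking the five ${\cal S}_4$ conditions for $\hat\sigma$ is then a mechanical reading of the counts $s_j(\hat\sigma)$ and $k_{q_i}(\hat\sigma)$ defined above.
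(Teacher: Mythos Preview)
There is a genuine gap in your construction. You claim that $B$ interleaved rounds will each produce an extra packet at $Q^{(q_n+1)}$, because in each round $PQ$ transmits from some queue with index strictly above $q_n+1$. But the hypothesis only gives you $\sum_{j\ge q_n+2} s_j(\sigma) > 0$, not $\ge B$. After exactly $K := \sum_{j\ge q_n+2} s_j(\sigma)$ scheduling events, $PQ$ has emptied every queue above $q_n+1$; from that point on $PQ$ itself serves $Q^{(q_n+1)}$, so the subsequent arrivals there are \emph{accepted} by $PQ$ and are not extra packets. You therefore obtain at most $K$ extra packets at $Q^{(q_n+1)}$, not $B$, and your accounting identity $V_{OPT}(\hat\sigma)-V_{PQ}(\hat\sigma)=B\sum_{i=1}^{n+1}\alpha_{q_i(\hat\sigma)}$ fails whenever $K<B$ (for instance when $u=1$ and $s_{q_n+2}(\sigma)=1$ in the ${\cal S}_4$ description). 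The mediant inequality step then no longer applies as written.

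The paper's proof avoids this by not trying to manufacture $B$ new extra packets. It keeps $s_j(\hat\sigma)=s_j(\sigma)$ for every $j$, so $V_{PQ}$ is \emph{unchanged}, and simply reroutes the interleaving so that the extra packets formerly charged to $q_n$ are split: $k_{q_n}(\hat\sigma)=s_{q_n+1}=B$ while $k_{q_n+1}(\hat\sigma)=\sum_{j\ge q_n+2}s_j$. The net effect on $V_{OPT}$ is $(\alpha_{q_n+1}-\alpha_{q_n})\sum_{j\ge q_n+2}s_j\ge 0$, which immediately gives (iii). In other words, the lemma does not add extra packets at all; it upgrades the value of some existing ones from $\alpha_{q_n}$ to $\alpha_{q_n+1}$. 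Note also that condition~(ii) of ${\cal S}_4$ for $\hat\sigma$ only constrains $k_{q_i}(\hat\sigma)=B$ for $i\le n(\hat\sigma)-1=n$, so $k_{q_n+1}(\hat\sigma)<B$ is perfectly admissible.
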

%
\ifnum \count14 < 1
%
\begin{proof}
	\ifnum \count10 > 0
	%
	%
%
	%
	$\sigma$から次の様な入力$\sigma'$を構成する。
	まず、
	時間$(0, 1)$の間に各$Q^{(j)} \hspace{1mm} (j \in [q_{1}, m])$にpacketが到着する$s_{j}$回のarrival eventが発生する。
	${\cal S}_{4}$の定義より、
	$s_{j} \leq B$が成立するので、$PQ$はそれらの到着するpacketを全て受理する。
	ここで、
	任意の$i (\in [1, n])$に対して、
	$q'_{i} = q_{i}$と定義し、
	$q'_{n+1} = q_{n}+1$と定義し、
	$q'_{n+2} = m$と定義する。
	任意の$i (\in [1, n+1])$に対して、
	$a_{i} = \sum_{j = q'_{n+2-i}+1}^{q'_{n+3-i}} s_{j}$と定義し、
	$a_{0} = 0$と定義する。
	任意の$x (\in [0, n])$に対して、
	整数時間$t = (\sum_{j = 0}^{x} a_{j} ) + 1, \ldots, \sum_{j = 0}^{x+1} a_{j}$に、
	scheduling eventが発生する。
	更に、
	任意の$x (\in [0, n])$に対して、
	時刻$t + \frac{1}{2}$において、
	$Q^{(q'_{n+1-x})}$にpacketが到着する様なarrival eventが発生する。
	時刻$(\sum_{j = 0}^{n+1} a_{j})+1$の後は、
	十分な数のscheduling eventが発生する。
	このとき、
	$PQ$が$\sigma'$において各scheduling eventにtransmitするpacketは、
	$\sigma$と全く同じである。
	また、
	$t$に$Q^{(q'_{n+1-x})}$からpacketをtransmitする
	オフラインアルゴリズム$ALG$を考える。
	$q'_{i}$の定義より、
	任意の$i (\in [1, n+1])$に対して、
	$h_{PQ}^{(q'_{i})}(1-) = B$が成立しているので、
	$PQ$は各時刻$t + \frac{1}{2}$において、到着する全てのpacketをacceptできない。
	しかし、$ALG$は全てのpacketを受理することが出来る。
	すなわち、$ALG$は最適なオフラインアルゴリズムの1つである。
	よって、
	$n(\sigma') = n+1$
	かつ
	任意の$i (\in [1, n+1])$に対して、
	$q_{i}(\sigma') = q'_{i}$
	が成立する。
	明らかに、任意の$j (\in [1, m])$に対して、
	$s_{j}(\sigma') = s_{j}$
	が成立しており、
	$V_{PQ}(\sigma') = V_{PQ}(\sigma)$
	が成立する。
	また、
	任意の$i (\in [1, n-1])$に対して、
	$k_{q_{i}}(\sigma') = k_{q_{i}}$
	が成立し、
	$k_{q_{n}}(\sigma') = s_{q_{n}+1}$
	が成立し、
	$k_{q_{n}+1}(\sigma') = \sum_{j = q_{n}+2}^{m} s_{j}$
	\com{（■）}
	が成立する。
	よって、
	$\sigma' \in {\cal S}_{4}$
	が成立し、
	$\sigma'$はステートメントの条件(i)と(ii)をみたす。
	更に、
	$V_{OPT}(\sigma') = V_{ALG}(\sigma') 
		= V_{OPT}(\sigma) + (\alpha_{q_{n}+1} - \alpha_{q_{n}}) \sum_{j = q_{n}+2}^{m} s_{j}
		\geq V_{OPT}(\sigma)$
	が成立する。
	\fi
	\ifnum \count11 > 0
	\com{（■英語）}
	%
	%
%
	%
	We construct $\sigma'$ from $\sigma$ as follows: 
	First, 
	for each $j (\in [q_{1}, m])$, 
	$s_{j}$ events at which $s_{j}$ packets at $Q^{(j)}$ arrive occur during time $(0, 1)$. 
	Since $s_{j} \leq B$ by the definition of ${\cal S}_{4}$, 
	$PQ$ accepts all these arriving packets. 
	For any $i (\in [1, n])$, 
	we define $q'_{i} = q_{i}$,  
	$q'_{n+1} = q_{n}+1$ and 
	$q'_{n+2} = m$. 
	Moreover, 
	for any $i (\in [1, n+1])$, 
	we define $a_{i} = \sum_{j = q'_{n+2-i}+1}^{q'_{n+3-i}} s_{j}$ and 
	$a_{0} = 0$. 
	Then, 
	for any $x (\in [0, n])$, 
	a scheduling event occurs at each integer time $t = (\sum_{j = 0}^{x} a_{j} ) + 1, \ldots, \sum_{j = 0}^{x+1} a_{j}$. 
	In addition, 
	for any $x (\in [0, n])$, 
	an arrival event where a packet arrives at $Q^{(q'_{n+1-x})}$ occurs at each time $t + \frac{1}{2}$. 
	After time $(\sum_{j = 0}^{n+1} a_{j}) + 1$, 
	sufficient scheduling events to transmit all the arriving packets occur. 
	Then, 
	the packets which $PQ$ transmits at each scheduling event for $\sigma'$ are equivalent to those for $\sigma$. 
	Consider an offline algorithm $ALG$ which transmits a packet from $Q^{(q'_{n+1-x})}$ at $t$. 
	By the definition of $q'_{i}$, 
	since for any $i (\in [1, n+1])$, $h_{PQ}^{(q'_{i})}(1-) = B$, 
	$PQ$ cannot accept any packet which arrives at each time $t + \frac{1}{2}$,  
	but $ALG$ can accept all the packets, which means that  
	$ALG$ is optimal. 
	Hence, 
	$n(\sigma') = n+1$, and 
	for any $i (\in [1, n+1])$, 
	$q_{i}(\sigma') = q'_{i}$. 
	Since for any $j (\in [1, m])$, $s_{j}(\sigma') = s_{j}$, 
	$V_{PQ}(\sigma') = V_{PQ}(\sigma)$. 
	Moreover, 
	for any $i (\in [1, n-1])$, 
	$k_{q_{i}}(\sigma') = k_{q_{i}}$, 
	$k_{q_{n}}(\sigma') = s_{q_{n}+1}$, and 
	$k_{q_{n}+1}(\sigma') = \sum_{j = q_{n}+2}^{m} s_{j}$. 
	Therefore, 
	$\sigma' \in {\cal S}_{4}$ holds, 
	$\sigma'$ satisfies the conditions (i) and (ii) in the statements. 
	Also, 
	$V_{OPT}(\sigma') = V_{ALG}(\sigma') 
		= V_{OPT}(\sigma) + (\alpha_{q_{n}+1} - \alpha_{q_{n}}) \sum_{j = q_{n}+2}^{m} s_{j}
		\geq V_{OPT}(\sigma)$. 
	\fi
\end{proof}
%
\fi
%

%
\ifnum \count10 > 0
入力の集合${\cal S}_{5}$を定義する。
${\cal S}_{5}$は、以下が成立する様な任意の入力$\sigma (\in {\cal S}_{4})$の集合を表す：
(i) 
各$i (\in [1, n-1])$に対して、
$q_{i} + 1 = q_{i+1}$が成立し、
(ii) 
各$i (\in [1, n-1])$に対して、
$k_{q_{i}} = B$が成立し、
(iii) 
任意の$j (\in [q_{1}, q_{n}])$に対して、
$s_{j} = B$が成立し、
(iv)
\com{（■S)} 
$q_{1} - 1 \geq 1$が成立するならば、
任意の$j (\in [1, q_{1}-1])$に対して、$s_{j} = 0$が成立する、
(v) 
$k_{q_{n}} = s_{q_{n}+1}$
(補題~\ref{LMA:3.2.1}より、
$q_{n}+1 \leq m$が成立する。)
かつ
$1 \leq s_{q_{n}+1} \leq B$が成立する、
(vi) 
$q_{n} + 2 \leq m$が成立するならば、
任意の$j (\in [q_{n}+2, m])$に対して、$s_{j} = 0$が成立する。
\fi
\ifnum \count11 > 0
\com{（■英語）}
%
%
${\cal S}_{5}$ denotes the set of inputs $\sigma (\in {\cal S}_{4})$ satisfying the following six conditions: 
(i) 
for each $i (\in [1, n-1])$, 
$q_{i} + 1 = q_{i+1}$, 
(ii) 
for each $i (\in [1, n-1])$, 
$k_{q_{i}} = B$, 
(iii) 
for each $j (\in [q_{1}, q_{n}])$, 
$s_{j} = B$, 
(iv) 
\com{（■S)}
for any $j (\in [1, q_{1}-1])$, $s_{j} = 0$ holds 
if $q_{1} -1 \geq 1$, 
(v) 
$k_{q_{n}} = s_{q_{n}+1}$ 
(By Lemma~\ref{LMA:3.2.1}, $q_{n}+1 \leq m$.) and 
$1 \leq s_{q_{n}+1} \leq B$, and 
(vi) 
for any $j (\in [q_{n}+2, m])$, $s_{j} = 0$ holds 
if $q_{n} + 2 \leq m$. 
\fi
%

%
\begin{LMA} \label{LMA:3.2.9}
	\ifnum \count10 > 0
	任意の入力$\sigma (\in {\cal S}_{5})$に対して、
	次の様な入力$\hat{\sigma} (\in {\cal S}_{5})$が存在する。
	(i) $s_{q_{n(\hat{\sigma})}(\hat{\sigma})+1}(\hat{\sigma}) = B$が成立する。
	(ii) $\frac{V_{OPT}(\sigma)}{V_{PQ}(\sigma)} \leq \frac{V_{OPT}(\hat{\sigma})}{V_{PQ}(\hat{\sigma})}$が成立する。
	すなわち、
	ある入力$\sigma^{*} \in {\cal S}^{*}$が存在して、
	$\max_{\sigma'} \{ \frac{V_{OPT}(\sigma')}{V_{PQ}(\sigma')} \} = \frac{V_{OPT}(\sigma^{*})}{V_{PQ}(\sigma^{*})}$
	が成立する。
	\fi
	\ifnum \count11 > 0
	\com{（■英語）}
	For any input $\sigma (\in {\cal S}_{5})$, 
	there exists an input $\hat{\sigma} (\in {\cal S}_{5})$ such that 
	(i) $s_{q_{n(\hat{\sigma})}(\hat{\sigma})+1}(\hat{\sigma}) = B$, and 
	(ii) $\frac{V_{OPT}(\sigma)}{V_{PQ}(\sigma)} \leq \frac{V_{OPT}(\hat{\sigma})}{V_{PQ}(\hat{\sigma})}$. 
	That is, 
	there exists an input $\sigma^{*} \in {\cal S}^{*}$ such that 
	$\max_{\sigma'} \{ \frac{V_{OPT}(\sigma')}{V_{PQ}(\sigma')} \} = \frac{V_{OPT}(\sigma^{*})}{V_{PQ}(\sigma^{*})}$. 
	\fi
\end{LMA}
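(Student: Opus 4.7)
The plan is to identify, within the class ${\cal S}_5$, the single remaining degree of freedom on which the ratio depends nontrivially, and then to exploit monotonicity to push its value to an extreme. Fix any $\sigma\in{\cal S}_5$ and set $s := s_{q_n+1}(\sigma)\in[1,B]$; given the five conditions of ${\cal S}_5$, all other $s_j$ and $k_{q_i}$ are determined by $s$, by $n$, and by $q_1,\dots,q_n$. Using $V_{OPT}(\sigma)=V_{PQ}(\sigma)+\sum_i\alpha_{q_i}k_{q_i}$ together with these conditions, a direct computation gives
\[
\frac{V_{OPT}(\sigma)}{V_{PQ}(\sigma)} \;=\; 1 + \frac{A + a\,s}{A + aB + D\,s} \;=:\; R(s),
\]
where $A=B\sum_{j=q_1}^{q_n-1}\alpha_j$, $a=\alpha_{q_n}$, and $D=\alpha_{q_n+1}$.

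Since $R(s)-1$ is a M\"obius transformation in $s$, its derivative has constant sign on $[0,B]$ (namely, the sign of $aA+a^2B-DA$). Hence $R$ is monotone on $[0,B]$ and attains its maximum at an endpoint, so $R(s)\le\max\{R(0),R(B)\}$. I then split into two cases and explicitly construct $\hat\sigma$. If $R(B)\ge R(s)$, I take $\hat\sigma$ to be the input built by the arrival-then-schedule template used in the proofs of Lemmas~\ref{LMA:3.2.5}, \ref{LMA:3.2.6}, and \ref{LMA:3.2.7}, configured so that $n(\hat\sigma)=n$, $q_i(\hat\sigma)=q_i$ for $i\in[1,n]$, $s_j(\hat\sigma)=B$ for all $j\in[q_1,q_n+1]$, and $k_{q_n}(\hat\sigma)=B$; this yields ratio exactly $R(B)$. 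If instead $R(0)\ge R(s)$, I take $\hat\sigma$ to be the analogous input with $n(\hat\sigma)=n-1$, $q_i(\hat\sigma)=q_i$ for $i<n$, $s_j(\hat\sigma)=B$ for $j\in[q_1,q_n]$, and $s_j(\hat\sigma)=0$ for $j\ge q_n+1$; here the role of the ``last plus one'' queue passes to $Q^{(q_n)}$, which already carries $B$ packets, and the ratio equals $R(0)$.

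In both cases a routine verification shows $\hat\sigma\in{\cal S}_5$ and $s_{q_{n(\hat\sigma)}(\hat\sigma)+1}(\hat\sigma)=B$, establishing the first part of the lemma. The ``that is'' conclusion then follows by chaining Lemmas~\ref{LMA:3.2.1}, \ref{LMA:3.2.3}, \ref{LMA:3.2.5}, \ref{LMA:3.2.6}, \ref{LMA:3.2.7}, \ref{LMA:3.2.8}, and the current lemma: each successive result restricts the class of worst-case inputs without decreasing the ratio, and ${\cal S}^*$ is precisely the subclass of ${\cal S}_5$ satisfying the additional constraint $s_{q_n+1}=B$. I expect the main obstacle to be the bookkeeping in the ``shrink $n$ by one'' branch: one must check that the deletion really still yields an input of ${\cal S}_5$ (in particular, condition (v), which holds because the new $s_{q_{n-1}+1}(\hat\sigma)=B\ge 1$), and that the monotonicity direction of $R$ genuinely drives the choice between the two candidate inputs. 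The algebraic content is light; the care lies entirely in matching the parameters of $\hat\sigma$ to the definition of ${\cal S}_5$.
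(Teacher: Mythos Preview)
Your proposal is correct and follows essentially the same approach as the paper: both identify $s_{q_n+1}$ as the sole free parameter in ${\cal S}_5$, express the ratio as a linear-fractional function of it, use monotonicity to reduce to the two endpoint values, and realize those endpoints by the inputs $\sigma_2$ (your ``$R(B)$'' case, with $n(\hat\sigma)=n$) and $\sigma_1$ (your ``$R(0)$'' case, with $n(\hat\sigma)=n-1$). Your write-up is in fact slightly more explicit than the paper's about constructing the endpoint inputs and verifying ${\cal S}_5$ membership, but the mathematical content is the same.
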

%
%
\begin{proof}
	\ifnum \count10 > 0
	%
	%
%
	%
	$\sigma \in {\cal S}_{5}$が成立するので、
	$
	\frac{V_{OPT}(\sigma)}{V_{PQ}(\sigma)} 
		= \frac{ V_{PQ}(\sigma) + \sum_{i = 1}^{n} \alpha_{q_{i}} k_{q_{i}} }{ V_{PQ}(\sigma) }
		\leq 1 + \frac{ B (\sum_{j = q_{1} }^{q_{n}-1} \alpha_{j}) + \alpha_{q_{n}} s_{q_{n}+1} }{ \sum_{j = q_{1}}^{q_{n}+1} \alpha_{j} s_{j} }
		\leq 1 + \frac{ B (\sum_{j = q_{1} }^{q_{n}-1} \alpha_{j}) + \alpha_{q_{n}} s_{q_{n}+1} }{ B (\sum_{j = q_{1} }^{q_{n}} \alpha_{j} ) + \alpha_{q_{n}+1} s_{q_{n}+1} }
	$
	（この値を$x(s_{q_{n}+1})$と定義する。）
	が成立する。
	ここで、
	$\sigma_{1}, \sigma_{2} \in {\cal S}_{5}$を次が成立する様な任意の入力とする。
	(i) 
	$n = n(\sigma_{2}) = n(\sigma_{1}) + 1$が成立し、
	(ii) 
	任意の$i (\in [1, n-1])$に対して、
	$q_{i} = q_{i}(\sigma_{1}) = q_{i}(\sigma_{2})$
	が成立し、
	(iii)
	$q_{n} = q_{n}(\sigma_{2})$
	が成立し、
	(iv) 
	$s_{q_{n-1}+1}(\sigma_{1}) = B$
	かつ
	$s_{q_{n}+1}(\sigma_{2}) = B$
	が成立する。
	このとき、
	$x(s_{q_{n}+1})$は$s_{q_{n}+1}$に関して単調増加、もしくは単調減少であるので、
	$\frac{V_{OPT}(\sigma)}{V_{PQ}(\sigma)} \leq \max\{ \frac{V_{OPT}(\sigma_{1})}{V_{PQ}(\sigma_{1})}, \frac{V_{OPT}(\sigma_{2})}{V_{PQ}(\sigma_{2})} \}$
	が成立する。
	よって、
	入力$\hat{\sigma}$を、
	$\hat{\sigma} \in \arg \max\{ \frac{V_{OPT}(\sigma_{1})}{V_{PQ}(\sigma_{1})}, \frac{V_{OPT}(\sigma_{2})}{V_{PQ}(\sigma_{2})} \}$
	が成立する様な入力とすれば、
	題意が満たされる。
	\fi
	\ifnum \count11 > 0
	\com{（■英語）}
	%
%
	%
	Since $\sigma \in {\cal S}_{5}$ holds, 
	$
	\frac{V_{OPT}(\sigma)}{V_{PQ}(\sigma)} 
		= \frac{ V_{PQ}(\sigma) + \sum_{i = 1}^{n} \alpha_{q_{i}} k_{q_{i}} }{ V_{PQ}(\sigma) }
		\leq 1 + \frac{ B (\sum_{j = q_{1} }^{q_{n}-1} \alpha_{j}) + \alpha_{q_{n}} s_{q_{n}+1} }{ \sum_{j = q_{1}}^{q_{n}+1} \alpha_{j} s_{j} }
		\leq 1 + \frac{ B (\sum_{j = q_{1} }^{q_{n}-1} \alpha_{j}) + \alpha_{q_{n}} s_{q_{n}+1} }{ B (\sum_{j = q_{1} }^{q_{n}} \alpha_{j} ) + \alpha_{q_{n}+1} s_{q_{n}+1} }
	$, 
	which we define as $x(s_{q_{n}+1})$. 
	Let $\sigma_{1}, \sigma_{2} \in {\cal S}_{5}$ be any inputs such that	%
	(i) 
	$n = n(\sigma_{2}) = n(\sigma_{1}) + 1$, 
	(ii) 
	for any $i (\in [1, n-1])$, 
	$q_{i} = q_{i}(\sigma_{1}) = q_{i}(\sigma_{2})$, 
	(iii)
	$q_{n} = q_{n}(\sigma_{2})$, and 
	(iv) 
	$s_{q_{n-1}+1}(\sigma_{1}) = B$ and 
	$s_{q_{n}+1}(\sigma_{2}) = B$. 
	Then, 
	since $x(s_{q_{n}+1})$ is monotone (increasing or decreasing) as $s_{q_{n}+1}$ increases, 
	$\frac{V_{OPT}(\sigma)}{V_{PQ}(\sigma)} \leq \max\{ \frac{V_{OPT}(\sigma_{1})}{V_{PQ}(\sigma_{1})}, \frac{V_{OPT}(\sigma_{2})}{V_{PQ}(\sigma_{2})} \}$. 
	Therefore, 
	let $\hat{\sigma}$ be the input such that 
	$\hat{\sigma} \in \arg \max\{ \frac{V_{OPT}(\sigma_{1})}{V_{PQ}(\sigma_{1})}, \frac{V_{OPT}(\sigma_{2})}{V_{PQ}(\sigma_{2})} \}$, 
	which means that the statement is true. 
	\fi
\end{proof}
%
%

%
\begin{LMA}\label{LMA:lower}
	\ifnum \count10 > 0
	$PQ$の競合比は少なくとも
	$2 - \min_{x \in [1, m-1] } \{ \frac{ \alpha_{x+1} }{ \sum_{j = 1}^{x+1} \alpha_{j} } \}$
	が成立する。
	\fi
	\ifnum \count11 > 0
	The competitive ratio of $PQ$ is at least $2 - \min_{x \in [1, m-1] } \{ \frac{ \alpha_{x+1} }{ \sum_{j = 1}^{x+1} \alpha_{j} } \}$. 
	\fi
\end{LMA}
%
%
\begin{proof}
	\ifnum \count10 > 0
	（■まだ）
	\fi
	\ifnum \count11 > 0
	\com{（■英語）}
	Consider the following input $\sigma$. 
	Define $m' \in \arg \min_{x \in [1, m-1]} \{ \frac{ \alpha_{x+1} }{ \sum_{j = 1}^{x+1} \alpha_{j} } \}$. 
	Initially, 
	$(m'+1) B$ arrival events happen such that $B$ packets arrive at $Q^{(1)}$ to $Q^{(m'+1)}$. 
	Then, 
	for $k = 1, 2, \ldots, m'$, 
	the $k$th round consists of $B$ scheduling events followed by $B$ arrival events 
	in which all the $B$ packets arrive at $Q^{(m'-k+1)}$.
	For $\sigma$, $PQ$ transmits $B$ packets from $Q^{(m'-k+2)}$ at the $k$th
	round. 
	As a result, 
	$PQ$ cannot accept arriving packets in the $(k+1)$st round. 
	Hence, 
	${V}_{PQ}(\sigma) = B \sum_{j = 1}^{m'+1} \alpha_{j}$ holds. 
	On the other hand, 
	$OPT$ transmits $B$ packets from $Q^{(m'-k+1)}$ at the $k$th round, 
	and hence can accept all the arriving packets. 
	Thus, 
	${V}_{OPT}(\sigma) = 2B \sum_{j = 1}^{m'} \alpha_{j} + B \alpha_{m'+1}$. 
	Therefore,
	$\frac{{V}_{OPT}(\sigma)}{{V}_{PQ}(\sigma)} 
		= \frac{ 2 \sum_{j = 1}^{m'} \alpha_{j} + \alpha_{m'+1}}{ \sum_{j = 1}^{m'+1} \alpha_{j} }
		= 2 - \frac{ \alpha_{m'+1} }{ \sum_{j = 1}^{m'+1} \alpha_{j} }$. 
	(It is easy to see that $\sigma \in {\cal S}_{5}$.)
	\fi
\end{proof}
%
%

%
\section{Lower Bound for Deterministic Algorithms}\label{sec:5}
\ifnum \count10 > 0
本節では、
任意の決定性アルゴリズムの下限を示す。
We consider only online algorithms that transmit a packet at a scheduling event 
whenever their buffers are not empty. 
(Such algorithms are called {\em work-conserving}. See e.g.\ \cite{YA03}.) 

\fi
\ifnum \count11 > 0
\com{（■英語）}
In this section, we show a lower bound for any deterministic algorithm. 
We make an assumption that is well-known to have no effect on the analysis of the competitive ratio. 
We consider only online algorithms that transmit a packet at a scheduling event 
whenever their buffers are not empty. 
(Such algorithms are called {\em work-conserving}. See e.g.\ \cite{YA03}.) 
\fi

\begin{theorem}\label{thm:3}
	No deterministic online algorithm can achieve a competitive ratio smaller than 
	$1 + \frac{ \alpha^3 + \alpha^2 +  \alpha }{ \alpha^4 + 4 \alpha^3 + 3 \alpha^2 + 4 \alpha + 1 }$. 
	%
	%
	%
\end{theorem}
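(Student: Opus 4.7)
The plan is to exhibit, for every deterministic online algorithm $ALG$, an adversarial input on which $V_{OPT}/V_{ALG}$ is at least the claimed bound. I would use the standard adversary strategy: fix a small instance, and branch on $ALG$'s scheduling decisions so that the arrivals that follow are always the worst possible for $ALG$ while still harmless for a carefully chosen offline schedule.

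First I would fix a small number of queues (the degrees in the polynomial suggest $m=3$ with $\alpha_1=1$, $\alpha_2$ set to a value tuned in terms of $\alpha$, and $\alpha_3=\alpha$) and a small buffer size (likely $B=2$). The adversary's opening phase injects a burst of arrivals that fills every queue to capacity, so that $ALG$ and $OPT$ start from identical buffer states. After this, the adversary issues a short interleaving of scheduling events and fresh arrivals. Each time $ALG$ transmits from some queue $Q^{(y)}$, the adversary immediately delivers a new packet of priority $\alpha_y$ (refilling the just-emptied slot) together with additional packets of higher priority that $ALG$ cannot accept because those higher-priority queues are full; these rejected packets become extra packets. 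The offline schedule $OPT$, by contrast, preferentially transmits from the lower-priority queue so that high-priority queues have room for every high-priority arrival.

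The analysis then proceeds by enumerating the possible sequences of $ALG$'s choices. Since $ALG$ is deterministic, each branch is fully determined by the adversary's responses, and in each branch one computes $V_{ALG}$ as the sum $\sum_j \alpha_j s_j^{ALG}$ over the packets $ALG$ transmits and $V_{OPT}$ as $V_{ALG}$ plus the values of the extra packets. The intermediate priority $\alpha_2$ is then chosen as the value in $[1,\alpha]$ that equalizes (or at least balances) the ratio $V_{OPT}/V_{ALG}$ across all branches, which is the standard trick to match a lower bound produced by an adversary argument against the worst branch.

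The main obstacle will be guessing the right instance and then pinning down the intermediate priority $\alpha_2$ so that every branch yields exactly the target ratio $1 + \frac{\alpha^3+\alpha^2+\alpha}{\alpha^4+4\alpha^3+3\alpha^2+4\alpha+1}$. The peculiar shape of the denominator, which does not factor nicely over the rationals, strongly suggests that $\alpha_2$ must satisfy a nontrivial polynomial relation in $\alpha$, and verifying that this choice simultaneously minimizes the ratio across all branches (rather than only against the ``obvious'' branch where $ALG$ always picks the highest-priority non-empty queue, i.e.\ $PQ$) is where the delicate balancing computation will live. Beyond this calculation the argument is standard, since once the adversary is fixed we only need to compare the two schedules directly.
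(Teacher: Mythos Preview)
Your proposal misidentifies the source of the quartic denominator and consequently aims at the wrong instance. The paper's adversary uses only \emph{two} queues, $Q^{(1)}$ and $Q^{(m)}$ (values $1$ and $\alpha$), and arbitrary $B$; there is no intermediate priority $\alpha_2$ to tune. The degree of the polynomial comes not from a third priority level but from the fact that the adversary interacts with $ALG$ in \emph{two} rounds. In round one, $B$ packets arrive at each of $Q^{(1)}$ and $Q^{(m)}$, then $B$ scheduling events occur; the adversary records the fraction $x\in[0,1]$ that $ALG$ spent on $Q^{(m)}$ and, depending on whether $\alpha x\ge 1-x$, sends the next burst of $B$ packets to $Q^{(1)}$ or to $Q^{(m)}$. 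A second round of $B$ scheduling events follows (with fraction $y$ or $z$ on $Q^{(m)}$), and the adversary branches once more before a final burst and a flush. The ratio in each leaf is a simple rational function of $(x,y)$ or $(x,z)$; taking $\min_y\max$, $\min_z\max$, and then $\min_x\max$ over the two first-level branches is what produces the expression $1+\frac{\alpha^3+\alpha^2+\alpha}{\alpha^4+4\alpha^3+3\alpha^2+4\alpha+1}$.

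Your construction---fill all three queues, refill whatever $ALG$ serves, and inject unabsorbable higher-priority packets---is essentially the adversary that witnesses the \emph{upper} bound for $PQ$ (the inputs in ${\cal S}^*$), not a lower bound for arbitrary $ALG$. Against a non-greedy $ALG$ that serves low-priority queues first, that adversary does not force the same losses, so balancing via $\alpha_2$ will not equalize the branches at the target value. The missing idea is the two-round adaptive structure with continuous parameters $x,y,z$; once you set that up with just two queues, the calculation is a routine (if slightly tedious) min-max.
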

\ifnum \count14 < 1
\begin{proof}
	\ifnum \count10 > 0
	オンラインアルゴリズム$ON$を固定する。
	次の様な入力$\sigma$を敵対者が作成する。
	$\sigma(t)$を$\sigma$の時刻$t$までの冒頭部分とする。
	$OPT$は、
	この入力において到着するパケットを全て受理し、scheduleすることが可能であることに注意せよ。
	時刻$(0, 1)$に
	$2B$回のarrival eventが発生し、
	$Q^{(1)}$と$Q^{(m)}$に$B$個ずつpacketが到着する。
	時刻$(1, 2)$に
	$B$回のscheduling eventが発生する。
	この$\sigma(2)$にたいして、
	時刻$(1, 2)$に
	online algorihtm $ON$が
	$Q^{(1)}$から$B(1-x)$個、
	$Q^{(m)}$から$Bx$個ずつpacketをscheduleするとする。
	（図~\ref{fig:L1}参照）
	時刻$2$より後の時刻では、
	$Q^{(1)}$と$Q^{(m)}$を比較して、
	より$ON$が損をするキューにパケットが到着する。
	{\bf\boldmath Case 1. $\alpha x \geq 1 - x$が成立する場合:}
	時刻$(2, 3)$に
	$B$回のarrival eventが発生し、
	$Q^{(1)}$に$B$個のpacketが到着する。
	このとき、時刻$3$までに、
	$ON$は$(\alpha + 1 + 1 - x )B$の価値のパケットを受理している。
	更に、
	時刻$(3, 4)$に
	$B$回のscheduling eventが発生する。
	この$\sigma(4)$にたいして、
	時刻$(3, 4)$に
	$ON$が
	$Q^{(1)}$から$B(1-y)$個、
	$Q^{(m)}$から$By$個ずつpacketをscheduleするとする。
	（図~\ref{fig:L2}参照）
	時刻$4$より後の時刻では、
	先の場合と同様に、
	$Q^{(1)}$と$Q^{(m)}$を比較して、
	より$ON$が損をするキューにパケットが到着する。
	{\bf\boldmath Case 1.1. $\alpha (x+y) \geq 1 - y$が成立する場合:}
	時刻$(4, 5)$に
	$B$回のarrival eventが発生し、
	$Q^{(1)}$に$B$個のpacketが到着する。
	時刻$(5, 6)$に
	$2B$回のscheduling eventが発生する。
	この入力にたいして、
	$V_{ON}(\sigma) = ( \alpha + 1 + 1 - x + 1 - y )B$、
	$V_{OPT}(\sigma) = ( \alpha + 1 + 1 + 1 )B$が成立する。
	{\bf\boldmath Case 1.2. $\alpha (x+y) < 1 - y$が成立する場合:}
	時刻$(4, 5)$に
	$B$回のarrival eventが発生し、
	$Q^{(m)}$に$B$個のpacketが到着する。
	時刻$(5, 6)$に
	$2B$回のscheduling eventが発生する。
	この入力にたいして、
	$V_{ON}(\sigma) = ( \alpha + 1 + 1 - x + \alpha(x + y) )B$、
	$V_{OPT}(\sigma) = ( \alpha + 1 + 1 + \alpha )B$
	が成立する。
	{\bf\boldmath Case 2. $\alpha x < 1 - x$が成立する場合:}
	時刻$(2, 3)$に
	$B$回のarrival eventが発生し、
	$Q^{(m)}$に$B$個のpacketが到着する。
	このとき、時刻$3$までに、
	$ON$は$( \alpha + 1 + \alpha x )B$の価値のパケットを受理している。
	時刻$(3, 4)$に
	$B$回のscheduling eventが発生する。
	この$\sigma(4)$にたいして、
	時刻$(3, 4)$に
	$ON$が
	$Q^{(1)}$から$B(1-z)$個、
	$Q^{(m)}$から$Bz$個ずつpacketをscheduleするとする。
	（図~\ref{fig:L3}参照）
	時刻$4$より後の時刻では、
	先の場合と同様に、
	$Q^{(1)}$と$Q^{(m)}$を比較して、
	より$ON$が損をするキューにパケットが到着する。
	{\bf\boldmath Case 2.1. $\alpha z \geq 1 - x + 1 - z$が成立する場合:}
	時刻$(4, 5)$に
	$B$回のarrival eventが発生し、
	$Q^{(1)}$に$B$個のpacketが到着する。
	時刻$(5, 6)$に
	$2B$回のscheduling eventが発生する。
	この入力にたいして、
	$V_{ON}(\sigma) = ( \alpha + 1 + \alpha x + 1 - x + 1 - z )B$、
	$V_{OPT}(\sigma) = ( \alpha + 1 + \alpha + 1 )B$が成立する。
	{\bf\boldmath Case 2.2. $\alpha z < 1 - x + 1 - z$が成立する場合:}
	時刻$(4, 5)$に
	$B$回のarrival eventが発生し、
	$Q^{(m)}$に$B$個のpacketが到着する。
	時刻$(5, 6)$に
	$2B$回のscheduling eventが発生する。
	この入力にたいして、
	$V_{ON}(\sigma) = ( \alpha + 1 + \alpha x + \alpha z )B$、
	$V_{OPT}(\sigma) = ( \alpha + 1 + \alpha + \alpha )B$が成立する。
	上記の議論より、
	$c_{1}(x) = \min_{y} \max\{
				\frac{ \alpha + 1 + 1 + 1 }{ \alpha + 1 + 1 - x + 1 - y }, 
				\frac{\alpha + 1 + 1 + \alpha }{\alpha + 1 + 1 - x + \alpha(x + y)}
			\}
	$
	と定義し、
	$c_{2}(x) = \min_{z} \max\{
				\frac{ \alpha + 1 + \alpha + 1 }{ \alpha + 1 + \alpha x + 1 - x + 1 - z }, 
				\frac{ \alpha + 1 + \alpha + \alpha }{ \alpha + 1 + \alpha x + \alpha z }
			\}
	$
	と定義して、
	$\frac{V_{OPT}(\sigma)}{V_{ON}(\sigma)} 
		\geq
			\min_{x} \max\{
				c_{1}(x), c_{2}(x)
			\}
	$
	が成立する。
	$c_{1}(x)$は、
	$\frac{ \alpha + 1 + 1 + 1 }{ \alpha + 1 + 1 - x + 1 - y } 
	= 
	\frac{\alpha + 1 + 1 + \alpha }{\alpha + 1 + 1 - x + \alpha(x + y)}$
	が成立するときに最小化される。
	このとき、
	$y = \frac{ \alpha (\alpha + 3) + (-\alpha^2 -4 \alpha + 1)x }
				{ \alpha^2 + 5 \alpha + 2 }$
	が成立し、
	$c_{1}(x) 
		\geq \frac{ \alpha^2 + 5 \alpha + 2 }{ \alpha^2 + 4 \alpha + 2 - x }$
	が成立する。
	$c_{2}(x)$は、
	$\frac{ \alpha + 1 + \alpha + 1 }{ \alpha + 1 + \alpha x + 1 - x + 1 - z } 
	= 
	\frac{ \alpha + 1 + \alpha + \alpha }{ \alpha + 1 + \alpha x + \alpha z }$
	が成立するときに最小化される。
	このとき、
	$z = \frac{ \alpha^2 + 6 \alpha + 1 + (\alpha^2 -4 \alpha - 1)x }
				{ 2 \alpha^2 + 5 \alpha + 1 }$
	が成立し、
	$c_{2}(x) 
		\geq \frac{ 2 \alpha^2 + 5 \alpha + 1 }{ \alpha^2 + 4 \alpha + 1 + \alpha^2 x }$
	が成立する。
	$\min_{x} \max\{	c_{1}(x), c_{2}(x) \}$は、
	$c_{1}(x) = c_{2}(x)$、すなわち、
	$\frac{ \alpha^2 + 5 \alpha + 2 }{ \alpha^2 + 4 \alpha + 2 - x }
		= \frac{ 2 \alpha^2 + 5 \alpha + 1 }{ \alpha^2 + 4 \alpha + 1 + \alpha^2 x }$
	が成立するときに最小化される。
	このとき、
	$x = \frac{ \alpha^4 + 4 \alpha^3 + 2 \alpha^2 + \alpha }{ \alpha^4 + 5 \alpha^3 + 4 \alpha^2 +5  \alpha + 1} $
	が成立し、
	$\min_{x} \max\{	c_{1}(x), c_{2}(x) \}
		\geq \frac{ \alpha^4 + 5 \alpha^3 + 4 \alpha^2 + 5 \alpha + 1 }
				{ \alpha^4 + 4 \alpha^3 + 3 \alpha^2 + 4 \alpha + 1 }
			= 1 + \frac{ \alpha^3 + \alpha^2 +  \alpha }{ \alpha^4 + 4 \alpha^3 + 3 \alpha^2 + 4 \alpha + 1 }
				$
	が成立する。
	\fi
	\ifnum \count11 > 0
	\com{（■英語）}
	Fix an online algorithm $ON$. 
	Our adversary constructs the following input $\sigma$. 
	Let $\sigma(t)$ denote the prefix of the input $\sigma$ up to time $t$. 
	$OPT$ can accept and transmit all arriving packets in this input. 
	$2B$ arrival events occur during time $(0, 1)$, and 
	$B$ packets arrive at $Q^{(1)}$ and $Q^{(m)}$, respectively. 
	In addition, 
	$B$ scheduling events occur during time $(1, 2)$. 
	For $\sigma(2)$, 
	suppose that $ON$ transmits $B(1-x)$ packets and $Bx$ ones from $Q^{(1)}$ and $Q^{(m)}$, respectively. 
	(See Figure~\ref{fig:L1}.)
	After time 2, 
	our adversary selects one queue from $Q^{(1)}$ and $Q^{(m)}$, and makes some packets arrive at the queue. 
	{\bf\boldmath Case 1: If $\alpha x \geq 1 - x$:}
	$B$ arrival events occur during time $(2, 3)$, and 
	$B$ packets arrive at $Q^{(1)}$. 
	Then, 
	the total value of packets which $ON$ accepts by time $3$ is $(\alpha + 1 + 1 - x )B$. 
	Moreover, 
	$B$ scheduling events occur during time $(3, 4)$. 
	For $\sigma(4)$, 
	suppose that $ON$ transmits $B(1-y)$ packets and $By$ packets from $Q^{(1)}$ and $Q^{(m)}$, respectively. 
	(See Figure~\ref{fig:L2}.)
	After time 4, in the same way as time 2, 
	our adversary selects one queue from $Q^{(1)}$ and $Q^{(m)}$, and makes some packets arrive at the queue. 
	{\bf\boldmath Case 1.1: If $\alpha (x+y) \geq 1 - y$:}
	$B$ arrival events occur during time $(4, 5)$, and 
	$B$ packets arrive at $Q^{(1)}$. 
	Furthermore, 
	$2B$ scheduling events occur during time $(5, 6)$. 
	For this input,
	$V_{ON}(\sigma) = ( \alpha + 1 + 1 - x + 1 - y )B$, and 
	$V_{OPT}(\sigma) = ( \alpha + 1 + 1 + 1 )B$. 
	{\bf\boldmath Case 1.2: If $\alpha (x+y) < 1 - y$:}
	$B$ arrival events occur during time $(4, 5)$, and 
	$B$ packets arrive at $Q^{(m)}$. 
	Moreover, 
	$2B$ scheduling events occur during time $(5, 6)$. 
	For this input,
	$V_{ON}(\sigma) = ( \alpha + 1 + 1 - x + \alpha(x + y) )B$, and 
	$V_{OPT}(\sigma) = ( \alpha + 1 + 1 + \alpha )B$. 
	{\bf\boldmath Case 2: If $\alpha x < 1 - x$:}
	$B$ arrival events occur during time $(2, 3)$, and 
	$B$ packets arrive at $Q^{(m)}$. 
	Then, 
	the total value of packets which $ON$ accepts by time $3$ is $( \alpha + 1 + \alpha x )B$.  
	Moreover, 
	$B$ scheduling events occur during time $(3, 4)$. 
	For $\sigma(4)$, 
	$ON$ transmits $B(1-z)$ packets and $Bz$ ones from $Q^{(1)}$ and $Q^{(m)}$, respectively during time $(3, 4)$. 
	(See Figure~\ref{fig:L3}.)
	After time 4, in the same way as the above case, 
	our adversary selects one queue from $Q^{(1)}$ and $Q^{(m)}$, and causes some packets to arrive at the queue. 
	{\bf\boldmath Case 2.1: If $\alpha z \geq 1 - x + 1 - z$:}
	$B$ arrival events occur during time $(4, 5)$, and 
	$B$ packets arrive at $Q^{(1)}$. 
	Also,
	$2B$ scheduling events occur during time $(5, 6)$. 
	For this input, 
	$V_{ON}(\sigma) = ( \alpha + 1 + \alpha x + 1 - x + 1 - z )B$, and 
	$V_{OPT}(\sigma) = ( \alpha + 1 + \alpha + 1 )B$. 
	{\bf\boldmath Case 2.2: If $\alpha z < 1 - x + 1 - z$:}
	$B$ arrival events occur during time $(4, 5)$, and 
	$B$ packets arrive at $Q^{(m)}$. 
	In addition, 
	$2B$ scheduling events occur during time $(5, 6)$. 
	For this input, 
	$V_{ON}(\sigma) = ( \alpha + 1 + \alpha x + \alpha z )B$, and 
	$V_{OPT}(\sigma) = ( \alpha + 1 + \alpha + \alpha )B$. 
	By the above argument, 
	we define $c_{1}(x) = \min_{y} \max\{
				\frac{ \alpha + 1 + 1 + 1 }{ \alpha + 1 + 1 - x + 1 - y }, 
				\frac{\alpha + 1 + 1 + \alpha }{\alpha + 1 + 1 - x + \alpha(x + y)}
			\}
	$ and 
	$c_{2}(x) = \min_{z} \max\{
				\frac{ \alpha + 1 + \alpha + 1 }{ \alpha + 1 + \alpha x + 1 - x + 1 - z }, 
				\frac{ \alpha + 1 + \alpha + \alpha }{ \alpha + 1 + \alpha x + \alpha z }
			\}
	$. 
	Then, 
	$\frac{V_{OPT}(\sigma)}{V_{ON}(\sigma)} 
		\geq
			\min_{x} \max\{
				c_{1}(x), c_{2}(x)
			\}
	$. 
	$c_{1}(x)$ is minimized when 
	$\frac{ \alpha + 1 + 1 + 1 }{ \alpha + 1 + 1 - x + 1 - y } 
	= 
	\frac{\alpha + 1 + 1 + \alpha }{\alpha + 1 + 1 - x + \alpha(x + y)}$. 
	Then, 
	$y = \frac{ \alpha (\alpha + 3) + (-\alpha^2 -4 \alpha + 1)x }
				{ \alpha^2 + 5 \alpha + 2 }$.  
	Thus, 
	$c_{1}(x) 
		\geq \frac{ \alpha^2 + 5 \alpha + 2 }{ \alpha^2 + 4 \alpha + 2 - x }$. 
	$c_{2}(x)$ is minimized when 
	$\frac{ \alpha + 1 + \alpha + 1 }{ \alpha + 1 + \alpha x + 1 - x + 1 - z } 
	= 
	\frac{ \alpha + 1 + \alpha + \alpha }{ \alpha + 1 + \alpha x + \alpha z }$. 
	Then, 
	$z = \frac{ \alpha^2 + 6 \alpha + 1 + (\alpha^2 -4 \alpha - 1)x }
				{ 2 \alpha^2 + 5 \alpha + 1 }$. 
	Hence, 
	$c_{2}(x) 
		\geq \frac{ 2 \alpha^2 + 5 \alpha + 1 }{ \alpha^2 + 4 \alpha + 1 + \alpha^2 x }$. 
	Finally, 
	$\min_{x} \max\{	c_{1}(x), c_{2}(x) \}$ is minimized when 
	$c_{1}(x) = c_{2}(x)$, that is 
	$\frac{ \alpha^2 + 5 \alpha + 2 }{ \alpha^2 + 4 \alpha + 2 - x }
		= \frac{ 2 \alpha^2 + 5 \alpha + 1 }{ \alpha^2 + 4 \alpha + 1 + \alpha^2 x }$. 
	Therefore, 
	since $x = \frac{ \alpha^4 + 4 \alpha^3 + 2 \alpha^2 + \alpha }{ \alpha^4 + 5 \alpha^3 + 4 \alpha^2 +5  \alpha + 1}$, 
	$\min_{x} \max\{	c_{1}(x), c_{2}(x) \}
		\geq \frac{ \alpha^4 + 5 \alpha^3 + 4 \alpha^2 + 5 \alpha + 1 }
				{ \alpha^4 + 4 \alpha^3 + 3 \alpha^2 + 4 \alpha + 1 }
			= 1 + \frac{ \alpha^3 + \alpha^2 +  \alpha }{ \alpha^4 + 4 \alpha^3 + 3 \alpha^2 + 4 \alpha + 1 }
				$. 
	\fi
\end{proof}
\ifnum \count12 > 0
\begin{figure*}
 \begin{center}
  \includegraphics[width=150mm]{./fig_L1.eps}
 \end{center}
 \caption{States of queues at time 2}
\label{fig:L1}
 \end{figure*}

\begin{figure*}[t]
 \begin{center}
  \includegraphics[width=150mm]{./fig_L2.eps}
 \end{center}
 \caption{States of queues at time 4 via Case 1}
 \label{fig:L2}
\end{figure*}
\begin{figure*}[t]
 \begin{center}
  \includegraphics[width=150mm]{./fig_L3.eps}
 \end{center}
 \caption{States of queues at time 4 via Case 2}
 \label{fig:L3}
\end{figure*}
\fi
%
\fi

\section{Concluding Remarks}
A lot of packets used by multimedia applications arrive in a QoS switch at a burst, 
and managing queues to store outgoing packets (egress traffic) can become a bottleneck. 
In this paper, 
we have formulated the problem of controlling egress traffic, 
and analyzed Priority Queuing policies ($PQ$) using competitive analysis. 
We have shown that the competitive ratio of $PQ$ is exactly 
$2 - \min_{x \in [1, m-1] } \{ \frac{ \alpha_{x+1} }{ \sum_{j = 1}^{x+1} \alpha_{j} } \}$. 
Moreover, 
we have shown that there is no $1 + \frac{ \alpha^3 + \alpha^2 +  \alpha }{ \alpha^4 + 4 \alpha^3 + 3 \alpha^2 + 4 \alpha + 1 }$-competitive deterministic algorithm. 

We present some open questions as follows: 
(i) 
What is the competitive ratio of other practical policies, such as $WRR$? 
(ii) 
We consider the case where the size of each packet is one, namely fixed. 
In the setting where packets with variable sizes arrive, 
what is the competitive ratio of $PQ$ or other policies? 
(iii)
We are interested in comparing our results with experimental results using measured data in QoS switches. 
(iv)
The goal was to maximize the sum of the values of the transmitted packets in this paper, 
which is generally used for the online buffer management problems. 
However, this may not be able to evaluate the actual performance of practical scheduling algorithms correctly. 
(We showed that the worst scenario for $PQ$ is extreme in this paper.) 
What if another objective function (e.g., fairness) is used for evaluating the performance of a scheduling algorithm?
(v)
An obvious open question is to close the gap between the competitive ratio of $PQ$ and our lower bound for any deterministic algorithm. 
%


\newpage

\newpage

\appendix

\section{Comparing Both Upper Bounds} \label{ap.sec:0}
\ifnum \count10 > 0
我々の結果：
$2 - \min_{ x \in [1, m-1] } \{ \frac{ \alpha_{x+1} }{ \sum_{j = 1}^{x+1} \alpha_{j} } \} 
	= 1 + \max_{ x \in [1, m-1] } \{ \frac{ \sum_{j = 1}^{x} \alpha_{j} }{ \sum_{j = 1}^{x+1} \alpha_{j} } \}$
と
Al-Bawani and Souza \cite{KA11}の結果：
$2 - \min_{ j \in [1, m-1]} \{ \frac{ \alpha_{j+1} - \alpha_{j} }{ \alpha_{j+1} } \} = 1 + \max_{j \in [1, m-1] } \{ \frac{ \alpha_{j} }{ \alpha_{j+1} }  \}$
が成立しているので、
以下では、
$\max_{ x \in [1, m-1] } \{ \frac{ \sum_{j = 1}^{x} \alpha_{j} }{ \sum_{j = 1}^{x+1} \alpha_{j} } \} < \max_{ j \in [1, m-1] } \{ \frac{ \alpha_{j} }{ \alpha_{j+1} }  \}$
を示す。
次の様に$a, b$を定義する。
$a \in \arg \max_{ j \in [1, m-1] } \{ \frac{ \alpha_{j} }{ \alpha_{j+1} } \}$、
$b \in \arg \max_{ x \in [1, m-1] } \{ \frac{ \sum_{j = 1}^{x} \alpha_{j} }{ \sum_{j = 1}^{x+1} \alpha_{j} } \}$。
このとき、
$\frac{ \alpha_{a} }{ \alpha_{a+1} } 
	\geq \frac{ \sum_{j = 1}^{b} \alpha_{j} }{ \sum_{j = 1}^{b} \alpha_{j+1} }
	> \frac{ \sum_{j = 1}^{b} \alpha_{j} }{ \alpha_{1} + \sum_{j = 1}^{b} \alpha_{j+1} }
	= \frac{ \sum_{j = 1}^{b} \alpha_{j} }{ \sum_{j = 1}^{b+1} \alpha_{j} }$
が得られる。
\fi
\ifnum \count11 > 0
\com{（■英語）}
Our upper bound is 
\[
	2 - \min_{ x \in [1, m-1] } \{ \frac{ \alpha_{x+1} }{ \sum_{j = 1}^{x+1} \alpha_{j} } \} 
		= 1 + \max_{ x \in [1, m-1] } \{ \frac{ \sum_{j = 1}^{x} \alpha_{j} }{ \sum_{j = 1}^{x+1} \alpha_{j} } \}
\]
 and 
the upper bound by Al-Bawani and Souza \cite{KA11} is 
\[
	2 - \min_{ j \in [1, m-1]} \{ \frac{ \alpha_{j+1} - \alpha_{j} }{ \alpha_{j+1} } \} = 1 + \max_{j \in [1, m-1] } \{ \frac{ \alpha_{j} }{ \alpha_{j+1} }  \}
.\] 
Now we show that 
\[
	\max_{ x \in [1, m-1] } \{ \frac{ \sum_{j = 1}^{x} \alpha_{j} }{ \sum_{j = 1}^{x+1} \alpha_{j} } \} 
		< \max_{ j \in [1, m-1] } \{ \frac{ \alpha_{j} }{ \alpha_{j+1} }  \}
.\]
Define 
$a \in \arg \max_{ j \in [1, m-1] } \{ \frac{ \alpha_{j} }{ \alpha_{j+1} } \}$ and 
$b \in \arg \max_{ x \in [1, m-1] } \{ \frac{ \sum_{j = 1}^{x} \alpha_{j} }{ \sum_{j = 1}^{x+1} \alpha_{j} } \}$. 
Then, we have that 
\[
	\frac{ \alpha_{a} }{ \alpha_{a+1} } 
		\geq \frac{ \sum_{j = 1}^{b} \alpha_{j} }{ \sum_{j = 1}^{b} \alpha_{j+1} }
		> \frac{ \sum_{j = 1}^{b} \alpha_{j} }{ \alpha_{1} + \sum_{j = 1}^{b} \alpha_{j+1} }
		= \frac{ \sum_{j = 1}^{b} \alpha_{j} }{ \sum_{j = 1}^{b+1} \alpha_{j} }
.\]
\fi
%


%
\section{Restriction of Input} \label{ap.sec:1}
\begin{LMA} \label{LMA:a.1}
	\ifnum \count10 > 0
	$\sigma$を$OPT$がパケットをrejectする様なarrival eventが発生する入力とする。
	このとき、
	次の様な入力$\sigma'$が存在する。
	$\frac{V_{OPT}(\sigma)}{V_{PQ}(\sigma)} \leq \frac{V_{OPT}(\sigma')}{V_{PQ}(\sigma')}$
	が成立し、
	$\sigma'$の全てのarrival eventにおいて、$OPT$はパケットをacceptする。
	\fi
	\ifnum \count11 > 0
	\com{（■英語）}
	Let $\sigma$ be an input such that $OPT$ rejects at least one packet at an arrival event. 
	Then, there exists an input $\sigma'$ such that 
	$\frac{V_{OPT}(\sigma)}{V_{PQ}(\sigma)} \leq \frac{V_{OPT}(\sigma')}{V_{PQ}(\sigma')}$ and 
	$OPT$ accepts all arriving packets. 
	\fi
\end{LMA}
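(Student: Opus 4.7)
\begin{proofsketch}
The plan is to construct $\sigma'$ by iteratively deleting the earliest arrival event at which $OPT$ rejects a packet; it therefore suffices to prove that a single such deletion cannot decrease the ratio $V_{OPT}/V_{PQ}$. Let $e$ be an arrival event in $\sigma$ at which $OPT$ rejects a packet $p$ arriving at $Q^{(i)}$ (so $OPT$'s $Q^{(i)}$ is full just before $e$), and define $\sigma'$ to be $\sigma$ with $e$ removed.

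For the numerator, I would argue $V_{OPT}(\sigma') \geq V_{OPT}(\sigma)$ by considering the offline algorithm $ALG$ on $\sigma'$ that copies every scheduling choice of $OPT$ on $\sigma$. Since $p$ was rejected in $\sigma$, this mimicking is well defined, accepts and transmits exactly the same packets as $OPT$ does on $\sigma$, and hence achieves $V_{ALG}(\sigma')=V_{OPT}(\sigma)$; the optimality of $OPT$ on $\sigma'$ finishes this side.

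The main step, and the main obstacle, is bounding $V_{PQ}(\sigma')$ from above by $V_{PQ}(\sigma)$. I intend to do this by coupling the two runs of $PQ$ on $\sigma$ and on $\sigma'$. If $PQ$ also rejects $p$ on $\sigma$ the two runs coincide and we are done, so suppose $PQ$ accepts $p$ on $\sigma$. I will maintain the following invariant at every later time $t$: either the two buffer vectors agree exactly, or they differ in exactly one queue $Q^{(i^*(t))}$ with $i^*(t) \leq i$, in which case $PQ$-on-$\sigma$ holds one more packet there than $PQ$-on-$\sigma'$. The case analysis for the next event is: an arrival at $Q^{(j)}$ with $j \neq i^*$ is handled identically by both runs; an arrival at $Q^{(i^*)}$ in the regime $h_{PQ}^{(i^*)}(\sigma)=B$, $h_{PQ}^{(i^*)}(\sigma')=B-1$ is rejected on $\sigma$ and accepted on $\sigma'$, restoring equality of states without any instantaneous change in transmitted value; a scheduling event where both runs pick the same queue leaves the invariant intact; and a scheduling event where $PQ$-on-$\sigma$ picks $Q^{(i^*)}$ while $PQ$-on-$\sigma'$ picks some $Q^{(j)}$ with $j<i^*$ (including the degenerate case where the $\sigma'$-buffer is empty) shifts the discrepancy to $Q^{(j)}$ and contributes $\alpha_{i^*}-\alpha_j \geq 0$ to $V_{PQ}(\sigma)-V_{PQ}(\sigma')$. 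Accumulating these contributions across all events yields $V_{PQ}(\sigma) \geq V_{PQ}(\sigma')$.

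Putting both inequalities together gives $V_{OPT}(\sigma)/V_{PQ}(\sigma) \leq V_{OPT}(\sigma')/V_{PQ}(\sigma')$, and iterating the deletion until no $OPT$-rejected arrival remains produces the required $\sigma'$. The delicate point is really the coupling analysis: one must verify that the highest-index tie-breaking of $PQ$ forces the ``extra'' packet carried by the $\sigma$-run to migrate only to queues of weakly smaller index, so that when the difference in states is eventually erased it has never cost the $\sigma'$-run more transmitted value than it returned to the $\sigma$-run.
\end{proofsketch}
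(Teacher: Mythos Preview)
Your proposal is correct and follows essentially the same approach as the paper: delete the first $OPT$-rejected arrival, note that $V_{OPT}$ cannot decrease, and argue $V_{PQ}$ cannot increase, then iterate. The paper's justification of $V_{PQ}(\sigma'')\le V_{PQ}(\sigma)$ is terser---it observes that removing a packet from $Q^{(i)}$ leaves the behaviour of $PQ$ on queues with index $>i$ untouched, so any packet newly accepted in the $\sigma''$-run must sit in some $Q^{(k)}$ with $k\le i$, and concludes directly---whereas your explicit coupling invariant (the single surplus packet migrating only to queues of weakly smaller index) makes precise exactly the mechanism the paper is relying on. In particular, your invariant transparently shows that at most one such newly accepted packet can arise, a point the paper leaves implicit.
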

\begin{proof}
	\ifnum \count10 > 0
	入力$\sigma$において、
	$OPT$がpacket $p$をrejectする最初のarrival event $e$が時刻$t$に発生するとする。
	$\sigma$から、$e$を取り除いて新しい入力$\sigma''$を構成する。
	このとき、$\sigma''$に対して、時刻$t$より後の時刻に、
	$PQ$は$\sigma$ではacceptできなかったpacket $q$をacceptできる可能性がある。
	そこで、
	$PQ$はscheduling eventが発生すると、
	その直前にキュー内に存在する全てのpacketに優先順位を割り当てて、
	その中で最も高い順位をもつパケットをscheduleするとする。
	$Q^{(i)}$を、$e$において$p$が到着するキューとする。
	このとき、
	$\sigma''$では、
	$t$以降のscheduling eventにおいては、
	$Q^{(j)} \hspace{1mm} (j \leq i)$に保持されているパケットの優先順位は高くなるが、
	$Q^{(j)} \hspace{1mm} (j > i)$に保持されているの優先順位は変化しない。
	すなわち、
	$\sigma$と$\sigma''$において、
	$Q^{(j)} \hspace{1mm} (j > i)$に保持されているパケットのscheduleされる時刻は変化しないので、
	$Q^{(j)} \hspace{1mm} (j > i)$に保持されているパケットの数も変化しない。
	よって、
	$\alpha_{k}$を、$q$の価値をとする。
	このとき、$i \geq k$が成立している。
	したがって、
	$V_{PQ}(\sigma'') \leq V_{PQ}(\sigma)$
	が成立する。
	一方、
	$V_{OPT}(\sigma'') = V_{OPT}(\sigma)$
	が成立するので、
	$\frac{V_{OPT}(\sigma)}{V_{PQ}(\sigma)} \leq \frac{V_{OPT}(\sigma'')}{V_{PQ}(\sigma'')}$. 
	結果として、
	$\sigma$から、$OPT$がパケットをrejectする様な全てのarrival eventを取り除いて$\sigma'$を構成すると、
	$\frac{V_{OPT}(\sigma)}{V_{PQ}(\sigma)} \leq \frac{V_{OPT}(\sigma')}{V_{PQ}(\sigma')}$
	が成立する。
	\fi
	\ifnum \count11 > 0
	\com{（■英語）}
	Let $e$ be the first arrival event where $OPT$ rejects a packet,  
	let $p$ be the arriving packet at $e$, and 
	let $t$ be the event time when $e$ happens. 
	We construct a new input $\sigma''$ by removing $e$ from a given input $\sigma$. 
	Then, 
	$PQ$ for $\sigma''$ might accept a packet $q$ which is not accepted for $\sigma$ after $t$. 
	Suppose that $PQ$ handles priorities to packets in its buffers, and transmits the packet with the highest priority at each scheduling event. 
	Let $Q^{(i)}$ be a queue at which $p$ arrives at $e$. 
	Then, at a scheduling event after $t$, 
	a priority which $PQ$ handles to a packet in $Q^{(j)} \hspace{1mm} (j \leq i)$ for $\sigma''$ is higher than that for $\sigma$. 
	However, 
	a priority which $PQ$ handles to a packet in $Q^{(j)} \hspace{1mm} (j > i)$ for $\sigma''$ is equal to that for $\sigma$. 
	Thus, 
	a time when a packet is transmitted from $Q^{(j)} \hspace{1mm} (j > i)$ in $\sigma''$ is the same as that in $\sigma$. 
	Also, 
	the number of packets which $PQ$ stores in $Q^{(j)} \hspace{1mm} (j > i)$ in $\sigma''$ is equivalent to that in $\sigma$. 
	Let $k$ be the integer such that $\alpha_{k}$ is the value of $q$. 
	Then, $i \geq k$ holds. 
	Hence, 
	$V_{PQ}(\sigma'') \leq V_{PQ}(\sigma)$. 
	On the other hand, 
	$V_{OPT}(\sigma'') = V_{OPT}(\sigma)$. 
	According  to the inequality and the equality, 
	$\frac{V_{OPT}(\sigma)}{V_{PQ}(\sigma)} \leq \frac{V_{OPT}(\sigma'')}{V_{PQ}(\sigma'')}$. 
	As a result, 
	we construct a new input $\sigma'$ by removing all arrival events at which $OPT$ rejects a packet from $\sigma$. 
	Then, 
	$\frac{V_{OPT}(\sigma)}{V_{PQ}(\sigma)} \leq \frac{V_{OPT}(\sigma')}{V_{PQ}(\sigma')}$. 
	\fi
\end{proof}
%

%
\ifnum \count14 > 0

%
\section{Proofs of Lemmas and Theorem} \label{ap.sec:3}
\subsection{Proof of Lemma~\ref{LMA:3.2.1}}
%
	%
	\ifnum \count10 > 0
	（■日本語）
	\com{（■英語と内容が違うんです）}\\
	$PQ$はその定義より、
	常に優先度の高いキューを選択する。
	よって、
	任意のnon-event time $t$に、
	$h_{PQ}^{(m)}(t) \leq h_{OPT}^{(m)}(t)$
	が成立する。
	また、
	Lemma~\ref{LMA:a.1}より、
	$OPT$はパケットをrejectしないので、
	$OPT$は$Q^{(m)}$にarriveするパケットを全て受理する。
	よって、
	$PQ$も$Q^{(m)}$にarriveするパケットを全て受理することが出来る。
	よって、
	${k}_{m} = 0$
	が成立する。
	\fi
	\ifnum \count11 > 0
	\com{（■英語）}
	By the definition of $PQ$, $PQ$ selects the non-empty queue with the	highest priority. 
	Thus, $h_{PQ}^{(m)}(t) \leq h_{OPT}^{(m)}(t)$ holds at any non-event time $t$. 
	Therefore, there is no free cell in $Q^{(m)}$ of $OPT$ at any time. 
	Since any extra packet is accepted to a free cell, 
	${k}_{m} = 0$.
	\fi
	%

%
\subsection{Proof of Lemma~\ref{LMA:3.2.3}}
%

	%
	\ifnum \count10 > 0
	$z$を、
	$s_{z}(\sigma) > B$が成立する最小の整数とする。
	このとき、次の3つの条件をみたす様な3つのevent time $t_{1}, t_{2} (> t_{1})$と$t_{3} (> t_{2})$が存在する。
	(i) $t_{2}$は、$Q^{(z)}$において$PQ$がacceptする$B+1$個目のpacketが到着するevent timeである。
	(ii) 
	時間$(t_{1}, t_{2})$の間に、$OPT$は$Q^{(z)}$からpacketをtransmitしない。
	ただし、$t_{1}$は、$OPT$が$Q^{(z)}$からpacketをtransmitするevent timeである。
		（仮定より、$OPT$は全てのpacketを受理するので、$t_{2}$より前に必ず$Q^{(z)}$からpacketをscheduleする。）
	(iii) 
	時間$(t_{2}, t_{3})$の間に、$PQ$は$Q^{(z)}$からpacketをtransmitしない。
	ただし、$t_{3}$は、$PQ$が$Q^{(z)}$からpacketをtransmitするevent timeである。
	このとき、$\sigma$から$t_{1}$と$t_{2}$に発生するeventを除いて$\sigma'$を構成する。
	もし、
	$\frac{V_{OPT}(\sigma)}{V_{PQ}(\sigma)} < \frac{V_{OPT}(\sigma')}{V_{PQ}(\sigma')}$
	が成立するならば、
	$\{ x \mid s_{x}(\sigma) > B \}$に含まれるキューの番号$j$の昇順に、
	$Q^{(j)}$に関するeventを取り除けば、
	\com{（■書き方うーむ。）}
	各$j (\in [1, m])$に対して、
	$s_{j}(\hat{\sigma}) \leq B$
	かつ
	$\frac{V_{OPT}(\sigma)}{V_{PQ}(\sigma)} < \frac{V_{OPT}(\hat{\sigma})}{V_{PQ}(\hat{\sigma})}$
	が成立する入力$\hat{\sigma}$を構成することが出来る。
	これは題意がみたされる。
	よって、以下では、
	$\frac{V_{OPT}(\sigma)}{V_{PQ}(\sigma)} < \frac{V_{OPT}(\sigma')}{V_{PQ}(\sigma')}$
	が成立することを示す。
	まず、
	$\sigma'$に対して$OPT$が得る価値を評価する。
	$ALG$を次の様なオフラインアルゴリズムとする。
	$\sigma'$の各event timeにおいて、$\sigma$において$OPT$が選ぶキューと同じキューを選ぶオフラインアルゴリズムとする。
	このとき、
	時間$(t_{1}, t_{3})$の間における$\sigma'$に対する$ALG$のバッファ内のパケット数について考える。
	任意のnon-event time $t (\in (t_{1}, t_{3}))$、任意の$y (\ne z)$に対して、
	$h_{ALG}^{(y)}(t) = h_{OPT}^{(y)}(t)$
	が成立する。
	任意のnon-event time $t (\in (t_{1}, t_{2}))$に対して、
	$h_{ALG}^{(z)}(t) = h_{OPT}^{(z)}(t) + 1$
	が成立し、また、
	任意のnon-event time $t (\in (t_{2}, t_{3}))$に対して、
	$h_{ALG}^{(z)}(t) = h_{OPT}^{(z)}(t)$
	が成立する。	
	以上より
	$V_{OPT}(\sigma') \geq V_{ALG}(\sigma') = V_{OPT}(\sigma) - \alpha_{z}$
	が成立する。
	次に、$\sigma'$に対して$PQ$が得る価値を評価する。
	簡単のため、
	$\sigma'$に対する$PQ$を$PQ'$と表記する。
	まず、
	時間$(t_{1}, t_{3})$の間に、
	$PQ$が受理するが、$PQ'$が非受理するpacketが存在しない場合を考える。
	この場合の$PQ'$の利得を評価するために
	時間$t_{1}$後の$PQ$と$PQ'$のバッファ内のpacket数について論じる。
	任意のnon-event time $t (\in (t_{1}, t_{2}))$に対して、
	$\sum_{j = 1}^{m} h_{PQ'}^{(j)}(t) = \sum_{j = 1}^{m} h_{PQ}^{(j)}(t) + 1$
	が成立する。
	任意のnon-event time $\hat{t}$に対して、
	$w(\hat{t}) = \arg \max\{ j \mid h_{PQ'}^{(j)}(\hat{t}) > 0 \}$と定義する。
	具体的には、
	$h_{PQ'}^{(w(t))}(t) = h_{PQ}^{(w(t))}(t) + 1$
	が成立する。
	（性質(a)と呼ぶ。）
	また、
	任意のnon-event time $t (\in (t_{2}, t_{3}))$に対して、
	$\sum_{j = 1}^{m} h_{PQ'}^{(j)}(t) = \sum_{j = 1}^{m} h_{PQ}^{(j)}(t)$
	が成立する。
	ただし、
	$w(t) > z$ならば、
	$h_{PQ'}^{(w(t))}(t) = h_{PQ}^{(w(t))}(t) + 1$
	が成立し、
	$h_{PQ'}^{(z)}(t) = h_{PQ}^{(z)}(t) - 1$
	が成立する。
	$w(t) = z$ならば、
	任意の$j (\in [1, m])$に対して、
	$h_{PQ'}^{(j)}(t) = h_{PQ}^{(j)}(t)$
	が成立する。
	任意のnon-event time $t (> t_{3})$、
	任意の$j (\in [1, m])$に対して、
	$h_{PQ'}^{(j)}(t) = h_{PQ}^{(j)}(t)$
	が成立する。
	以上より、
	$V_{PQ}(\sigma') = V_{PQ}(\sigma) - \alpha_{z}$
	が成立する。
	次に
	$PQ$がacceptし、$PQ'$がrejectするpacketが少なくとも1つは到着する場合を考える。
	$t'$を、
	$PQ$がacceptし、$PQ'$がrejectする様なpacket $p$が初めてarriveするevent timeとする。
	$t' \in (t_{1}, t_{2})$が成立する場合を考える。
	$z$の定義より、そのpacketは$z' \geq z$が成立する$Q^{(z')}$に到着するpacketである。
	性質(a)より、
	任意の$j (\in [1, m])$に対して、
	$h_{PQ'}^{(j)}(t'+) = h_{PQ}^{(j)}(t'+)$
	が成立する。
	よって、
	時間$(t', t_{2})$の間に、$PQ$が受理するpacketは、$PQ'$も受理することが出来る。
	$\sigma'$の定義より、
	$t_{2}$において$PQ$のみが$Q^{(z)}$に到着するpacketを受理するので、
	$h_{PQ'}^{(z)}(t_{2}+) = h_{PQ}^{(z)}(t_{2}+) - 1$
	が成立し、
	任意の$j (\in [1, m])$（ただし、$j \ne z$）に対して、
	$h_{PQ'}^{(j)}(t_{2}+) = h_{PQ}^{(j)}(t_{2}+)$
	が成立する。
	（性質(b)と呼ぶ。）
	$t_{2}$の後に、$PQ$と$PQ'$が受理するpacketが全て同じであれば、
	$V_{PQ}(\sigma') = V_{PQ}(\sigma) - \alpha_{z} - \alpha_{z'}$
	が成立する。
	次に、
	$t_{2}$の後に、
	$PQ$がrejectし、$PQ'$がacceptするpacket $p'$が存在する場合を考える。
	$PQ$の定義と性質(b)より、
	任意のnon-event time $t (> t_{2})$、任意の$z' (\geq z+1)$に対して、
	$h_{PQ'}^{(z')}(t) = h_{PQ}^{(z')}(t)$
	が成立する。
	よって、
	$p'$はある$z'' \leq z$が成立する$Q^{(z'')}$に到着するpacketである。
	$p'$が到着するevent timeを$t''$とする。
	任意の$j (\in [1, m])$に対して、
	$h_{PQ'}^{(j)}(t''+) = h_{PQ}^{(j)}(t''+)$
	が成立する。
	したがって、
	$t''$より後に$PQ$と$PQ'$が受理するpacketは一緒である。
	したがって、
	$V_{PQ}(\sigma') = V_{PQ}(\sigma) - \alpha_{z} - \alpha_{z'} + \alpha_{z''} \leq V_{PQ}(\sigma) - \alpha_{z}$
	が成立する。
	最後に
	$t' \in (t_{2}, t_{3})$が成立する場合を考える。
	上の場合と全く同じ議論が成立する。
	具体的には、
	$t'$の後に、
	$PQ$がrejectし、$PQ'$がacceptする様なpacketは高々1つしか存在しない。
	また、そのpacketはある$z''' \leq z$が成立する$Q^{(z''')}$に到着するpacketである。
	よって、
	$V_{PQ}(\sigma') = V_{PQ}(\sigma) - \alpha_{z} - \alpha_{z'} + \alpha_{z'''} \leq V_{PQ}(\sigma) - \alpha_{z}$
	が成立する。
	以上の議論より、
	$\frac{V_{OPT}(\sigma')}{V_{PQ}(\sigma')} 
		\geq \frac{V_{ALG}(\sigma')}{V_{PQ}(\sigma')} 
		\geq \frac{V_{OPT}(\sigma) - \alpha_{z}}{V_{PQ}(\sigma) - \alpha_{z}}
		> \frac{V_{OPT}(\sigma)}{V_{PQ}(\sigma)}$
	が成立する。
	\fi
	\ifnum \count11 > 0
	\com{（■英語）}
	Let $z$ be the minimum index such that $s_{z}(\sigma) > B$. 
	Then, 
	there exist the three event times $t_{1}, t_{2} (> t_{1})$ and $t_{3} (> t_{2})$ satisfying the following three conditions: 
	(i) $t_{2}$ is the arrival event time when the $(B+1)$st packet which $PQ$ accepts at $Q^{(z)}$ arrives, 
	(ii) 
	$OPT$ does not transmit any packet from $Q^{(z)}$ during time $(t_{1}, t_{2})$, 
	where $t_{1}$ is the event time when $OPT$ transmits a packet from $Q^{(z)}$, 
	(Since $OPT$ accepts any arriving packet by our assumption, $OPT$ certainly transmits at least one packet from $Q^{(z)}$ before $t_{2}$.) 
	and 
	(iii) 
	$PQ$ does not transmit any packet from $Q^{(z)}$ during time $(t_{2}, t_{3})$, 
	where $t_{3}$ is the event time when $PQ$ transmits a packet from $Q^{(z)}$. 
	We construct $\sigma'$ by removing the events at $t_{1}$ and $t_{2}$ from $\sigma$. 
	Suppose that $\frac{V_{OPT}(\sigma)}{V_{PQ}(\sigma)} < \frac{V_{OPT}(\sigma')}{V_{PQ}(\sigma')}$. 
	If we remove some events corresponding to $Q^{(j)}$ in ascending order of index $j$ in $\{ x \mid s_{x}(\sigma) > B \}$, 
	then we can construct an input $\hat{\sigma}$ such that 
	for each $j (\in [1, m])$, 
	$s_{j}(\hat{\sigma}) \leq B$, 
	and 
	$\frac{V_{OPT}(\sigma)}{V_{PQ}(\sigma)} < \frac{V_{OPT}(\hat{\sigma})}{V_{PQ}(\hat{\sigma})}$, which completes the proof. 
	Hence, 
	we next show that $\frac{V_{OPT}(\sigma)}{V_{PQ}(\sigma)} < \frac{V_{OPT}(\sigma')}{V_{PQ}(\sigma')}$. 
	First, 
	we discuss the gain of $OPT$ for $\sigma'$. 
	Let $ALG$ be the offline algorithm for $\sigma'$ such that 
	for each scheduling event $e$ in $\sigma'$, 
	$ALG$ selects the queue which $OPT$ selects at $e$ in $\sigma$. 
	We consider the number of packets in $ALG$'s buffer during time $(t_{1}, t_{3})$ for $\sigma'$. 
	For any non-event time $t (\in (t_{1}, t_{3}))$, and any $y (\ne z)$, 
	$h_{ALG}^{(y)}(t) = h_{OPT}^{(y)}(t)$. 
	For any non-event time $t (\in (t_{1}, t_{2}))$, 
	$h_{ALG}^{(z)}(t) = h_{OPT}^{(z)}(t) + 1$. 
	Also, for any non-event time $t (\in (t_{2}, t_{3}))$, 
	$h_{ALG}^{(z)}(t) = h_{OPT}^{(z)}(t)$. 
	By the above argument,
	$V_{OPT}(\sigma') \geq V_{ALG}(\sigma') = V_{OPT}(\sigma) - \alpha_{z}$. 
	Next, we evaluate the gain of $PQ$ for $\sigma'$. 
	For notational simplicity, 
	we describe $PQ$ for $\sigma'$ as $PQ'$. 
	First, 
	we consider the case where there does not exist any packet which $PQ$ accepts but $PQ'$ rejects during time $(t_{1}, t_{3})$. 
	To evaluate the gain of $PQ'$ in this case, 
	we discuss the numbers of packets which $PQ$ and $PQ'$ store in their buffers after $t_{1}$. 
	For any non-event time $t (\in (t_{1}, t_{2}))$, 
	$\sum_{j = 1}^{m} h_{PQ'}^{(j)}(t) = \sum_{j = 1}^{m} h_{PQ}^{(j)}(t) + 1$. 
	For any non-event time $\hat{t}$, 
	we define $w(\hat{t}) = \arg \max\{ j \mid h_{PQ'}^{(j)}(\hat{t}) > 0 \}$. 
	Specifically, 
	$h_{PQ'}^{(w(t))}(t) = h_{PQ}^{(w(t))}(t) + 1$. 
	(We call this fact the property (a).)
	Moreover, 
	for any non-event time $t (\in (t_{2}, t_{3}))$, 
	$\sum_{j = 1}^{m} h_{PQ'}^{(j)}(t) = \sum_{j = 1}^{m} h_{PQ}^{(j)}(t)$. 
	However, 
	if $w(t) > z$, 
	then $h_{PQ'}^{(w(t))}(t) = h_{PQ}^{(w(t))}(t) + 1$. 
	Also, 
	$h_{PQ'}^{(z)}(t) = h_{PQ}^{(z)}(t) - 1$. 
	If $w(t) = z$, 
	then for any $j (\in [1, m])$, $h_{PQ'}^{(j)}(t) = h_{PQ}^{(j)}(t)$. 
	For any non-event time $t (> t_{3})$ and 
	any $j (\in [1, m])$, 
	$h_{PQ'}^{(j)}(t) = h_{PQ}^{(j)}(t)$. 
	By the above argument, 
	$V_{PQ}(\sigma') = V_{PQ}(\sigma) - \alpha_{z}$ holds. 
	Secondly, 
	we consider the case where there exists at least one packet which $PQ$ accepts but $PQ'$ rejects. 
	Let $t'$ be the first event time when the packet $p$ which $PQ$ accepts but $PQ'$ rejects arrives. 
	Then, suppose that $t' \in (t_{1}, t_{2})$. 
	By the definition of $z$, 
	$p$ arrives at $Q^{(z')}$ such that $z' \geq z$. 
	By the property (a), 
	for $j (\in [1, m])$, 
	$h_{PQ'}^{(j)}(t'+) = h_{PQ}^{(j)}(t'+)$. 
	Thus, 
	packets accepted by $PQ$ during time $(t', t_{2})$ can be accepted by $PQ'$. 
	Only $PQ$ accepts the packet arriving at $Q^{(z)}$ at $t_{2}$ by the definition of $\sigma'$. 
	Hence, 
	$h_{PQ'}^{(z)}(t_{2}+) = h_{PQ}^{(z)}(t_{2}+) - 1$, 
	and 
	for any $j (\in [1, m])$ such that $j \ne z$, 
	$h_{PQ'}^{(j)}(t_{2}+) = h_{PQ}^{(j)}(t_{2}+)$. 
	(We call this fact the property (b).)
	If all the packets which $PQ$ accepts after $t_{2}$ are the same as those accepted by $PQ'$ after $t_{2}$, 
	$V_{PQ}(\sigma') = V_{PQ}(\sigma) - \alpha_{z} - \alpha_{z'}$. 
	Then, 
	we consider the case where there exists at least one packet $p'$ which $PQ$ rejects but $PQ'$ accepts after $t_{2}$. 
	By the greediness of $PQ$ and the property (b), 
	for any non-event time $t (> t_{2})$ and any $z' (\geq z+1)$, 
	$h_{PQ'}^{(z')}(t) = h_{PQ}^{(z')}(t)$. 
	Hence, 
	$p'$ arrives at $Q^{(z'')}$ for some $z'' (\leq z)$. 
	Let $t''$ be the event time when $p'$ arrives. 
	For any $j (\in [1, m])$, 
	$h_{PQ'}^{(j)}(t''+) = h_{PQ}^{(j)}(t''+)$, 
	which means that 
 	all the packets accepted by $PQ$ are equal to those accepted by $PQ'$ after $t''$. 
	Thus, 
	$V_{PQ}(\sigma') = V_{PQ}(\sigma) - \alpha_{z} - \alpha_{z'} + \alpha_{z''} \leq V_{PQ}(\sigma) - \alpha_{z}$. 
	Finally, 
	we consider the case where $t' \in (t_{2}, t_{3})$. 
	By the same argument as the case of $t' \in (t_{1}, t_{2})$, 
	we can prove this case. 
	Specifically, 
	the number of packets which $PQ$ rejects but $PQ'$ accepts after $t'$ is exactly one. 
	This packet arrives at $Q^{(z''')}$, where some $z''' \leq z$. 
	Therefore, 
	$V_{PQ}(\sigma') = V_{PQ}(\sigma) - \alpha_{z} - \alpha_{z'} + \alpha_{z'''} \leq V_{PQ}(\sigma) - \alpha_{z}$. 
	By the above argument, 
	$\frac{V_{OPT}(\sigma')}{V_{PQ}(\sigma')} 
		\geq \frac{V_{ALG}(\sigma')}{V_{PQ}(\sigma')} 
		\geq \frac{V_{OPT}(\sigma) - \alpha_{z}}{V_{PQ}(\sigma) - \alpha_{z}}
		> \frac{V_{OPT}(\sigma)}{V_{PQ}(\sigma)}$. 
	\fi
	%

%
\subsection{Proof of Lemma~\ref{LMA:3.2.4}}
%
	%
	\ifnum \count10 > 0
	補題~\ref{LMA:3.2.2}より、
	入力が終了した時点において、
	各extra packet $p$は$PQ$がtransmitするpacket $p'$に必ずmatchされており、
	また、
	extra packet $p$と$PQ$のpacket $p'$がmatchされている場合、
	$g(p) < g(p')$
	が成立している。
	よって、
	$k_{q_{n}} \leq \sum_{j = q_{n}+1}^{m} s_{j}$
	かつ
	$k_{q_{n-1}} \leq (\sum_{j = q_{n-1}+1}^{m} s_{j}) - k_{q_{n}}$
	かつ
	…
	$k_{q_{1}} \leq (\sum_{j = q_{1}+1}^{m} s_{j}) - \sum_{i = 2}^{n} k_{q_{i}}$
	が成立する。
	よって、
	任意の$x (\in [1, n])$に対して、
	$\sum_{i = x}^{n} k_{q_{i}} \leq \sum_{j = q_{x}+1}^{m} s_{j}$
	が成立する。
	\fi
	\ifnum \count11 > 0
	\com{（■英語）}
	By Lemma~\ref{LMA:3.2.2}, 
	each extra packet $p$ is matched with a packet $p'$ transmitted by $PQ$ at the end of the input. 
	In addition, 
	$g(p) < g(p')$ 
	if an extra packet $p$ is matched with a packet $p'$ of $PQ$. 
	Thus, 
	$k_{q_{n}} \leq \sum_{j = q_{n}+1}^{m} s_{j}$, 
	$k_{q_{n-1}} \leq (\sum_{j = q_{n-1}+1}^{m} s_{j}) - k_{q_{n}}$, $\cdots$, and 
	$k_{q_{1}} \leq (\sum_{j = q_{1}+1}^{m} s_{j}) - \sum_{i = 2}^{n} k_{q_{i}}$. 
	Therefore, 
	for any $x (\in [1, n])$, 
	$\sum_{i = x}^{n} k_{q_{i}} \leq \sum_{j = q_{x}+1}^{m} s_{j}$. 
	\fi
	%

%
\subsection{Proof of Lemma~\ref{LMA:3.2.5}}
%

	%
	\ifnum \count10 > 0
	任意の入力$\sigma \in {\cal S}_1$に対して、
	$\sigma$から次の様な入力$\sigma'$を構成する。
	まず、
	時間$(0, 1)$の間に各$Q^{(j)} \hspace{1mm} (j \in [q_{1}, m])$にpacketが到着する$s_{j}$回のarrival eventが発生する。
	${\cal S}_1$の定義より、
	$s_{j} \leq B$が成立するので、$PQ$はそれらの到着するpacketを全て受理する。
	時刻1以降に、
	$PQ$がacceptできない$\sum_{i = 1}^{n} k_{q_{i}}$個のpacketだけがarriveする。
	詳しく言うと、
	任意の$i (\in [1, n])$に対して、
	$a_{i} = \sum_{j = q_{n+1-i}+1}^{q_{n+2-i}} s_{j}$と定義し、
	$a_{0} = 0$
	と定義する。
	任意の$x (\in [0, n-1])$に対して、
	整数時間$t = (\sum_{j = 0}^{x} a_{j} ) + 1, \ldots, \sum_{j = 0}^{x+1} a_{j}$に、
	scheduling eventが発生し、
	時刻$t + \frac{1}{2}$において、
	$Q^{(q_{n-x})}$にpacketが到着する様なarrival eventが発生する。
	時刻$(\sum_{j = 0}^{n} a_{j})+1$の後は、
	十分な数のscheduling eventが発生する。
	このとき、
	$PQ$は$t$に、$j (\in [q_{n-x}+1, q_{n-x+1}])$が成立する$Q^{(j)}$からpacketをtransmitする。
	また、
	あるオフラインアルゴリズム$ALG$を考える。
	$ALG$は$t$に、$Q^{(q_{n-x})}$からpacketをtransmitする。
	このとき、
	任意の$i (\in [1, n])$に対して、
	$Q^{(q_{i})}$はextra packetが到着するので、
	$s_{q_{i}} = B$が成立している。
	よって、
	任意の$i (\in [1, n])$に対して、
	$h_{PQ}^{(q_{i})}(1-) = B$が成立しているので、
	$PQ$は各時刻$t + \frac{1}{2}$において、到着する全てのpacketをacceptできない。
	しかし、$ALG$は全てのpacketを受理することが出来る。
	すなわち、$ALG$は最適なオフラインアルゴリズムの1つである。
	このとき、
	$n(\sigma') = n$
	かつ
	任意の$i (\in [1, n])$に対して、
	$q_{i}(\sigma') = q_{i}$
	が成立する。
	以上より、
	$V_{PQ}(\sigma') = V_{PQ}(\sigma) - \sum_{j = 1}^{q_{1}-1} \alpha_{j} s_{j}$
	が成立する。
	また、各$i (\in [1, n])$に対して、
	$k_{q_{i}}(\sigma') = \sum_{j = q_{i}+1}^{q_{i+1}} s_{j}$
	が成立する。
	以上の式より、
	$V_{ALG}(\sigma') 
		= V_{PQ}(\sigma') + \sum_{i = 1}^{n} \alpha_{q_{i}} k_{q_{i}}(\sigma')
		= V_{PQ}(\sigma) + \sum_{i = 1}^{n} \alpha_{q_{i}} ( \sum_{j = q_{i}+1}^{q_{i+1}} s_{j} ) - \sum_{j = 1}^{q_{1}-1} \alpha_{j} s_{j}$
	が成立する。
	補題~\ref{LMA:3.2.4}より、
	$\sum_{i = x}^{n} k_{q_{i}} \leq \sum_{j = q_{x}+1}^{m} s_{j}$
	が成立するので、
	上の不等式から各$s_{j}$を消去すると、
	$V_{ALG}(\sigma') 
		\geq V_{PQ}(\sigma) + \sum_{i = 1}^{n} \alpha_{q_{i}} k_{q_{i}} - \sum_{j = 1}^{q_{1}-1} \alpha_{j} s_{j}
		= V_{OPT}(\sigma) - \sum_{j = 1}^{q_{1}-1} \alpha_{j} s_{j}$
	が成立する。
	以上より、
	$\frac{V_{OPT}(\sigma')}{V_{PQ}(\sigma')} = \frac{V_{ALG}(\sigma')}{V_{PQ}(\sigma')}
		\geq \frac{V_{OPT}(\sigma) - \sum_{j = 1}^{q_{1}-1} \alpha_{j} s_{j} }{V_{PQ}(\sigma) - \sum_{j = 1}^{q_{1}-1} \alpha_{j} s_{j}}
		\geq \frac{V_{OPT}(\sigma)}{V_{PQ}(\sigma)}
	$
	が成立する。
	また、$\sigma'$の定義より、
	$\sigma'$は条件(ii)を満たし、${\cal S}_1$に含まれる。
	\com{（■S)}
	\fi
	\ifnum \count11 > 0
	\com{（■英語）}
	For any input $\sigma \in {\cal S}_1$, 
	we construct $\sigma'$ from $\sigma$ according to the following steps. 
	First, 
	for each $j (\in [q_{1}, m])$, 
	$s_{j}$ events at which $s_{j}$ packets arrive at $Q^{(j)}$ occur during time $(0, 1)$. 
	Since $s_{j} \leq B$ by the definition of ${\cal S}_1$, 
	$PQ$ accepts all the packets which arrive at these events. 
	$\sum_{i = 1}^{n} k_{q_{i}}$ packets arrive after time 1, and $PQ$ cannot accept them.  
	Specifically, 
	for any $i (\in [1, n])$, 
	we define $a_{i} = \sum_{j = q_{n+1-i}+1}^{q_{n+2-i}} s_{j}$ and 
	$a_{0} = 0$. 
	Then, 
	for each $x (\in [0, n-1])$, 
	a scheduling event occurs at each integer time $t = (\sum_{j = 0}^{x} a_{j} ) + 1, \ldots, \sum_{j = 0}^{x+1} a_{j}$, and 
	an arrival event where a packet arrives at $Q^{(q_{n-x})}$ occurs at each time $t + \frac{1}{2}$. 
	After time $(\sum_{j = 0}^{n} a_{j}) + 1$, 
	sufficient scheduling events to transmit all the arriving packets occur. 
	For these scheduling events, 
	$PQ$ transmits a packet from $Q^{(j)}$ at $t$, 
	where $j$ is an integer between $q_{n-x}+1$ and $q_{n-x+1}$. 
	Also, 
	let $ALG$ be an offline algorithm. 
	$ALG$ transmits a packet from $Q^{(q_{n-x})}$ at $t$. 
	Since for any $i (\in [1, n])$, at least one extra packet arrives at $Q^{(q_{i})}$, 
	$s_{q_{i}} = B$ holds. 
	Hence, 
	since for any $i (\in [1, n])$, 
	$h_{PQ}^{(q_{i})}(1-) = B$, 
	$PQ$ cannot accept the packet which arrives at each $t + \frac{1}{2}$. 
	However, 
	$ALG$ can accept all these packets, which means that 
	$ALG$ is an optimal offline algorithm.
	Then, 
	$n(\sigma') = n$, and 
	for any $i (\in [1, n])$, 
	$q_{i}(\sigma') = q_{i}$. 
	By the above argument, 
	$V_{PQ}(\sigma') = V_{PQ}(\sigma) - \sum_{j = 1}^{q_{1}-1} \alpha_{j} s_{j}$. 
	Furthermore, for each $i (\in [1, n])$, 
	$k_{q_{i}}(\sigma') = \sum_{j = q_{i}+1}^{q_{i+1}} s_{j}$. 
	By these equalities, 
	$V_{ALG}(\sigma') 
		= V_{PQ}(\sigma') + \sum_{i = 1}^{n} \alpha_{q_{i}} k_{q_{i}}(\sigma')
		= V_{PQ}(\sigma) + \sum_{i = 1}^{n} \alpha_{q_{i}} ( \sum_{j = q_{i}+1}^{q_{i+1}} s_{j} ) - \sum_{j = 1}^{q_{1}-1} \alpha_{j} s_{j} 
		= V_{PQ}(\sigma) + \alpha_{q_{1}} ( \sum_{j = q_{1}+1}^{q_{n+1}} s_{j} ) + \sum_{x = 2}^{n} (\alpha_{q_{x}} - \alpha_{q_{x-1}}) ( \sum_{j = q_{x}+1}^{q_{n+1}} s_{j} ) - \sum_{j = 1}^{q_{1}-1} \alpha_{j} s_{j}$. 
	Since $\sum_{i = x}^{n} k_{q_{i}} \leq \sum_{j = q_{x}+1}^{m} s_{j}$ by Lemma~\ref{LMA:3.2.4} and $q_{n+1}=m$, 
	$V_{ALG}(\sigma') 
		\geq V_{PQ}(\sigma) + \alpha_{q_{1}} ( \sum_{i = 1}^{n} k_{q_{i}} ) + \sum_{x = 2}^{n} (\alpha_{q_{x}} - \alpha_{q_{x-1}}) ( \sum_{i = x}^{n} k_{q_{i}} ) - \sum_{j = 1}^{q_{1}-1} \alpha_{j} s_{j}
		= V_{PQ}(\sigma) + \sum_{i = 1}^{n} \alpha_{q_{i}} k_{q_{i}} - \sum_{j = 1}^{q_{1}-1} \alpha_{j} s_{j}
		= V_{OPT}(\sigma) - \sum_{j = 1}^{q_{1}-1} \alpha_{j} s_{j}$. 
	Therefore, 
	$\frac{V_{OPT}(\sigma')}{V_{PQ}(\sigma')} = \frac{V_{ALG}(\sigma')}{V_{PQ}(\sigma')}
		\geq \frac{V_{OPT}(\sigma) - \sum_{j = 1}^{q_{1}-1} \alpha_{j} s_{j} }{V_{PQ}(\sigma) - \sum_{j = 1}^{q_{1}-1} \alpha_{j} s_{j}}
		\geq \frac{V_{OPT}(\sigma)}{V_{PQ}(\sigma)}
	$. 
	Moreover, 
	by the definition of $\sigma'$, 
	$\sigma'$ satisfies the condition (ii) in the statement, 
	which means that ${\cal S}_1$ includes $\sigma'$. 
	\fi
	%

%
\subsection{Proof of Lemma~\ref{LMA:3.2.6}}
%

	%
	\ifnum \count10 > 0
	任意の$j (\in [1, m])$（ただし、$j \ne q_{z+1}-1$）に対して、
	$s'_{j} = s_{j}$
	と定義する。
	また、
	$s'_{q_{z+1}-1} = B$
	と定義する。
	（Appendix~\ref{ap.sec:3}の図~\ref{fig:L37}参照）
	次の様に$\sigma$から入力$\sigma'$を構成する。
	この構成の仕方は、補題~\ref{LMA:3.2.5}に類似している。
	まず、
	時間$(0, 1)$の間に各$Q^{(j)} \hspace{1mm} (j \in [q_{1}, m])$にpacketが到着する$s'_{j}$回のarrival eventが発生する。
	定義より、
	$s'_{j} \leq B$が成立するので、
	$PQ$はそれらの到着するpacketを全て受理する。
	ここで、
	任意の$i (\in [1, z])$に対して、
	$q'_{i} = q_{i}$と定義し、
	$q'_{z+1} = q_{z+1} - 1$と定義し、
	任意の$i (\in [z+1, n+1])$に対して、
	$q'_{i+1} = q_{i}$と定義する。
	そのとき、
	任意の$i (\in [1, n+1])$に対して、
	$a_{i} = \sum_{j = q'_{n+2-i}+1}^{q'_{n+3-i}} s'_{j}$と定義し、
	$a_{0} = 0$と定義する。
	更に、
	任意の$x (\in [0, n])$に対して、
	整数時間$t = (\sum_{j = 0}^{x} a_{j} ) + 1, \ldots, \sum_{j = 0}^{x+1} a_{j}$に、
	scheduling eventが発生し、
	時刻$t + \frac{1}{2}$において、
	$Q^{(q'_{n-x+1})}$にpacketが到着する様なarrival eventが発生する。
	時刻$(\sum_{j = 0}^{n+1} a_{j})+1$の後は、
	十分な数のscheduling eventが発生する。
	このとき、
	$PQ$は$t$に、$j (\in [q'_{n-x+1}+1, q'_{n-x+2}])$が成立する$Q^{(j)}$からpacketをtransmitする。
	また、
	あるオフラインアルゴリズム$ALG$を考える。
	$ALG$は$t$に、$Q^{(q'_{n-x+1})}$からpacketをtransmitする。
	$q'_{i}$の定義より
	任意の$i (\in [1, n+1])$に対して、
	$h_{PQ}^{(q'_{i})}(1-) = B$が成立しているので、
	$PQ$は各時刻$t + \frac{1}{2}$において、到着する全てのpacketをacceptできない。
	しかし、$ALG$は全てのpacketを受理することが出来る。
	すなわち、$ALG$は最適なオフラインアルゴリズムである。
	以上より、
	$V_{PQ}(\sigma') = V_{PQ}(\sigma) + \alpha_{q_{z+1}-1} (B - s_{q_{z+1}-1})$
	が成立する。
	また、
	任意の$j (\ne q_{z}, q_{z+1}-1)$に対して、
	$k_{j}(\sigma') = k_{j}$
	が成立し、
	$k_{q_{z}}(\sigma') = k_{q_{z}} - s_{q_{z+1}-1}$
	が成立し、
	$k_{q_{z+1}-1}(\sigma') = B$
	が成立する。
	また、
	任意の$i (\in [1, n+1])$に対して、
	$q_{i}(\sigma') = q'_{i}$
	が成立する。
	更に、
	$V_{OPT}(\sigma') = V_{ALG}(\sigma') = V_{PQ}(\sigma') + \sum_{i = 1}^{n(\sigma')} \alpha_{q_{i}(\sigma')} k_{q_{i}(\sigma')}(\sigma')$
	が成立する。
	以上の式より、
	$\sum_{i = 1}^{n(\sigma')} \alpha_{q_{i}(\sigma')} k_{q_{i}(\sigma')}(\sigma')
		= (\sum_{i = 1}^{n} \alpha_{q_{i}} k_{q_{i}}) - \alpha_{q_{z}} s_{q_{z+1}-1} + \alpha_{q_{z+1}-1} B
		\geq (\sum_{i = 1}^{n} \alpha_{q_{i}} k_{q_{i}}) + \alpha_{q_{z+1}-1} (B - s_{q_{z+1}-1})
	$
	が成立し、
	$\frac{ \sum_{i = 1}^{n(\sigma')} \alpha_{q_{i}(\sigma')} k_{q_{i}(\sigma')}(\sigma') }{ V_{PQ}(\sigma') }
		\geq \frac{ (\sum_{i = 1}^{n} \alpha_{q_{i}} k_{q_{i}}) + \alpha_{q_{z+1}-1} (B - s_{q_{z+1}-1}) }
			{ V_{PQ}(\sigma) + \alpha_{q_{z+1}-1} (B - s_{q_{z+1}-1}) }
		\geq \frac{ \sum_{i = 1}^{n} \alpha_{q_{i}} k_{q_{i}} }{ V_{PQ}(\sigma) }
	$
	が成立する。
	よって、
	$\frac{V_{OPT}(\sigma')}{V_{PQ}(\sigma')} 
		\geq \frac{ V_{PQ}(\sigma') + \sum_{i = 1}^{n(\sigma')} \alpha_{q_{i}(\sigma')} k_{q_{i}(\sigma')}(\sigma') }
					{ V_{PQ}(\sigma') }
		\geq 1 + \frac{ \sum_{i = 1}^{n} \alpha_{q_{i}} k_{q_{i}} }{ V_{PQ}(\sigma) }
		= \frac{V_{OPT}(\sigma)}{V_{PQ}(\sigma)} 
	$
	が成立する。
	上記の$\sigma'$の定義より、
	$\sigma' \in {\cal S}_{2}$が成立している。
	よって、
	以上の議論より、再帰的に
	$q_{z'} + 1 < q_{z'+1}$が成立する様な任意の$z'$に対して、
	上記の様に新しい入力を構成すると、
	題意をみたす入力を構成することができる。
	\fi
	\ifnum \count11 > 0
	\com{（■英語）}
	For any $j (\in [1, m])$ such that $j \ne q_{z+1}-1$, 
	we define $s'_{j} = s_{j}$. 
	Also, we define $s'_{q_{z+1}-1} = B$. 
	(See Figure~\ref{fig:L37}.)
	We construct $\sigma'$ from $\sigma$ in the following way. 
	This approach is similar to those in the proof of Lemma~\ref{LMA:3.2.5}. 
	First, 
	for each $j (\in [q_{1}, m])$, 
	$s'_{j}$ events at which $s'_{j}$ packets arrive at $Q^{(j)}$ occur during time $(0, 1)$. 
	Since $s'_{j} \leq B$ by definition, 
	$PQ$ accepts all these packets. 
	In addition, 
	for any $i (\in [1, z])$, 
	we define $q'_{i} = q_{i}$. 
	We define $q'_{z+1} = q_{z+1} - 1$. 
	For any $i (\in [z+1, n+1])$, 
	we define $q'_{i+1} = q_{i}$. 
	Moreover, 
	for any $i (\in [1, n+1])$, 
	we define $a_{i} = \sum_{j = q'_{n+2-i}+1}^{q'_{n+3-i}} s'_{j}$ and 
	$a_{0} = 0$. 
	For any $x (\in [0, n])$, 
	a scheduling event occurs at each integer time $t = (\sum_{j = 0}^{x} a_{j} ) + 1, \ldots, \sum_{j = 0}^{x+1} a_{j}$. 
	Also, 
	an arrival event where a packet arrives at $Q^{(q'_{n-x+1})}$ occurs at each time $t + \frac{1}{2}$. 
	After time $(\sum_{j = 0}^{n+1} a_{j}) + 1$, 
	sufficient scheduling events to transmit all the arriving packets occur. 
	Then, 
	$PQ$ transmits a packet from $Q^{(j)}$ at $t$, 
	where $j$ is an integer between $q'_{n-x+1}+1$ and $q'_{n-x+2}$. 
	Let $ALG$ be an offline algorithm which transmits a packet from $Q^{(q'_{n-x+1})}$ at $t$. 
	By the definition of $q'_{i}$, 
	for any $i (\in [1, n+1])$, 
	$h_{PQ}^{(q'_{i})}(1-) = B$. 
	Thus, 
	$PQ$ cannot accept any packet arriving at $t + \frac{1}{2}$, 
	but $ALG$ can accept all the arriving packets. 
	That is to say, $ALG$ is optimal. 
	By the above argument, 
	$V_{PQ}(\sigma') = V_{PQ}(\sigma) + \alpha_{q_{z+1}-1} (B - s_{q_{z+1}-1})$. 
	Furthermore, 
	for any $j (\ne q_{z}, q_{z+1}-1)$, 
	$k_{j}(\sigma') = k_{j}$. 
	Also, 
	$k_{q_{z}}(\sigma') = k_{q_{z}} - s_{q_{z+1}-1}$ and 
	$k_{q_{z+1}-1}(\sigma') = B$. 
	Also, 
	for any $i (\in [1, n+1])$, 
	$q_{i}(\sigma') = q'_{i}$.
	Moreover, 
	$V_{OPT}(\sigma') = V_{ALG}(\sigma') = V_{PQ}(\sigma') + \sum_{i = 1}^{n(\sigma')} \alpha_{q_{i}(\sigma')} k_{q_{i}(\sigma')}(\sigma')$. 
	By the above equalities, 
	$\sum_{i = 1}^{n(\sigma')} \alpha_{q_{i}(\sigma')} k_{q_{i}(\sigma')}(\sigma')
		= (\sum_{i = 1}^{n} \alpha_{q_{i}} k_{q_{i}}) - \alpha_{q_{z}} s_{q_{z+1}-1} + \alpha_{q_{z+1}-1} B
		\geq (\sum_{i = 1}^{n} \alpha_{q_{i}} k_{q_{i}}) + \alpha_{q_{z+1}-1} (B - s_{q_{z+1}-1})
	$. 
	Hence, 
	$\frac{ \sum_{i = 1}^{n(\sigma')} \alpha_{q_{i}(\sigma')} k_{q_{i}(\sigma')}(\sigma') }{ V_{PQ}(\sigma') }
		\geq \frac{ (\sum_{i = 1}^{n} \alpha_{q_{i}} k_{q_{i}}) + \alpha_{q_{z+1}-1} (B - s_{q_{z+1}-1}) }
			{ V_{PQ}(\sigma) + \alpha_{q_{z+1}-1} (B - s_{q_{z+1}-1}) }
		\geq \frac{ \sum_{i = 1}^{n} \alpha_{q_{i}} k_{q_{i}} }{ V_{PQ}(\sigma) }
	$. 
	Therefore, 
	$\frac{V_{OPT}(\sigma')}{V_{PQ}(\sigma')} 
		\geq \frac{ V_{PQ}(\sigma') + \sum_{i = 1}^{n(\sigma')} \alpha_{q_{i}(\sigma')} k_{q_{i}(\sigma')}(\sigma') }
					{ V_{PQ}(\sigma') }
		\geq 1 + \frac{ \sum_{i = 1}^{n} \alpha_{q_{i}} k_{q_{i}} }{ V_{PQ}(\sigma) }
		= \frac{V_{OPT}(\sigma)}{V_{PQ}(\sigma)} 
	$. 
	By the definition of $\sigma'$, 
	${\cal S}_{2}$ includes $\sigma'$. 
	By the above argument, 
	for any $z'$ such that $q_{z'} + 1 < q_{z'+1}$, 
	we recursively construct an input in the above way, and then we can obtain an input satisfying the lemma. 
	\fi
	%
%
\ifnum \count12 > 0
\begin{figure*}[h]
 \begin{center}
  \includegraphics[width=120mm]{./L372.eps}
 \end{center}
 \caption{
 			Example states of queues ($q_{z}$ through $q_{z+1}$) of $OPT$ and $PQ$ for $\sigma$ and $\sigma'$. 
			Left (Right) queues show the states for $\sigma$ ($\sigma'$). 
		}
\label{fig:L37}
 \end{figure*}
\fi
\subsection{Proof of Lemma~\ref{LMA:3.2.7}}
%

	%
	\ifnum \count10 > 0
	%
	%
	%
	任意の$j (\in [1, q_{n}])$に対して、
	$s''_{j} = s_{j}$
	と定義する。
	また、
	各$j (\in [q_{n}+1, q_{n}+u])$に対して、
	$s''_{j} = B$
	かつ
	$s''_{q_{n}+u+1} = (\sum_{j = q_{n}+1}^{m} s_{j}) - u B$	
	と定義する。
	次の様に$\sigma$から入力$\sigma'$を構成する。
	この構成の仕方は、補題~\ref{LMA:3.2.5}と補題~\ref{LMA:3.2.6}に類似している。
	まず、
	時間$(0, 1)$の間に各$Q^{(j)} \hspace{1mm} (j \in [q_{1}, m])$にpacketが到着する$s''_{j}$回のarrival eventが発生する。
	定義より、
	$s''_{j} \leq B$が成立するので、
	$PQ$はそれらの到着するpacketを全て受理する。
	ここで、
	任意の$i (\in [1, n])$に対して、
	$a_{i} = \sum_{j = q_{n+1-i}+1}^{q_{n+2-i}} s''_{j}$と定義し、
	$a_{0} = 0$
	と定義する。
	任意の$x (\in [0, n-1])$に対して、
	整数時間$t = (\sum_{j = 0}^{x} a_{j} ) + 1, \ldots, \sum_{j = 0}^{x+1} a_{j}$に、
	scheduling eventが発生し、
	時刻$t + \frac{1}{2}$において、
	$Q^{(q_{n-x})}$にpacketが到着する様なarrival eventが発生する。
	時刻$(\sum_{j = 0}^{n} a_{j})+1$の後は、
	十分な数のscheduling eventが発生する。
	明らかに、
	$V_{PQ}(\sigma') \leq V_{PQ}(\sigma)$
	と
	$V_{OPT}(\sigma') = V_{OPT}(\sigma)$
	が成立する。
	また、
	$\sigma'$の定義より、
	$\sigma' \in {\cal S}_{3}$が成立し、
	$\sigma'$はステートメントの条件(i)をみたす。
	\fi
	\ifnum \count11 > 0
	\com{（■英語）}
	For any $j (\in [1, q_{n}])$, 
	we define $s''_{j} = s_{j}$. 
	Furthermore, 
	for each $j (\in [q_{n}+1, q_{n}+u])$, 
	we define $s''_{j} = B$, and 
	$s''_{q_{n}+u+1} = (\sum_{j = q_{n}+1}^{m} s_{j}) - u B$. 
	We construct $\sigma'$ from $\sigma$ in the following way. 
	This approach is similar to those in the proof of Lemmas~\ref{LMA:3.2.5} and  \ref{LMA:3.2.6}. 
	First, 
	for each $j (\in [q_{1}, m])$, 
	$s''_{j}$ events at which $s_{j}$ packets arrive at $Q^{(j)}$ occur during time $(0, 1)$. 
	Since $s''_{j} \leq B$ by definition, 
	$PQ$ accepts all these packets. 
	Then, 
	for any $i (\in [1, n])$, 
	we define $a_{i} = \sum_{j = q_{n+1-i}+1}^{q_{n+2-i}} s_{j}$, and 
	$a_{0} = 0$. 
	For any $x (\in [0, n-1])$, 
	a scheduling event occurs at each integer time $t = (\sum_{j = 0}^{x} a_{j} ) + 1, \ldots, \sum_{j = 0}^{x+1} a_{j}$. 
	Also, 
	at each time $t + \frac{1}{2}$, 
	an arrival event where a packet arrives at $Q^{(q_{n-x})}$ occurs. 
	After time $(\sum_{j = 0}^{n} a_{j}) + 1$, 
	sufficient scheduling events to transmit all the arriving packets occur. 
	It is easy to see that 
	$V_{PQ}(\sigma') \leq V_{PQ}(\sigma)$
	and 
	$V_{OPT}(\sigma') = V_{OPT}(\sigma)$. 
	Moreover, 
	by the definition of $\sigma'$, 
	$\sigma' \in {\cal S}_{3}$ holds, and $\sigma'$ satisfies the condition (i) in the statement. 
	\fi
	%

%
\subsection{Proof of Lemma~\ref{LMA:3.2.8}}
%

	%
	\ifnum \count10 > 0
	%
	%
%
	%
	$\sigma$から次の様な入力$\sigma'$を構成する。
	まず、
	時間$(0, 1)$の間に各$Q^{(j)} \hspace{1mm} (j \in [q_{1}, m])$にpacketが到着する$s_{j}$回のarrival eventが発生する。
	${\cal S}_{4}$の定義より、
	$s_{j} \leq B$が成立するので、$PQ$はそれらの到着するpacketを全て受理する。
	ここで、
	任意の$i (\in [1, n])$に対して、
	$q'_{i} = q_{i}$と定義し、
	$q'_{n+1} = q_{n}+1$と定義し、
	$q'_{n+2} = m$と定義する。
	任意の$i (\in [1, n+1])$に対して、
	$a_{i} = \sum_{j = q'_{n+2-i}+1}^{q'_{n+3-i}} s_{j}$と定義し、
	$a_{0} = 0$と定義する。
	任意の$x (\in [0, n])$に対して、
	整数時間$t = (\sum_{j = 0}^{x} a_{j} ) + 1, \ldots, \sum_{j = 0}^{x+1} a_{j}$に、
	scheduling eventが発生する。
	更に、
	任意の$x (\in [0, n])$に対して、
	時刻$t + \frac{1}{2}$において、
	$Q^{(q'_{n+1-x})}$にpacketが到着する様なarrival eventが発生する。
	時刻$(\sum_{j = 0}^{n+1} a_{j})+1$の後は、
	十分な数のscheduling eventが発生する。
	このとき、
	$PQ$が$\sigma'$において各scheduling eventにtransmitするpacketは、
	$\sigma$と全く同じである。
	また、
	$t$に$Q^{(q'_{n+1-x})}$からpacketをtransmitする
	オフラインアルゴリズム$ALG$を考える。
	$q'_{i}$の定義より、
	任意の$i (\in [1, n+1])$に対して、
	$h_{PQ}^{(q'_{i})}(1-) = B$が成立しているので、
	$PQ$は各時刻$t + \frac{1}{2}$において、到着する全てのpacketをacceptできない。
	しかし、$ALG$は全てのpacketを受理することが出来る。
	すなわち、$ALG$は最適なオフラインアルゴリズムの1つである。
	よって、
	$n(\sigma') = n+1$
	かつ
	任意の$i (\in [1, n+1])$に対して、
	$q_{i}(\sigma') = q'_{i}$
	が成立する。
	明らかに、任意の$j (\in [1, m])$に対して、
	$s_{j}(\sigma') = s_{j}$
	が成立しており、
	$V_{PQ}(\sigma') = V_{PQ}(\sigma)$
	が成立する。
	また、
	任意の$i (\in [1, n-1])$に対して、
	$k_{q_{i}}(\sigma') = k_{q_{i}}$
	が成立し、
	$k_{q_{n}}(\sigma') = s_{q_{n}+1}$
	が成立し、
	$k_{q_{n}+1}(\sigma') = \sum_{j = q_{n}+2}^{m} s_{j}$
	\com{（■）}
	が成立する。
	よって、
	$\sigma' \in {\cal S}_{4}$
	が成立し、
	$\sigma'$はステートメントの条件(i)と(ii)をみたす。
	更に、
	$V_{OPT}(\sigma') = V_{ALG}(\sigma') 
		= V_{OPT}(\sigma) + (\alpha_{q_{n}+1} - \alpha_{q_{n}}) \sum_{j = q_{n}+2}^{m} s_{j}
		\geq V_{OPT}(\sigma)$
	が成立する。
	\fi
	\ifnum \count11 > 0
	\com{（■英語）}
	%
	%
%
	%
	We construct $\sigma'$ from $\sigma$ as follows: 
	First, 
	for each $j (\in [q_{1}, m])$, 
	$s_{j}$ events at which $s_{j}$ packets arrive at $Q^{(j)}$ occur during time $(0, 1)$. 
	Since $s_{j} \leq B$ by the definition of ${\cal S}_{4}$, 
	$PQ$ accepts all these arriving packets. 
	For any $i (\in [1, n])$, 
	we define $q'_{i} = q_{i}$,  
	$q'_{n+1} = q_{n}+1$ and 
	$q'_{n+2} = m$. 
	Moreover, 
	for any $i (\in [1, n+1])$, 
	we define $a_{i} = \sum_{j = q'_{n+2-i}+1}^{q'_{n+3-i}} s_{j}$ and 
	$a_{0} = 0$. 
	Then, 
	for any $x (\in [0, n])$, 
	a scheduling event occurs at each integer time $t = (\sum_{j = 0}^{x} a_{j} ) + 1, \ldots, \sum_{j = 0}^{x+1} a_{j}$. 
	In addition, 
	for any $x (\in [0, n])$, 
	an arrival event where a packet arrives at $Q^{(q'_{n+1-x})}$ occurs at each time $t + \frac{1}{2}$. 
	After time $(\sum_{j = 0}^{n+1} a_{j}) + 1$, 
	sufficient scheduling events to transmit all the arriving packets occur. 
	Then, 
	the packets which $PQ$ transmits at each scheduling event for $\sigma'$ are equivalent to those for $\sigma$. 
	Consider an offline algorithm $ALG$ which transmits a packet from $Q^{(q'_{n+1-x})}$ at $t$. 
	By the definition of $q'_{i}$, 
	since for any $i (\in [1, n+1])$, $h_{PQ}^{(q'_{i})}(1-) = B$, 
	$PQ$ cannot accept any packet which arrives at each time $t + \frac{1}{2}$,  
	but $ALG$ can accept all the packets, which means that  
	$ALG$ is optimal. 
	Hence, 
	$n(\sigma') = n+1$, and 
	for any $i (\in [1, n+1])$, 
	$q_{i}(\sigma') = q'_{i}$. 
	Since for any $j (\in [1, m])$, $s_{j}(\sigma') = s_{j}$, 
	$V_{PQ}(\sigma') = V_{PQ}(\sigma)$. 
	Moreover, 
	for any $i (\in [1, n-1])$, 
	$k_{q_{i}}(\sigma') = k_{q_{i}}$, 
	$k_{q_{n}}(\sigma') = s_{q_{n}+1}$, and 
	$k_{q_{n}+1}(\sigma') = \sum_{j = q_{n}+2}^{m} s_{j}$. 
	Therefore, 
	$\sigma' \in {\cal S}_{4}$ holds, and 
	$\sigma'$ satisfies the conditions (i) and (ii) in the statements. 
	Also, 
	$V_{OPT}(\sigma') = V_{ALG}(\sigma') 
		= V_{OPT}(\sigma) + (\alpha_{q_{n}+1} - \alpha_{q_{n}}) \sum_{j = q_{n}+2}^{m} s_{j}
		\geq V_{OPT}(\sigma)$. 
	\fi
	%
%

%
\subsection{Proof of Lemma~\ref{LMA:lower}}
%

%
	%
	\ifnum \count10 > 0
	（■まだ）
	\fi
	\ifnum \count11 > 0
	\com{（■英語）}
	Consider the following input $\sigma$. 
	Define $m' \in \arg \min_{x \in [1, m-1]} \{ \frac{ \alpha_{x+1} }{ \sum_{j = 1}^{x+1} \alpha_{j} } \}$. 
	Initially, 
	$(m'+1) B$ arrival events happen such that $B$ packets arrive at $Q^{(1)}$ to $Q^{(m'+1)}$. 
	Then, 
	for $k = 1, 2, \ldots, m'$, 
	the $k$th round consists of $B$ scheduling events followed by $B$ arrival events 
	in which all the $B$ packets arrive at $Q^{(m'-k+1)}$.
	For $\sigma$, $PQ$ transmits $B$ packets from $Q^{(m'-k+2)}$ at the $k$th
	round. 
	As a result, 
	$PQ$ cannot accept arriving packets in the $(k+1)$st round. 
	Hence, 
	${V}_{PQ}(\sigma) = B \sum_{j = 1}^{m'+1} \alpha_{j}$ holds. 
	On the other hand, 
	$OPT$ transmits $B$ packets from $Q^{(m'-k+1)}$ at the $k$th round, 
	and hence can accept all the arriving packets. 
	Thus, 
	${V}_{OPT}(\sigma) = 2B \sum_{j = 1}^{m'} \alpha_{j} + B \alpha_{m'+1}$. 
	Therefore,
	$\frac{{V}_{OPT}(\sigma)}{{V}_{PQ}(\sigma)} 
		= \frac{ 2 \sum_{j = 1}^{m'} \alpha_{j} + \alpha_{m'+1}}{ \sum_{j = 1}^{m'+1} \alpha_{j} }
		= 2 - \frac{ \alpha_{m'+1} }{ \sum_{j = 1}^{m'+1} \alpha_{j} }$. 
	(It is easy to see that $\sigma \in {\cal S}_{5}$.)
	\fi
	%
%

%
\subsection{Proof of Theorem~\ref{thm:3}}
%

	%
	\ifnum \count10 > 0
	オンラインアルゴリズム$ON$を固定する。
	次の様な入力$\sigma$を敵対者が作成する。
	$\sigma(t)$を$\sigma$の時刻$t$までの冒頭部分とする。
	$OPT$は、
	この入力において到着するパケットを全て受理し、scheduleすることが可能であることに注意せよ。
	時刻$(0, 1)$に
	$2B$回のarrival eventが発生し、
	$Q^{(1)}$と$Q^{(m)}$に$B$個ずつpacketが到着する。
	時刻$(1, 2)$に
	$B$回のscheduling eventが発生する。
	この$\sigma(2)$にたいして、
	時刻$(1, 2)$に
	online algorihtm $ON$が
	$Q^{(1)}$から$B(1-x)$個、
	$Q^{(m)}$から$Bx$個ずつpacketをscheduleするとする。
	（図~\ref{fig:L1}参照）
	時刻$2$より後の時刻では、
	$Q^{(1)}$と$Q^{(m)}$を比較して、
	より$ON$が損をするキューにパケットが到着する。
	{\bf\boldmath Case 1. $\alpha x \geq 1 - x$が成立する場合:}
	時刻$(2, 3)$に
	$B$回のarrival eventが発生し、
	$Q^{(1)}$に$B$個のpacketが到着する。
	このとき、時刻$3$までに、
	$ON$は$(\alpha + 1 + 1 - x )B$の価値のパケットを受理している。
	更に、
	時刻$(3, 4)$に
	$B$回のscheduling eventが発生する。
	この$\sigma(4)$にたいして、
	時刻$(3, 4)$に
	$ON$が
	$Q^{(1)}$から$B(1-y)$個、
	$Q^{(m)}$から$By$個ずつpacketをscheduleするとする。
	（図~\ref{fig:L2}参照）
	時刻$4$より後の時刻では、
	先の場合と同様に、
	$Q^{(1)}$と$Q^{(m)}$を比較して、
	より$ON$が損をするキューにパケットが到着する。
	{\bf\boldmath Case 1.1. $\alpha (x+y) \geq 1 - y$が成立する場合:}
	時刻$(4, 5)$に
	$B$回のarrival eventが発生し、
	$Q^{(1)}$に$B$個のpacketが到着する。
	時刻$(5, 6)$に
	$2B$回のscheduling eventが発生する。
	この入力にたいして、
	$V_{ON}(\sigma) = ( \alpha + 1 + 1 - x + 1 - y )B$、
	$V_{OPT}(\sigma) = ( \alpha + 1 + 1 + 1 )B$が成立する。
	{\bf\boldmath Case 1.2. $\alpha (x+y) < 1 - y$が成立する場合:}
	時刻$(4, 5)$に
	$B$回のarrival eventが発生し、
	$Q^{(m)}$に$B$個のpacketが到着する。
	時刻$(5, 6)$に
	$2B$回のscheduling eventが発生する。
	この入力にたいして、
	$V_{ON}(\sigma) = ( \alpha + 1 + 1 - x + \alpha(x + y) )B$、
	$V_{OPT}(\sigma) = ( \alpha + 1 + 1 + \alpha )B$
	が成立する。
	{\bf\boldmath Case 2. $\alpha x < 1 - x$が成立する場合:}
	時刻$(2, 3)$に
	$B$回のarrival eventが発生し、
	$Q^{(m)}$に$B$個のpacketが到着する。
	このとき、時刻$3$までに、
	$ON$は$( \alpha + 1 + \alpha x )B$の価値のパケットを受理している。
	時刻$(3, 4)$に
	$B$回のscheduling eventが発生する。
	この$\sigma(4)$にたいして、
	時刻$(3, 4)$に
	$ON$が
	$Q^{(1)}$から$B(1-z)$個、
	$Q^{(m)}$から$Bz$個ずつpacketをscheduleするとする。
	（図~\ref{fig:L3}参照）
	時刻$4$より後の時刻では、
	先の場合と同様に、
	$Q^{(1)}$と$Q^{(m)}$を比較して、
	より$ON$が損をするキューにパケットが到着する。
	{\bf\boldmath Case 2.1. $\alpha z \geq 1 - x + 1 - z$が成立する場合:}
	時刻$(4, 5)$に
	$B$回のarrival eventが発生し、
	$Q^{(1)}$に$B$個のpacketが到着する。
	時刻$(5, 6)$に
	$2B$回のscheduling eventが発生する。
	この入力にたいして、
	$V_{ON}(\sigma) = ( \alpha + 1 + \alpha x + 1 - x + 1 - z )B$、
	$V_{OPT}(\sigma) = ( \alpha + 1 + \alpha + 1 )B$が成立する。
	{\bf\boldmath Case 2.2. $\alpha z < 1 - x + 1 - z$が成立する場合:}
	時刻$(4, 5)$に
	$B$回のarrival eventが発生し、
	$Q^{(m)}$に$B$個のpacketが到着する。
	時刻$(5, 6)$に
	$2B$回のscheduling eventが発生する。
	この入力にたいして、
	$V_{ON}(\sigma) = ( \alpha + 1 + \alpha x + \alpha z )B$、
	$V_{OPT}(\sigma) = ( \alpha + 1 + \alpha + \alpha )B$が成立する。
	上記の議論より、
	$c_{1}(x) = \min_{y} \max\{
				\frac{ \alpha + 1 + 1 + 1 }{ \alpha + 1 + 1 - x + 1 - y }, 
				\frac{\alpha + 1 + 1 + \alpha }{\alpha + 1 + 1 - x + \alpha(x + y)}
			\}
	$
	と定義し、
	$c_{2}(x) = \min_{z} \max\{
				\frac{ \alpha + 1 + \alpha + 1 }{ \alpha + 1 + \alpha x + 1 - x + 1 - z }, 
				\frac{ \alpha + 1 + \alpha + \alpha }{ \alpha + 1 + \alpha x + \alpha z }
			\}
	$
	と定義して、
	$\frac{V_{OPT}(\sigma)}{V_{ON}(\sigma)} 
		\geq
			\min_{x} \max\{
				c_{1}(x), c_{2}(x)
			\}
	$
	が成立する。
	$c_{1}(x)$は、
	$\frac{ \alpha + 1 + 1 + 1 }{ \alpha + 1 + 1 - x + 1 - y } 
	= 
	\frac{\alpha + 1 + 1 + \alpha }{\alpha + 1 + 1 - x + \alpha(x + y)}$
	が成立するときに最小化される。
	このとき、
	$y = \frac{ \alpha (\alpha + 3) + (-\alpha^2 -4 \alpha + 1)x }
				{ \alpha^2 + 5 \alpha + 2 }$
	が成立し、
	$c_{1}(x) 
		\geq \frac{ \alpha^2 + 5 \alpha + 2 }{ \alpha^2 + 4 \alpha + 2 - x }$
	が成立する。
	$c_{2}(x)$は、
	$\frac{ \alpha + 1 + \alpha + 1 }{ \alpha + 1 + \alpha x + 1 - x + 1 - z } 
	= 
	\frac{ \alpha + 1 + \alpha + \alpha }{ \alpha + 1 + \alpha x + \alpha z }$
	が成立するときに最小化される。
	このとき、
	$z = \frac{ \alpha^2 + 6 \alpha + 1 + (\alpha^2 -4 \alpha - 1)x }
				{ 2 \alpha^2 + 5 \alpha + 1 }$
	が成立し、
	$c_{2}(x) 
		\geq \frac{ 2 \alpha^2 + 5 \alpha + 1 }{ \alpha^2 + 4 \alpha + 1 + \alpha^2 x }$
	が成立する。
	$\min_{x} \max\{	c_{1}(x), c_{2}(x) \}$は、
	$c_{1}(x) = c_{2}(x)$、すなわち、
	$\frac{ \alpha^2 + 5 \alpha + 2 }{ \alpha^2 + 4 \alpha + 2 - x }
		= \frac{ 2 \alpha^2 + 5 \alpha + 1 }{ \alpha^2 + 4 \alpha + 1 + \alpha^2 x }$
	が成立するときに最小化される。
	このとき、
	$x = \frac{ \alpha^4 + 4 \alpha^3 + 2 \alpha^2 + \alpha }{ \alpha^4 + 5 \alpha^3 + 4 \alpha^2 +5  \alpha + 1} $
	が成立し、
	$\min_{x} \max\{	c_{1}(x), c_{2}(x) \}
		\geq \frac{ \alpha^4 + 5 \alpha^3 + 4 \alpha^2 + 5 \alpha + 1 }
				{ \alpha^4 + 4 \alpha^3 + 3 \alpha^2 + 4 \alpha + 1 }
			= 1 + \frac{ \alpha^3 + \alpha^2 +  \alpha }{ \alpha^4 + 4 \alpha^3 + 3 \alpha^2 + 4 \alpha + 1 }
				$
	が成立する。
	\fi
	\ifnum \count11 > 0
	\com{（■英語）}
	Fix an online algorithm $ON$. 
	Our adversary constructs the following input $\sigma$. 
	Let $\sigma(t)$ denote the prefix of the input $\sigma$ up to time $t$. 
	$OPT$ can accept and transmit all arriving packets in this input. 
	$2B$ arrival events occur during time $(0, 1)$, and 
	$B$ packets arrive at $Q^{(1)}$ and $Q^{(m)}$, respectively. 
	In addition, 
	$B$ scheduling events occur during time $(1, 2)$. 
	For $\sigma(2)$, 
	suppose that $ON$ transmits $B(1-x)$ packets and $Bx$ ones from $Q^{(1)}$ and $Q^{(m)}$, respectively. 
	(See Figure~\ref{fig:L1}.)
	After time 2, 
	our adversary selects one queue from $Q^{(1)}$ and $Q^{(m)}$, and makes some packets arrive at the queue. 
	{\bf\boldmath Case 1: If $\alpha x \geq 1 - x$:}
	$B$ arrival events occur during time $(2, 3)$, and 
	$B$ packets arrive at $Q^{(1)}$. 
	Then, 
	the total value of packets which $ON$ accepts by time $3$ is $(\alpha + 1 + 1 - x )B$. 
	Moreover, 
	$B$ scheduling events occur during time $(3, 4)$. 
	For $\sigma(4)$, 
	suppose that $ON$ transmits $B(1-y)$ packets and $By$ packets from $Q^{(1)}$ and $Q^{(m)}$, respectively. 
	(See Figure~\ref{fig:L2}.)
	After time 4, in the same way as time 2, 
	our adversary selects one queue from $Q^{(1)}$ and $Q^{(m)}$, and makes some packets arrive at the queue. 
	{\bf\boldmath Case 1.1: If $\alpha (x+y) \geq 1 - y$:}
	$B$ arrival events occur during time $(4, 5)$, and 
	$B$ packets arrive at $Q^{(1)}$. 
	Furthermore, 
	$2B$ scheduling events occur during time $(5, 6)$. 
	For this input,
	$V_{ON}(\sigma) = ( \alpha + 1 + 1 - x + 1 - y )B$, and 
	$V_{OPT}(\sigma) = ( \alpha + 1 + 1 + 1 )B$. 
	{\bf\boldmath Case 1.2: If $\alpha (x+y) < 1 - y$:}
	$B$ arrival events occur during time $(4, 5)$, and 
	$B$ packets arrive at $Q^{(m)}$. 
	Moreover, 
	$2B$ scheduling events occur during time $(5, 6)$. 
	For this input,
	$V_{ON}(\sigma) = ( \alpha + 1 + 1 - x + \alpha(x + y) )B$, and 
	$V_{OPT}(\sigma) = ( \alpha + 1 + 1 + \alpha )B$. 
	{\bf\boldmath Case 2: If $\alpha x < 1 - x$:}
	$B$ arrival events occur during time $(2, 3)$, and 
	$B$ packets arrive at $Q^{(m)}$. 
	Then, 
	the total value of packets which $ON$ accepts by time $3$ is $( \alpha + 1 + \alpha x )B$.  
	Moreover, 
	$B$ scheduling events occur during time $(3, 4)$. 
	For $\sigma(4)$, 
	$ON$ transmits $B(1-z)$ packets and $Bz$ ones from $Q^{(1)}$ and $Q^{(m)}$, respectively during time $(3, 4)$. 
	(See Figure~\ref{fig:L3}.)
	After time 4, in the same way as the above case, 
	our adversary  selects one queue from $Q^{(1)}$ and $Q^{(m)}$, and causes some packets to arrive at the queue. 
	{\bf\boldmath Case 2.1: If $\alpha z \geq 1 - x + 1 - z$:}
	$B$ arrival events occur during time $(4, 5)$, and 
	$B$ packets arrive at $Q^{(1)}$. 
	Also,
	$2B$ scheduling events occur during time $(5, 6)$. 
	For this input, 
	$V_{ON}(\sigma) = ( \alpha + 1 + \alpha x + 1 - x + 1 - z )B$, and 
	$V_{OPT}(\sigma) = ( \alpha + 1 + \alpha + 1 )B$. 
	{\bf\boldmath Case 2.2: If $\alpha z < 1 - x + 1 - z$:}
	$B$ arrival events occur during time $(4, 5)$, and 
	$B$ packets arrive at $Q^{(m)}$. 
	In addition, 
	$2B$ scheduling events occur during time $(5, 6)$. 
	For this input, 
	$V_{ON}(\sigma) = ( \alpha + 1 + \alpha x + \alpha z )B$, and 
	$V_{OPT}(\sigma) = ( \alpha + 1 + \alpha + \alpha )B$. 
	By the above argument, 
	we define $c_{1}(x) = \min_{y} \max\{
				\frac{ \alpha + 1 + 1 + 1 }{ \alpha + 1 + 1 - x + 1 - y }, 
				\frac{\alpha + 1 + 1 + \alpha }{\alpha + 1 + 1 - x + \alpha(x + y)}
			\}
	$ and 
	$c_{2}(x) = \min_{z} \max\{
				\frac{ \alpha + 1 + \alpha + 1 }{ \alpha + 1 + \alpha x + 1 - x + 1 - z }, 
				\frac{ \alpha + 1 + \alpha + \alpha }{ \alpha + 1 + \alpha x + \alpha z }
			\}
	$. 
	Then, 
	$\frac{V_{OPT}(\sigma)}{V_{ON}(\sigma)} 
		\geq
			\min_{x} \max\{
				c_{1}(x), c_{2}(x)
			\}
	$. 
	$c_{1}(x)$ is minimized when 
	$\frac{ \alpha + 1 + 1 + 1 }{ \alpha + 1 + 1 - x + 1 - y } 
	= 
	\frac{\alpha + 1 + 1 + \alpha }{\alpha + 1 + 1 - x + \alpha(x + y)}$. 
	Then, 
	$y = \frac{ \alpha (\alpha + 3) + (-\alpha^2 -4 \alpha + 1)x }
				{ \alpha^2 + 5 \alpha + 2 }$.  
	Thus, 
	$c_{1}(x) 
		\geq \frac{ \alpha^2 + 5 \alpha + 2 }{ \alpha^2 + 4 \alpha + 2 - x }$. 
	$c_{2}(x)$ is minimized when 
	$\frac{ \alpha + 1 + \alpha + 1 }{ \alpha + 1 + \alpha x + 1 - x + 1 - z } 
	= 
	\frac{ \alpha + 1 + \alpha + \alpha }{ \alpha + 1 + \alpha x + \alpha z }$. 
	Then, 
	$z = \frac{ \alpha^2 + 6 \alpha + 1 + (\alpha^2 -4 \alpha - 1)x }
				{ 2 \alpha^2 + 5 \alpha + 1 }$. 
	Hence, 
	$c_{2}(x) 
		\geq \frac{ 2 \alpha^2 + 5 \alpha + 1 }{ \alpha^2 + 4 \alpha + 1 + \alpha^2 x }$. 
	Finally, 
	$\min_{x} \max\{	c_{1}(x), c_{2}(x) \}$ is minimized when 
	$c_{1}(x) = c_{2}(x)$, that is 
	$\frac{ \alpha^2 + 5 \alpha + 2 }{ \alpha^2 + 4 \alpha + 2 - x }
		= \frac{ 2 \alpha^2 + 5 \alpha + 1 }{ \alpha^2 + 4 \alpha + 1 + \alpha^2 x }$. 
	Therefore, 
	since $x = \frac{ \alpha^4 + 4 \alpha^3 + 2 \alpha^2 + \alpha }{ \alpha^4 + 5 \alpha^3 + 4 \alpha^2 +5  \alpha + 1}$, 
	$\min_{x} \max\{	c_{1}(x), c_{2}(x) \}
		\geq \frac{ \alpha^4 + 5 \alpha^3 + 4 \alpha^2 + 5 \alpha + 1 }
				{ \alpha^4 + 4 \alpha^3 + 3 \alpha^2 + 4 \alpha + 1 }
			= 1 + \frac{ \alpha^3 + \alpha^2 +  \alpha }{ \alpha^4 + 4 \alpha^3 + 3 \alpha^2 + 4 \alpha + 1 }
				$. 
	\fi
	%
%

%
\ifnum \count12 > 0
\begin{figure*}[h]
 \begin{center}
  \includegraphics[width=120mm]{./fig_L1.eps}
 \end{center}
 \caption{States of queues at time 2}
\label{fig:L1}
 \end{figure*}

\begin{figure*}[h]
 \begin{center}
  \includegraphics[width=120mm]{./fig_L2.eps}
 \end{center}
 \caption{States of queues at time 4 via Case 1}
 \label{fig:L2}
\end{figure*}
\begin{figure*}[h]
 \begin{center}
  \includegraphics[width=120mm]{./fig_L3.eps}
 \end{center}
 \caption{States of queues at time 4 via Case 2}
 \label{fig:L3}
\end{figure*}
\fi
%

\fi

\end{document}